\documentclass[final,onefignum,onetabnum]{siamart190516}

\usepackage{ifthen}
\newboolean{arxiv}
\setboolean{arxiv}{false}

\ifthenelse{\boolean{arxiv}}{
 \usepackage[margin=2cm]{geometry}
 \usepackage{amsmath, amsthm, amssymb, url, color, enumitem}
}{
 \usepackage{amsmath}
 \usepackage{amsfonts}
 \usepackage{enumitem}
}

\usepackage{tikz}

\usepackage{subfig}

\usepackage[draft]{changes}
\definechangesauthor[name=TW, color=blue]{TW}
\definechangesauthor[name=MG, color=red]{MG}

\newcommand{\BF}[1]{{\bf\boldmath{#1}\unboldmath}}

\newcommand{\boundary}{\partial}
\newcommand{\subcube}{\Delta}

\newcommand{\preclassified}[1]{\tilde{#1}}
\newcommand{\classified}[1]{{#1}^*}

\newcommand{\cv}{\mathrm{cv}}
\newcommand{\cs}{\mathrm{cs}}
\newcommand{\dv}{\mathrm{dv}}
\newcommand{\ds}{\mathrm{ds}}
\newcommand{\dr}{\mathrm{dr}}
\newcommand{\ax}{\mathrm{a}}
\newcommand{\bx}{\mathrm{b}}
\newcommand{\as}{\mathrm{as}}

\newcommand{\class}{\mathrm{class}}
\newcommand{\grid}{\mathbb{G}}
\newcommand{\tree}{\mathbb{T}}
\newcommand{\hypercube}{\mathbb{H}}
\newcommand{\boxes}{\mathbb{B}}
\newcommand{\refines}{\preceq}

\newcommand{\shapeclassregular}{\mathrm{H}}
\newcommand{\shapeclasscells}{\Theta}

\newcommand{\classregular}{\mathrm{K}}
\newcommand{\classcells}{\Lambda}

\newcommand{\Cantor}{\mathrm{Cantor}}

\newcommand{\Z}{\mathbb{Z}}
\newcommand{\shape}{\mathrm{shape}}
\newcommand{\depth}{\mathrm{depth}}
\newcommand{\content}[1]{ \langle #1 \rangle }
\newcommand{\parent}[1]{\breve{#1}}

\newcommand{\diameter}{\delta}

\newcommand{\Shapes}{\mathrm{Shapes}}
\newcommand{\Partitions}{\mathrm{Partitions}}

\newcommand{\Define}[1]{\textit{#1}}

\numberwithin{equation}{section}

\ifthenelse{\boolean{arxiv}}{
 \theoremstyle{plain}
 \newtheorem{theorem}	[equation]	{Theorem}
 \newtheorem{corollary}	[equation]	{Corollary}
 
 \newtheorem{lemma}		[equation]	{Lemma}
 
}{
}

\newtheorem{example}	[equation]	{Example}

\newtheorem{claim}		[equation]	{Claim}

\newtheorem{rationale}    [equation]	{Rationale}
\newtheorem{application}  [equation]	{Application digression}

\newcommand{\TheTitle}{The maximum discrete surface-to-volume ratio of space-filling curve partitions}

\ifthenelse{\boolean{arxiv}}{
}
{
 \title{
   \TheTitle
   \thanks{Submitted to the editors DATE.
   \funding{Sponsed by EPSRC under the Excalibur Phase I call through grant number EP/V00154X/1 (ExaClaw) and EP/V001523/1 (SPH).}}
 }
  \author{
    Maximilien Gadouleau
    \thanks{
     Department of Computer Science, 
     Durham University
     (\email{m.r.gadouleau@durham.ac.uk}).
    }
    \and 
    Tobias Weinzierl
    \thanks{
     Department of Computer Science, 
     Durham University
     (\email{tobias.weinzierl@durham.ac.uk}).
    }
  }

  \headers{Maximum discrete surface-to-volume ratio of SFC partitions}{Maximilien Gadouleau and Tobias Weinzierl}

\ifpdf
\hypersetup{ pdftitle={Maximum discrete surface-to-volume ratio of SFC partitions} }
\fi
}

\begin{document}

\ifthenelse{\boolean{arxiv}}{
\title{\TheTitle}
\author{Maximilien Gadouleau and Tobias Weinzierl}
}
{}

\maketitle

\ifthenelse{\boolean{arxiv}}{
 \begin{abstract}
  Space-filling curves (SFCs) are used in high performance computing to
distribute a computational domain
or its mesh, respectively, amongst different compute units, i.e.~cores or
nodes or accelerators. 
The part of the domain allocated to each compute unit is called a partition.
Besides the balancing of the work, the communication cost to exchange
data between units determines the quality of a chosen partition.
This cost can be approximated by the surface-to-volume ratio of partitions: 
the volume represents the amount of local work, while the
surface represents the amount of data to be transmitted. 
Empirical evidence suggests that space-filling curves yield advantageous
 surface-to-volume ratios.
Formal proofs are available only for regular grids.
 We investigate the surface-to-volume ratio of space-filling curve partitions for
adaptive grids and derive the maximum surface-to-volume ratio as a
function of the number of cells in the partition. In order to prove our main theorem, we construct a new
framework for the study of adaptive grids, notably introducing the concepts of a
shape and of classified partitions.
The new methodological framework yields insight about the
 SFC-induced partition character even if the grids refine rather aggressively
 in localised areas:
 it quantifies the obtained surface-to-volume ratio.
 This framework thus has the potential to guide the design of
 better load balancing algorithms on the long term.

 \end{abstract}
}{
 \begin{abstract}
  
 \end{abstract}
 \begin{keywords}
   Space-filling curves, domain decomposition, surface-to-volume ratio
 \end{keywords}

 \begin{AMS}
   68R01,  
   05B25,  
   68U05   
 \end{AMS}
}

\section{Introduction}

%
%
Space-filling curves \cite{Bader:2013:SFCs,Sagan:94:SFCs} (SFCs) are an elegant
paradigm
to linearise (high-dimensional) data in scientific computing,
database storage, imaging, and so forth.
 A linearisation makes it straightforward to cut data into chunks of equal size, i.e.~to realise a data (domain) decomposition.
Indeed, many simulation codes and load balancing libraries use SFCs as partitioning algorithm or as partitioning heuristic,
as
SFCs are conceptually simple; for an overview see \cite[Chapter
10]{Bader:2013:SFCs}. The most popular curves are Lebesgue
and Hilbert which work on topologically
hypercubic domains.
 The former has been ``rediscovered'' as Morton order
 \cite{Morton:66:SFC,Schrack:2015:GenerateSFCs} and is sometimes also called z-order.

\begin{figure}[htb]
 \centering
 \includegraphics[width=0.3\textwidth]{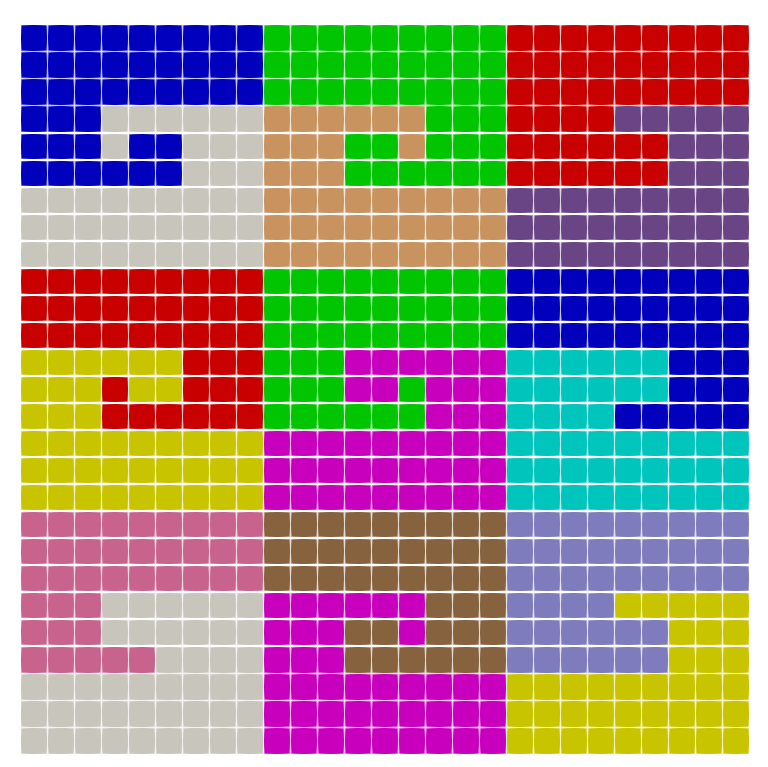}
 \hspace{0.2cm}
\begin{tikzpicture}[scale=0.48]
	\draw[step = 4cm] (0,0) grid (8,8);

	\draw[step = 2cm] (0,0) grid (4,4);

	\draw[step = 1cm] (0,0) grid (2,2);

	\draw[step = 0.5cm] (0,0) grid (1,1);
\end{tikzpicture}
  \caption{
    Left: Typical SFC-based domain decomposition. Here, the
    Peano SFC is used and we discretise a regular grid over the unit square.
    Right: The corner adaptive grid ($M=4$, $d=k=2$), where both the discrete
    surface and volume grow linearly with the depth $M$.
    \label{fig:introduction:corner}
    \vspace{-0.6cm}
  } 
\end{figure}

%
%
These curves take a $d$-dimensional hypercube 
which overlaps the data of interest---this can be a point cloud or a mesh simulating fluid flow, e.g.---and subdivide it into $k^d$
cells of equal size, with $k=2$ for Hilbert and Lebesgue and $k=3$ for the Peano SFC.
Even these two choices allow us to introduce a
multitude of different curves
\cite{Cardona:2016:SelfSimilarity,Haverkort:2010:LocalityFor2DSFCs,Schrack:2015:GenerateSFCs}.
They then continue subdividing the cells recursively.
The decision whether to stop the recursion or not is made per cell per
 refinement step.
This process yields a cascade of refinement levels.
Each refinement step orders the cells (little hypercubes) along a suitably
rotated or mirrored motif. We obtain a one-dimensional order per resolution level.
Historically, the continuous (space-filling) curve resulting from the limit of this construction pattern has been 
of interest in topology.
In the present context of scientific computing, we are interested in orderings over hypercubes of finite size---usually 
called mesh or grid---where not all cubes have the same size.
 We focus on discrete SFCs over adaptive mesh refinement (AMR).

It is straightforward to cut the (one-dimensional) sequence 
of cells along the SFC into chunks of roughly the same cell count.
This yields a \Define{partition} of the $d$-dimensional domain
(Figure~\ref{fig:introduction:corner}).
The term partition does not necessarily imply that all subdomains host the same
cell count: cells might induce different computational load or the load
balancing of choice might have introduced geometrically imbalanced subdomains.

%
%
%
The motivation behind domain decomposition in parallel scientific computing is to
balance the workload between compute units, plus, at the same time, to keep
the number of data exchanges between partitions small.
For algorithms that work with meshes, the latter criterion typically translates
into a small surface-to-volume ratio.
The larger the volume of a partition, i.e.~the higher the cell count, the more
work is to be done locally on a parallel computer.
The larger the surface (face count) however, the more data has to be exchanged
with other partitions.
Another term for the latter characterisation is edge-cuts: We assume that cells
that are neighbours have to communicate witch each other.
This imposes a communication graph between the cells of a mesh.
When we cut the mesh into partitions, we cut through the edges of this
communication graph.
 The arising problem to cut a mesh into equally sized partitions with minimal edge cuts, i.e.~minimal surface,
 is NP-hard \cite{Zumbusch:2001:QualityOfSFCs}.
SFCs are popular as they yield good surface-to-volume ratios ``for free''. 
They provide a good partitioning heuristic (cmp.~\cite{Bader:2013:SFCs,Gotsman:1996:MetricsOnSFCs,Haverkort:2010:LocalityFor2DSFCs,Harlacher:2012:DynamicSFCBalancing,Hungershoefer:02:SFCQuality,Samfass:2020:LoadBalancing,Sasidharan:2015:SFCsForPartitioning,Touheed:2000:Comparison,Zumbusch:2001:QualityOfSFCs}):
As the ordering of the cells is given, there is no freedom how to cut the
linearisation.
A user can only decide where to cut.
We face a classic chains-on-chains partitioning (CCP) challenge \cite{Harlacher:2012:DynamicSFCBalancing,Meister:2016:Samoa,Pinar:2008:ChainsOnChains}.
The surface of the partition deduces automatically.
 Further to the simplicity and determinism of SFC cuts, SFCs are popular as cutting SFCs requires limited global 
 knowledge and, hence, information exchange
 \cite{Clevenger:2020:MultigridFEM,Harlacher:2012:DynamicSFCBalancing}. 

%
%
Yet, this ``very good'' is empirical knowledge which is supported by proofs for
regular meshes only, i.e.~for meshes where all (hyper-)cubes have exactly the same size
\cite{Bungartz:2006:ParallelAdaptivePDESolver,Hungershoefer:02:SFCQuality,Zumbusch:2001:QualityOfSFCs}. 
 ``Very good'' has a continuous equivalent
by studying the $k$-boxes and relating them to their perimeter
 \cite{Haverkort:2010:LocalityFor2DSFCs}.
Let the \Define{discrete volume} $\dv(P)$ of a partition $P$ be the number of
cells within this partition that are not subdivided further. 
The \Define{discrete surface} $\ds(P)$ of the partition is the number of boundary
faces.
In a regular grid, where all cells have the same size, the discrete surface
$\ds(P)$ and the discrete volume $\dv(P)$ are proportional to their respective
continuous counterparts $\cv(P)$ and $\cs(P)$.  There exists a constant $C$ which only depends on the SFC such that

\begin{equation}
  \ds(P) \le C \dv(P)^{1-1/d}
  \label{equation:introduction:quasi-optimal}
\end{equation}
for any partition $P$. 
The inequality assumes reasonably detailed meshes, i.e.~reasonably big $\dv(P)$
and, hence, $\ds(P)$.
The
resulting relation is also called \Define{quasi-optimal}, as it equals,
besides a constant, the surface-to-volume ratio of a hypersphere
\cite{Bungartz:2006:ParallelAdaptivePDESolver,Gotsman:1996:MetricsOnSFCs,Haverkort:2010:LocalityFor2DSFCs,Zumbusch:2001:QualityOfSFCs}.
No geometric object can have a more advantageous shape with $ \cs(P) \ll C \cv(P)^{1-1/d}$.

The direct relation between discrete and continuous measures ceases
to exist for adaptive grids:
 A trivial worst-case example refines adaptively towards a corner of a
 partition (Figure \ref{fig:introduction:corner}), and consequently yields a
 linear relation between discrete surface and volume.
 The upper bound (\ref{equation:introduction:quasi-optimal}) is too strong.
 The other way round, we start from a given partition and refine exclusively
 cells that are not adjacent to the partition boundary.
 In this case, the quasi-optimal upper bound is very pessimistic.


\begin{figure}[htb]
 \begin{center}
  \includegraphics[width=0.4\textwidth]{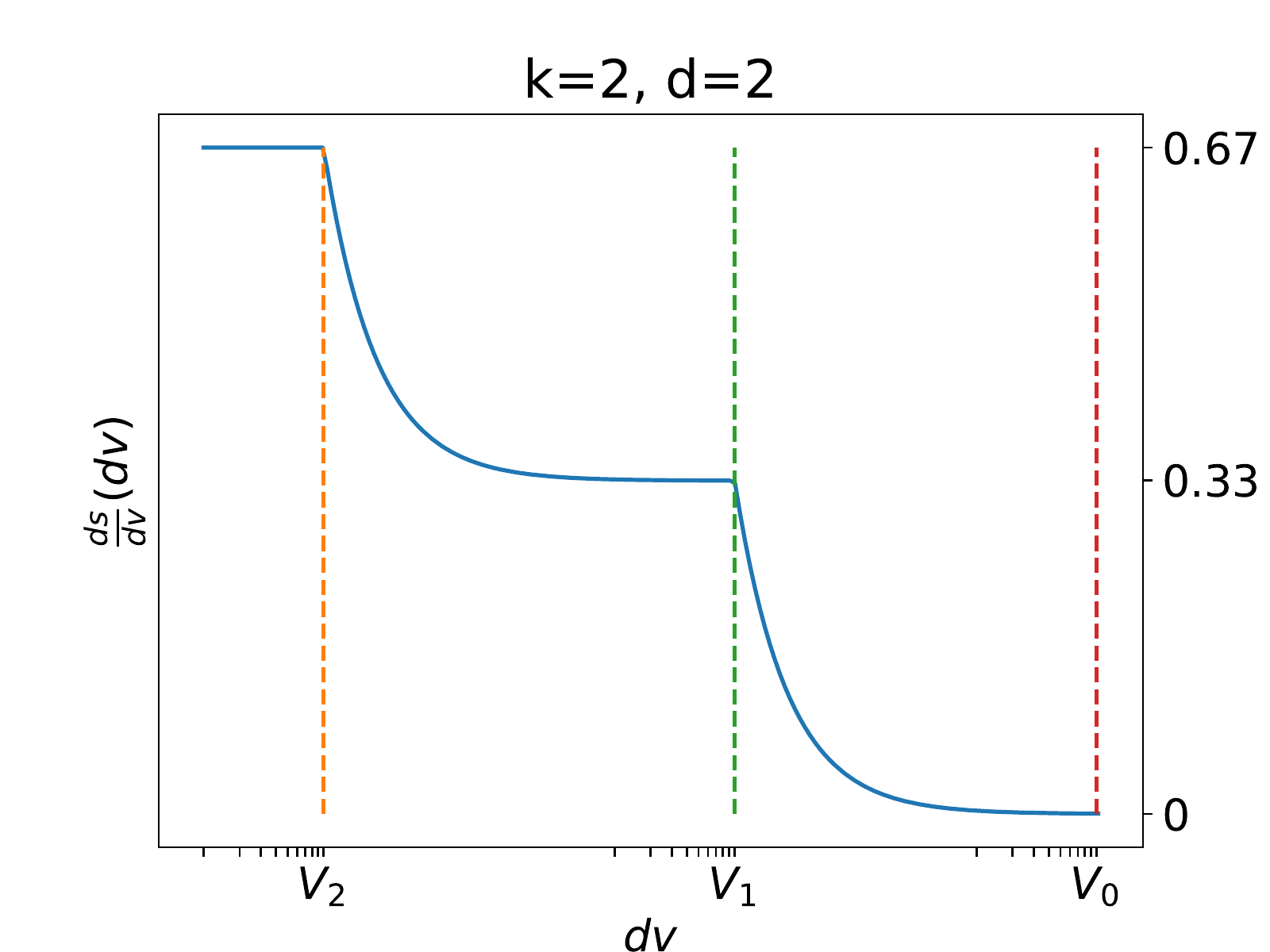}
  \includegraphics[width=0.4\textwidth]{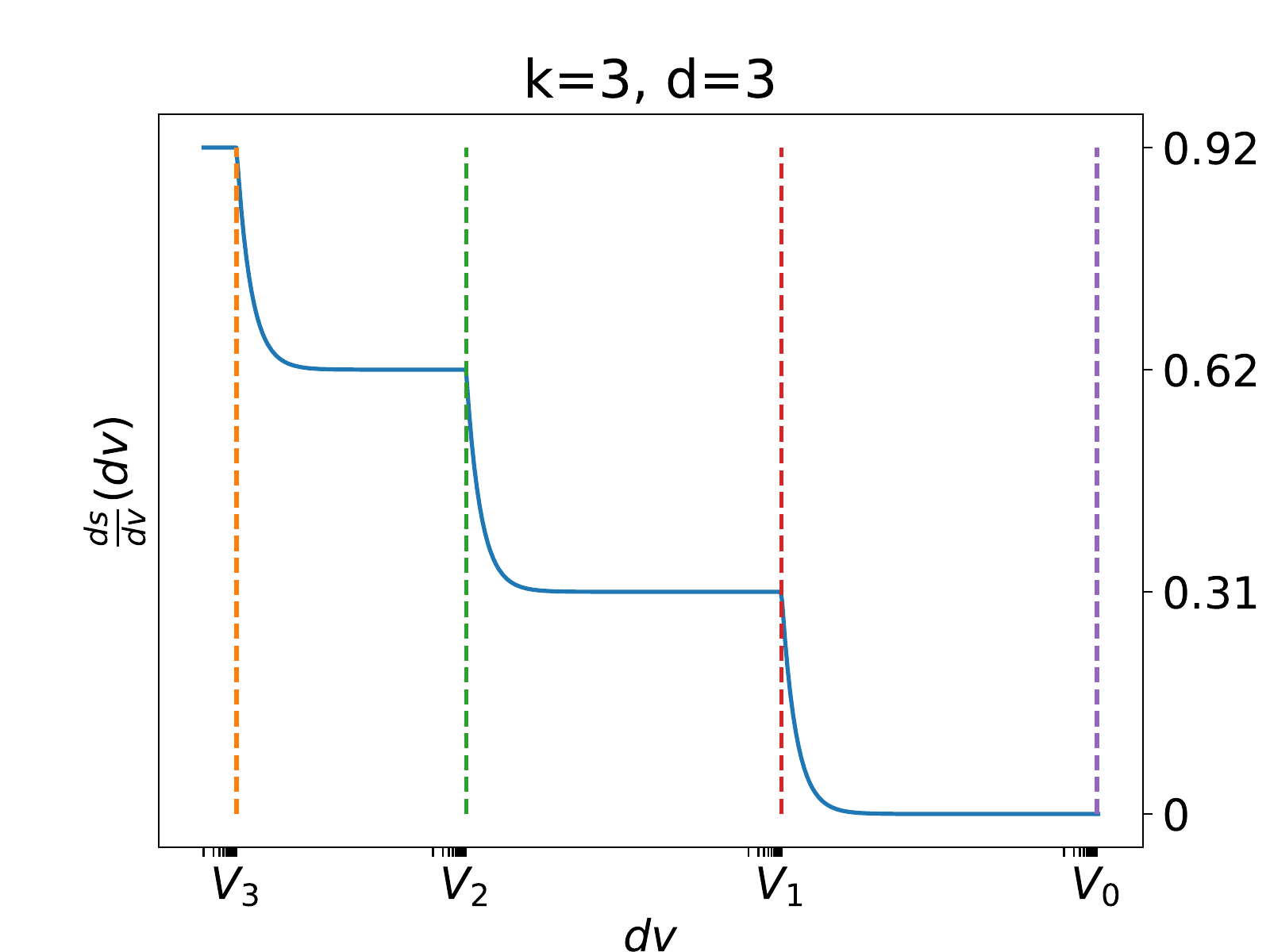}
 \end{center}
 \caption{
  Qualitative sketch of Theorem 
  \ref{th:surface_to_volume_asymptotic}: Worst-case surface-to-volume ratios
  start from a plateau from where they descend in smooth steps into the
  quasi-optimal regime at $V_0$.
  Left: $d=2$ and $k=2$ (Hilbert curve).
  Right: $d=3$ and $k=3$ (Peano curve). 
  \label{fig:introduction:theorem}
   }
\end{figure}

Our paper's main contribution is Theorem \ref{th:surface_to_volume_asymptotic},
where we determine the maximum asymptotic surface-to-volume ratio for a given space-filling curve over all partitions of a given volume (Figure \ref{fig:introduction:theorem}):
 If we always refine towards the protruded corners of a partition, the surface
 scales linearly with the partition's cells, and the surface-to-volume ratio is
 bounded.
 Once these corners are saturated, i.e.~cannot accommodate further
 refinement, 
 we refine towards the edges,
 then the faces, and so forth.
 We obtain a smoothed step pattern, where the saturation points are labelled as
 $V_{d}$, $V_{d-1}$, \ldots. 
 Eventually, all of the partitions' boundaries are adaptively meshed yet the partitions themselves remain coarse
 inside.
 We ``fill'' their interior until we end up with a regular mesh and, hence, a
 quasi-optimal surface-to-volume ratio
 (Figure \ref{fig:introduction:theorem}).
 Even though our results are asymptotic, the rate of convergence is exponential with the depth of the mesh.
 To the best of our knowledge, no literature explicitly tackles adaptive grids
 as they are of particular relevance in scientific computing with its adaptive
 mesh refinement (AMR) or fast multipole algorithms, e.g.

To the best of our knowledge, the only attempt at evaluating the surface-to-volume
ratio of space-filling curve partitions for adaptive grids is due to
Zumbusch \cite[Chapter 4]{Zumbusch:2003:Habilitation},
where it is discussed that adaptive grids that refine aggressively towards a singularity yield high surface-to-volume ratio.
However, this work is limited to certain kinds of adaptive grids, and its main results (Lemmas 4.17 and 4.19) are limited to analogues of the upper bound in \eqref{equation:introduction:quasi-optimal} for those grids. 
In contrast, our work applies to all grids.
It yields quantitative expressions for the maximum surface-to-volume ratios and provides a theoretical mindset how to study and analyse SFC partitions.

The paper is organised as follows.
As the paper is technical in places, we kick off with an informal 
sketch of our overall proof in Section \ref{sec:sketch}. 
In Section \ref{sec:grids}, we review some necessary properties
of our grids and introduce our terminology. 
We also obtain preliminary results on space-filling
curves and we introduce the concept of a shape. 
From hereon, we define the discrete surface and volume and introduce the concept
of classified partitions (Section \ref{sec:classified_partitions}).
 The classic statement on quasi-optimal partitions for regular grids of large
 size is a direct consequence of the statements within this section
 (cmp.~Appendix \ref{sec:continuous-ratio}).
 Its main purpose however is to phrase which types of meshes have to be
 analysed, i.e.~which relations between cells and faces we find along the
 partitions' boundaries. 
 In Section \ref{sec:shape-class-regular-partitions}, we define these
 shape-class-regular partitions and derive their surface-to-volume
 ratios.
 We maximise this ratio in Section 
 \ref{sec:maximum-surface-to-volume-ratio}.
 In Section \ref{sec:conclusion}'s conclusion, we revise the paper's key
 insights, and use these application remarks to sketch how our insight can guide
 future usage of SFCs in codes.
We insert remarks on the work rationale and ideas
into the paper.
Further to that, we add digressions statements to the text that create links to  applications of space-filling curves in scientific computing.

\section{Sketch of proof strategy} 
\label{sec:sketch}

\begin{figure}
 \begin{center}
  \includegraphics[width=0.22\textwidth]{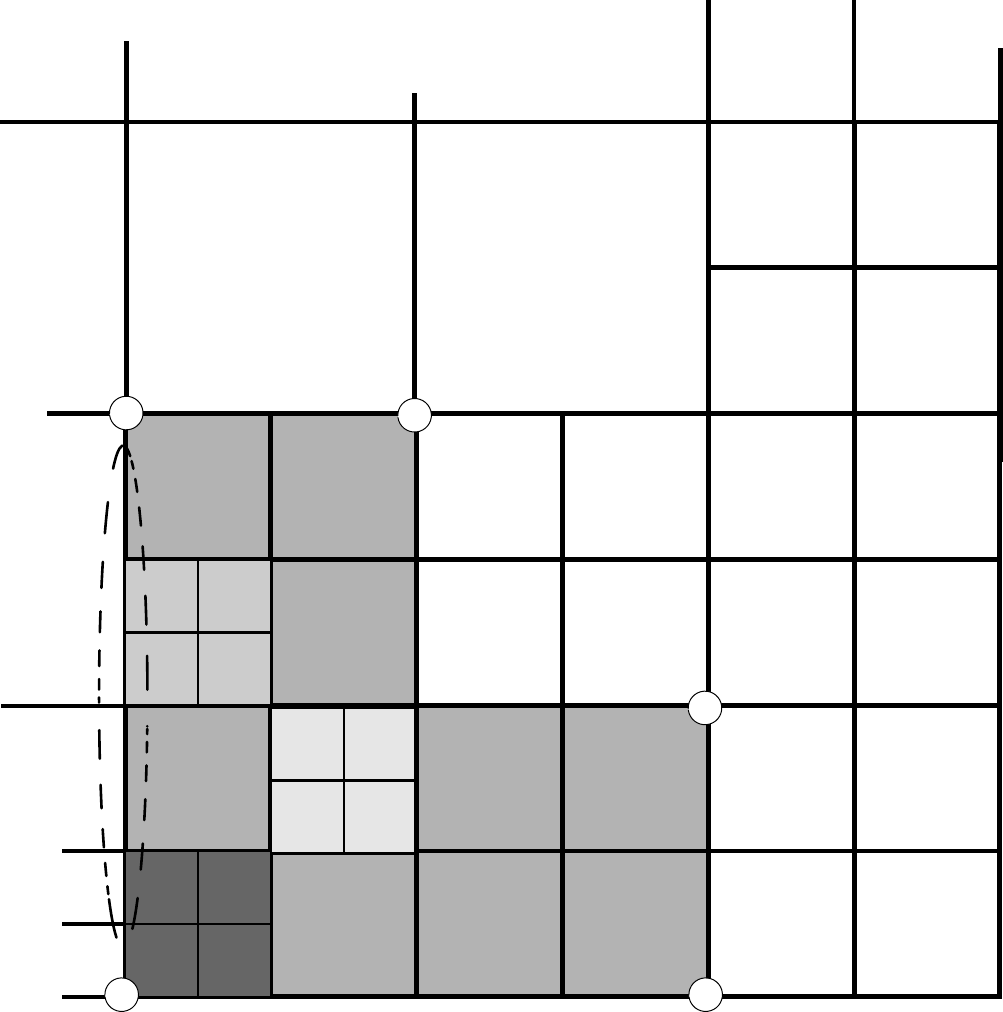}
 \end{center}
 \vspace{-0.1cm}
 \caption{
  An L-shaped partition with an adaptive quadtree mesh.
  Different refinements affect the ratio of boundary faces to inner cells
  differently.
  From dark to lighter:
  Refinement towards the five ``outer'' vertices yields the highest number of
  boundary faces compared to inner cells.
  Once a maximum mesh depth along the vertices is reached, refining along the edges of the
  partition (dotted circle area, e.g.) yields the worst-case ratio.
  Eventually, we refine the inner cells.
  \label{figure:terminology:grid-definition}
  \label{figure:sketch}
 }
\end{figure}

We work with grid partitions as they arise from spacetrees, i.e.~the  generalisation of the octree or quadtree concept
\cite{Weinzierl:2019:Peano}.
Our discussion first formalises all required language. 
Particular care is appropriate to clarify how we deal with partition boundaries
along refinement transitions, i.e.~regions where rather coarse mesh regions meet
fine regions.

With all terminology in place, we assume that we are given a partition, know its
ratio of inner cells to faces, and now are allowed to refine within the
partition.
Different partitions yield different updates to the ratio of boundary faces to
cells (Figure \ref{figure:sketch}).
If we refine towards outer vertices of the domain, we typically add four faces
(for $d=2$) in return for four cells.
In contrast, if we refine towards dented vertices or within a domain, we add
further inner cells yet to not introduce any new partition boundary faces.

Our strategy now reads as follows:
\begin{itemize}
  \item We introduce a classification of domains and their boundaries such that
  we formalise the notion of problematic and advantageous parts of the domain.
  \item We assume that a partition is given and that we may add further cells up
  to a given refinement level. Our strategy is first to refine only around those
  partition areas where the ratio of boundary faces to cells deteriorates. This
  is a worst-case refinement.
  \item We continue to refine along the next ``most critical'' area once all
  potential worst-case locations are occupied.
\end{itemize}

\noindent
Once we have expressions for this befilling strategy, we compute the limits for
high level depths.

In order to obtain our results, we construct a new framework for the study of adaptive grids. We notably introduce the notion of a classified partition. Intuitively, a partition $P$ is classified if every cell that contains a $(d-c)$-face of $P$ contributes to exactly $c$ facets of $P$. More details are given in Section \ref{sec:classified_partitions} as to why we only consider classified partitions. Our second contribution is about classifying a given partition. We show in Theorem \ref{th:classification} that for any partition $P$, one can construct a classified partition $P^*$ by performing $O(M)$ subdivisions of cells in $P$. Therefore, considering classified partitions is at no loss of generality if the volume grows superlinearly with the depth.

The impact of the curve $\Phi$ on the maximum surface-to-volume ratio is provided by its measure $\mu_c(\Phi)$, defined in Section \ref{sec:shape-class-regular-partitions}, which corresponds to some geometrical property of the curve. For $c < d$, the measure actually coincides with the maximum continuous volume of the $c$-boundary of a shape for $\Phi$. For $c=d$, the situation is more complicated, and the measure corresponds to the distribution of vertices a shape can have.



\section{Grids, shapes and space-filling curves} \label{sec:grids}

\begin{rationale}
Our grids of interest are described in a box language,
where the grid is derived from one big box by recursive, equidistant
$k$-subdivisions along the coordinate axes.
Such a recursive approach yields a cascade of adaptive Cartesian grids
of boxes.
For $d=2$ and $k=2$, the construction process equals the definition of 
quadtrees.
For $d=3$ and $k=2$, we obtain octrees.
Different to boxes exhibiting a tree structure, the resulting grid is a plain,
non-hierarchical object.
Our surface-to-volume statements refer to these grids yet exploit the underlying
box construction.

We partition the grids with space-filling curves (SFC), i.e.~we cut the
(sequence of) grid cells into chunks along the SFC ordering.
Any grid partition uniquely labels all boxes within the construction
hierarchy, too:
Either all descendants of a box are contained within a partition or not.
The coarsest boxes meeting the former criterion span the shape of a partition.
%
%
%
With properties shown for these shapes, we eventually return to the box language
to study ``what type of box refinement could we 
 embed into the shape
to construct grids with worst-case surface-to-volume ratios''.
\end{rationale}

\subsection{Boxes}

We use the standard notation for closed intervals of $\mathbb{R}$: $[a,b] = \{c \in \mathbb{R} : a \le c \le b\}$. We also introduce for any non-negative integer $n \in \mathbb{N}$ the notation $[n] = \{0,\dots,n-1\}$.

A \Define{hypercube} of dimension $d$ and size $\lambda$ is any subset of
$\mathbb{R}^d$ of the form $[a_1, b_1] \times \dots \times [a_d, b_d]$, where
$b_i - a_i = \lambda$ for all $1 \le i \le d$. 
The unit $d$-hypercube is $\hypercube := [0,1]^d$. 
We now introduce boxes for the hypercube $\hypercube$; these could easily be
extended to any hypercube. 
A \Define{box} is any hypercube $x \subseteq \hypercube$ of the form $[a_1, b_1]
\times \dots \times [a_d, b_d]$ of size $k^{-l}$ for some $k \ge 2$ and $l \ge
0$, with $a_i \in k^{-l} \Z$ for all $1 \le i \le d$. 
The integer $l$ is the \Define{depth} of the box $x$, which we shall denote as
$\depth(x)$.

We assume that the dimension $d$ of
the hypercube and the factor $k$ appearing in the size of boxes are fixed; as such, we shall omit them in our notation and terminology. We can represent the box $x$ more concisely by a word $(x_1, \dots, x_d) \in [k^l]^d$, where $x_i = a_i k^l$ (technically, we need the depth $l$ as well in order to make that encoding non-ambiguous; this will be clear from the context).

The set $\boxes$ of boxes of $\hypercube$ can be partially ordered with respect to inclusion. The Hasse diagram of that partial order is an infinite tree $\tree$, rooted at $\hypercube$, and where every vertex (box) has $k^d$ children. We denote the parent of the box $x$ as $\parent{x}$. We use the standard terminology of trees: if there is a path from the root to a box $x$ via another box $y$, then $y$ is an ancestor of $x$ and $x$ is a descendant of $y$. Note that in that case, $x \subseteq y$. We note that $\depth(x)$ is the length of the unique path from the root to $x$ in $\tree$. For any two boxes $x$ and $y$, their least common ancestor is denoted as $x \lor y$.

For any box $x$, let $C_x = \{y \in \boxes : x = \parent{y} \}$ be the set of
children boxes of $x$. The \Define{subdivision} of $x$ (or subdividing $x$)
corresponds to replacing $x$ by $C_x$, that is for any $X \subseteq \boxes$, $X$ is unchanged if $x \notin X$ and $X$ becomes $(X \setminus \{x\}) \cup C_x$ if $x \in X$.

In order to differentiate a set of boxes $X \subseteq \boxes$ with its actual
realisation in $\hypercube$, we define the \Define{content} of $X$ as
\[
	\content{X} := \bigcup_{x \in X} x \subseteq \hypercube. 
\]
Note that $\content{x} = x$ for any box $x$. For any sets of boxes $X$ and $Y$,
we say $Y$ is a \Define{refinement} of $X$ (or $Y$ \Define{refines} $X$) and we
denote $Y \refines X$ if $\content{X} = \content{Y}$ and if for any $y \in Y$, there exists $x \in X$ such that $y \subseteq x$. We note that the concept of subdivision is only for one box $v$: one replaces $v$ with all its children boxes. On the other hand, the concept of refinement is more general as a set of boxes $Y$ refines another set of boxes $X$ if one can obtain $Y$ by starting from $X$ and repeatedly subdividing boxes.

\begin{application}
Our box language yields non-conformal, structured adaptive Cartesian grids as
we know them from spacetrees as generalisation of octrees or quadtrees
\cite{Weinzierl:2019:Peano}.
This cell-based refinement is a degeneration of block-structured adaptive mesh
refinement which typically employs $k \gg 2$  \cite{Dubey:16:SAMR}.
\end{application}

\subsection{Grids}

It is clear that for two boxes $x$ and $y$, $x \cap y \in \boxes$ if and only if $x \subseteq y$ or $y \subseteq x$. We say a set of boxes $A$ is an \Define{antichain of boxes} if for any distinct $a,b \in A$, $a \not\subseteq b$; in other words, it is an antichain according to containment. The antichain of boxes $A$ is \Define{maximal} if there is no other antichain of boxes $B$ with $A \subset B$. 

A \Define{grid} is a finite maximal antichain of boxes. Grids have several alternative definitions, gathered in the following lemma. The proof is obvious and hence omitted.

\begin{lemma} \label{lem:grid}
Let $G$ be a finite set of boxes. Then the following are equivalent.
\begin{enumerate}[label = (\roman*)]
    \item $G$ is a grid.

    \item $\content{G} = \hypercube$, and $\content{H} \ne \hypercube$ for any $H \subset G$.

    \item $\content{G} = \hypercube$, and $g \not\subseteq g'$ for any distinct $g, g' \in G$.

    \item $G$ is obtained by successive subdivisions, starting from the one-cell grid $\{\hypercube\}$.
\end{enumerate}
\end{lemma}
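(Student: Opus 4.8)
The plan is to prove the four conditions equivalent by establishing the cycle (i) $\Rightarrow$ (iii) $\Rightarrow$ (ii) $\Rightarrow$ (i) together with the separate equivalence (iii) $\Leftrightarrow$ (iv). Everything rests on the dichotomy for $k$-adic boxes recorded just before the lemma, used in the form: if two boxes $x,y$ share an interior point then one contains the other, and if in addition $\depth(x)=\depth(y)$ then $x=y$. I would also use throughout that a finite union of boxes is closed and that the topological boundary $\boundary x$ of a box has measure zero, so the (nonempty) interior of any box contains a point avoiding $\bigcup_{x \in F}\boundary x$ for any prescribed finite family $F$ of boxes; all of the arguments below consist in picking such a point and invoking the dichotomy.

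For (i) $\Rightarrow$ (iii), an antichain has no nesting by definition, so only $\content{G}=\hypercube$ needs proof: choosing $l \ge \max_{g\in G}\depth(g)$, every $g\in G$ is a union of depth-$l$ boxes, so if $\content{G}\neq\hypercube$ some depth-$l$ box $r$ is not contained in $\content{G}$; then $r$ neither contains nor is contained in any $g\in G$ (the depth bound would force $g=r\in\content G$), so $G\cup\{r\}$ is a strictly larger antichain, contradicting maximality. For (iii) $\Rightarrow$ (ii), given $H\subsetneq G$ pick $g\in G\setminus H$ and an interior point $c$ of $g$ off $\bigcup_{h\in H}\boundary h$; if $c$ lay in $\content{H}$ it would lie in the interior of some $h\in H$, forcing $g$ and $h$ (distinct members of the antichain $G$) to be nested, a contradiction, so $\content{H}\neq\hypercube$. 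For (ii) $\Rightarrow$ (i): if $g\subsetneq g'$ for distinct $g,g'\in G$ then $G\setminus\{g\}$ is a proper subset with full content, contradicting (ii), so $G$ is an antichain; and if some antichain properly contained $G$ it would contain $G\cup\{x\}$ for a box $x\notin G$ with $x\subseteq\hypercube=\content G$, yet an interior point of $x$ off the boundaries of the boxes of $G$ lands in the interior of some $g\in G$, forcing $x,g$ nested and hence $G\cup\{x\}$ not an antichain — so $G$ is maximal, i.e.\ a grid.

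The equivalence (iii) $\Leftrightarrow$ (iv) carries the real content. The direction (iv) $\Rightarrow$ (iii) is a routine induction on the number of subdivisions: replacing $x\in G$ by $C_x$ preserves the content, and creates no nesting since a box of $G\setminus\{x\}$ nested with a child of $x$ would be nested with $x$ itself. The harder direction (iii) $\Rightarrow$ (iv) — reconstructing $G$ from $\{\hypercube\}$ — is the step I expect to be fiddly; I would prove it by induction on $|G|$. If $|G|=1$ then $\content G=\hypercube$ forces $G=\{\hypercube\}$. Otherwise $\hypercube\notin G$, so every box has a parent; pick $z\in G$ of maximal depth and set $p=\parent{z}$. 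One shows that all $k^d$ children of $p$ lie in $G$: for any child $z'$, an interior point of $z'$ off the boundaries of the boxes of $G$ lies in the interior of some $g\in G$, so $z'$ and $g$ are nested, and a short case check using maximality of $\depth(z)$ and the antichain property (which excludes $g=p$ and $p\subseteq g$) forces $g=z'$. Then $G':=(G\setminus C_p)\cup\{p\}$ again satisfies (iii) — unchanged content, and no new nesting by the same interior-point argument — with $|G'|=|G|-k^d+1<|G|$, so by induction $G'$ is reachable from $\{\hypercube\}$ by subdivisions, and subdividing $p$ then yields $G$. The only thing to watch is the legitimacy of each point choice, i.e.\ that we always select a point in the nonempty interior of a box away from a fixed finite union of measure-zero boundaries; granting that, every case collapses to the box dichotomy.
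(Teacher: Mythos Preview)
Your argument is correct. The paper itself omits the proof entirely, declaring it ``obvious,'' so there is no approach to compare against; you have supplied a complete and careful proof where the paper gives none. The cycle (i) $\Rightarrow$ (iii) $\Rightarrow$ (ii) $\Rightarrow$ (i) and the separate equivalence (iii) $\Leftrightarrow$ (iv) are all sound, and your repeated device of choosing an interior point off finitely many measure-zero boundaries to invoke the box dichotomy is exactly the right tool. The only step worth a second glance is the case analysis in (iii) $\Rightarrow$ (iv) showing $C_p \subseteq G$: when $z' \subsetneq g$ you correctly deduce $p \subseteq g$ (as $g$ is then an ancestor of $z'$ of depth at most $\depth(p)$), whence $z \subset g$ contradicts the antichain property --- your parenthetical is terse but the logic is there.
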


\begin{application}
 Our mesh definition does not impose any (2:1) balancing 
 \cite{Clevenger:2020:MultigridFEM,Isaac:12:Balancing,Suh:2020:Balancing,Sundar:08:BalancedOctrees}.
 There is no constraint that two neighbouring cells'
 level may not differ by more than one.
 Our estimates thus are worst-case, as we
 allow the interior of a domain to become coarser
 rapidly.
 The inner cell count can be smaller than in a balanced tree.
 At the same time, we neglect the fact that fine-to-coarse transitions in the mesh may coincide with partition
 boundaries, i.e.~$k$ neighbouring cells within a partition might face one
 non-partition cell.
 This argument also holds the other way round, and it generalises---without 2:1
 balancing---over multiple levels.
 In such cases, many implementations would collect $k$ messages and exchange
 only once.
 This fact is neglected by our estimates.
\end{application}

 In line with our definition of subdivison which means replacing a cell with
 $k^d$ new ones, any set of boxes that results from an iterative subdivision of
 $\hypercube$ is, by definition, a grid.
And conversely, any grid can be obtained that way. A grid covers the whole domain.
 It does not host any overlaps of boxes, i.e.~if a box is in the grid, none of
 its children in the tree graph are members of the grid.

In order to avoid confusion, we refer to the boxes in $G$ as \Define{cells}, while any box that contains a cell in $G$ is referred to as a \Define{node} of $G$. The \Define{depth} of the grid $G$ is the maximum depth of a cell in $G$. A grid is \Define{regular} if all its cells have the same depth, thus the regular grid of depth $M$ has $k^{Md}$ cells. We summarise that a grid is a non-hierarchical, flat object as opposed to an arbitrary set of boxes or tree, respectively.

\begin{application}
 Our work studies ``flat'' grids 
 consisting of non-overlapping cells.
 We assume that all compute work is done on the  cells
 with the finest resolution (maximum depth) and that cells exchange information
 with their neighbours only.
 This flat notion breaks down for
 multiscale algorithms such as multigrid of fast multipole which compute different things on different
 resolution levels.
 However, our ``flat''
 estimates continue to offer valid estimates for their communication behaviour:
 The fixed subdivision ratio $k$ implies that the fine grid estimate establishes
 a natural upper bound on the face and cell counts on the next coarser
 resolution level via a scaling of $k^{-d}$ or $k^{-d+1}$, respectively.
 The argument carries over to coarser levels recursively.
 It ties in with the notation of a local essential tree (LET) in literature
 \cite{March:15:AlgebraicTreeCode}.
 While our multiscale extrapolation holds for data exchange cardinalities and
 thus bandwidth demands, e.g., it ignores latency effects which tend to
 gain importance for multiscale algorithms.
 The estimates also do not hold for codes which construct their coarse
 resolution representations algebraically, i.e.~not using the
 prescribed $k$-subdivision pattern.
\end{application}


\noindent
For any finite set of boxes $X \subseteq \boxes$, we define $\tree(X)$ as the minimal subtree of $\tree$ such that:
\begin{enumerate}
	\item $X$ is a subset of vertices of $\tree(X)$,
	
	\item if $v \in \tree(X)$ and $v \subseteq u$, then $u \in \tree(X)$,
	
	\item every non-leaf in $\tree(X)$ has exactly $k^d$ children.
\end{enumerate}
It is easily seen that every minimal element in $X$ (w.r.t. containment) is a leaf of $\tree(X)$.  
If $G$ is a grid, then $X \subseteq G$ (i.e.~$X$ is a subset of cells of $G$) if and only if $X$ is a subset of leaves of $\tree(G)$.

Let the \Define{minimal grid} of $X$ be the grid $\grid(X)$ satisfying
\[
    \tree(\grid(X)) = \tree(X).
\]
We remark that $\grid(X)$ is obtained by, beginning with the one-cell grid, repeatedly subdividing boxes that contain an element of $X$. Any grid $G$ that contains $X$ is obtained by refining cells of $\grid(X)$ outside of $X$. Finally, it is obvious that $G = \grid( G )$ for any grid $G$.

Clearly, for two grids $G$ and $H$, $G$ refines $H$ if and only if $\tree(H)$ is
a subgraph of $\tree(G)$. In general, the \Define{common refinement} of two grids
$G$ and $H$ is denoted as $G \land H$. It is the unique grid $T$ such that if $S \refines G$ and $S \refines H$, then $S \refines T$ for any grid $S$. In fact, for any finite set of grids $G_1, \dots, G_n$, another grid $S$ satisfies $S \refines G_i$ for all $1 \le i \le n$ if and only if $S \refines \bigwedge_{i=1}^n G_i$. It is easily shown that for any finite set of boxes $X$, $\grid(X) = \bigwedge_{x \in X} \grid(x)$.

\begin{application}
  Some 2:1 balancing algorithms clarify that a naive implementation of any
  balancing leads into a rippling: additional cells are added to mitigate some
  resolution transitions, but lead in turn to new violations of the balancing.
  An iterative approach thus might ripple changes through the domain whereas
  each iteration requires some parallel computations.
  A more sophisticated algorithm \cite{Suh:2020:Balancing}
  compresses the mesh by storing solely the
  finest grid cells, exchanges this compressed code and makes
  each partition fill coarser cells back in in a balanced way. Our minimal grids resemble the compressed mesh if only
  the first and last cell within a partition feed into the compression.
\end{application}

\subsection{Discrete space-filling curves}

We are interested in specific orderings $\Gamma = (g_1, \dots, g_n)$ of the cells of a grid $G$. 
Say the ordering is \Define{space-filling} if for all $1 \le i < j < k \le n$,
$g_i \lor g_j \subseteq g_i \lor g_k$. Two $d$-hypercubes are \Define{adjacent} or face-connected
if their intersection is a $(d-1)$-hypercube.

A \Define{Discrete Space-Filling Curve} (DSFC) $\Gamma$ on a grid $G$ is a total
ordering of the cells $\Gamma = (g_1, \dots, g_n)$ of $G$ such that the following hold.
\begin{enumerate}
	\item Continuity: for all $1 \le i < n$, $g_i$ and $g_{i+1}$ are adjacent;
	
	\item Space-filling: for all $1 \le i < j < k \le n$, $g_i \lor g_j \subseteq g_i \lor g_k$.
\end{enumerate}

\begin{rationale}
 Our definition over the box language resembles the formalism in
  \cite{Haverkort:2010:LocalityFor2DSFCs} who rightly points out that the
  terminology (discrete) space-filling curve is inferior to terms like
  ``scanning order'' which emphasise the ordering over volumes.
  The discussion in
  \cite{Bader:2013:SFCs} is even more rigorous, avoids the term curve---which
  is reserved for an object resulting from the limit over
  refinements---altogether, and
  refers to a ``space-filling order'' to distinguish it from the limit of 
  infinite refinement which eventually yields a curve
  \cite{Gotsman:1996:MetricsOnSFCs,Sagan:94:SFCs}.
  We stick to DSFC, but emphasise that our term does not make assumptions about some self-similarity
  \cite{Cardona:2016:SelfSimilarity,Schrack:2015:GenerateSFCs} and does not
  imply any volumetric homogeneity:

  We assume that the ordered volumes are non-overlapping
  \cite{Cardona:2016:SelfSimilarity} and face-connected.
  In line with \cite{Haverkort:2010:LocalityFor2DSFCs},
  the important generalisation compared to other work in the field is that our
  DSFC defines an order over adaptive Cartesian meshes, i.e.~the ordered
  hypercubes can have different size.
  Our term DSFC thus differs from the classic notion of a (curve) ``iterate'' or
  successive production through a grammar, where we typically assume a uniform
  unfolding of the underlying tree structure
  \cite{Cardona:2016:SelfSimilarity,Gotsman:1996:MetricsOnSFCs,Hungershoefer:02:SFCQuality,Schrack:2015:GenerateSFCs,Zumbusch:2001:QualityOfSFCs}.
 
\end{rationale}

\noindent
The space-filling property has two alternate
definitions:

\begin{lemma} \label{lemma:space-filling}
Let $G$ be a grid. For any ordering $\Gamma = (g_1, \dots, g_n)$ of the cells of $G$, the following are equivalent.
\begin{enumerate}[label=(\roman*)]
	\item \label{item:space-filling1} 
	For all $1 \le i < j < k \le n$, $g_i \lor g_j \subseteq g_i \lor g_k$.

	\item \label{item:space-filling2}
	For all $1 \le i < j < k \le n$, $g_j \lor g_k \subseteq g_i \lor g_k$.

	\item \label{item:space-filling3}
	For any $1 \le \alpha \le \beta \le n$, 
	\[
		g_\alpha \lor g_\beta = \bigvee_{i=\alpha}^\beta g_i.
	\]
\end{enumerate}
\end{lemma}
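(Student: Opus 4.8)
The plan is to show the cycle of implications $\ref{item:space-filling1} \Rightarrow \ref{item:space-filling3} \Rightarrow \ref{item:space-filling2} \Rightarrow \ref{item:space-filling1}$, using throughout only the elementary facts about the join operation $\lor$ on the box tree $\tree$: namely that $\lor$ is associative, commutative, idempotent, and monotone (if $x \subseteq x'$ and $y \subseteq y'$ then $x \lor y \subseteq x' \lor y'$), and that $a \subseteq b$ iff $a \lor b = b$. In particular $\bigvee_{i=\alpha}^\beta g_i$ is well-defined and for any $\alpha \le m \le \beta$ we have $g_\alpha \lor g_\beta \subseteq \bigvee_{i=\alpha}^\beta g_i$ trivially, so the content of \ref{item:space-filling3} is only the reverse inclusion.

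First I would prove \ref{item:space-filling1} $\Rightarrow$ \ref{item:space-filling3} by induction on $\beta - \alpha$. The base cases $\beta - \alpha \le 1$ are immediate. For the inductive step, fix $\alpha < \beta$ and let $B = \bigvee_{i=\alpha}^{\beta} g_i$; it suffices to show $g_i \subseteq g_\alpha \lor g_\beta$ for every $\alpha \le i \le \beta$. For $i = \alpha$ and $i = \beta$ this is clear. For $\alpha < i < \beta$, apply \ref{item:space-filling1} with the triple $(\alpha, i, \beta)$ to get $g_\alpha \lor g_i \subseteq g_\alpha \lor g_\beta$, hence $g_i \subseteq g_\alpha \lor g_\beta$. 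Taking the join over all $i$ gives $B \subseteq g_\alpha \lor g_\beta$, and combined with the trivial reverse inclusion this yields \ref{item:space-filling3}. (One can just as well extract \ref{item:space-filling3} directly from this observation without induction, but the inductive phrasing makes the role of each hypothesis explicit.)

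Next, \ref{item:space-filling3} $\Rightarrow$ \ref{item:space-filling2}: given $i < j < k$, apply \ref{item:space-filling3} twice, $g_i \lor g_k = \bigvee_{m=i}^{k} g_m$ and $g_j \lor g_k = \bigvee_{m=j}^{k} g_m$; since $\{j,\dots,k\} \subseteq \{i,\dots,k\}$, monotonicity of the join over a larger index set gives $g_j \lor g_k \subseteq g_i \lor g_k$, which is \ref{item:space-filling2}. Finally, \ref{item:space-filling2} $\Rightarrow$ \ref{item:space-filling1} is the symmetric statement obtained by reversing the order of the sequence: if $\Gamma = (g_1,\dots,g_n)$ satisfies \ref{item:space-filling2}, then the reversed sequence $(g_n,\dots,g_1)$ satisfies \ref{item:space-filling1} by a direct index substitution $i \mapsto n+1-i$, and we have just shown \ref{item:space-filling1} implies \ref{item:space-filling3} which is manifestly symmetric under reversal, hence also implies \ref{item:space-filling2}; unwinding, \ref{item:space-filling2} for $\Gamma$ gives \ref{item:space-filling1} for $\Gamma$. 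Alternatively, and more cleanly, one proves \ref{item:space-filling2} $\Rightarrow$ \ref{item:space-filling3} by the mirror image of the argument above (induction on $\beta-\alpha$, using the triple $(\alpha, i, \beta)$ with $g_i \lor g_\beta \subseteq g_\alpha \lor g_\beta$), and then \ref{item:space-filling3} $\Rightarrow$ \ref{item:space-filling1} is immediate by the same monotonicity step as in the previous implication.

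I do not anticipate a genuine obstacle here; the only point requiring a little care is to keep the direction of the inclusions straight and to make sure the monotonicity property of $\lor$ over index sets (``joining more boxes gives a larger box'') is invoked correctly, since the join of a set of boxes is the least common ancestor and ``larger'' means ``closer to the root.'' The lemma is really just bookkeeping on the tree $\tree$, and the cleanest writeup is probably to isolate \ref{item:space-filling3} as the central equivalent form and show each of \ref{item:space-filling1}, \ref{item:space-filling2} is equivalent to it.
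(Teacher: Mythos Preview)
Your proposal is correct and follows essentially the same strategy as the paper: both isolate \ref{item:space-filling3} as the central form and derive each of \ref{item:space-filling1}, \ref{item:space-filling2} as equivalent to it, with the nontrivial direction being \ref{item:space-filling1}~$\Rightarrow$~\ref{item:space-filling3} (and its mirror). The only minor difference is that your argument for \ref{item:space-filling1}~$\Rightarrow$~\ref{item:space-filling3} is in fact direct (for each $\alpha<i<\beta$ you apply \ref{item:space-filling1} once to get $g_i\subseteq g_\alpha\lor g_i\subseteq g_\alpha\lor g_\beta$, then take the join), so the induction wrapper is superfluous, as you yourself note; the paper instead writes a genuine induction on $\beta-\alpha$, peeling off $g_\beta$ and invoking the inductive hypothesis on $\{\alpha,\dots,\beta-1\}$.
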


\begin{proof}
Clearly, \ref{item:space-filling3} implies both \ref{item:space-filling1} and \ref{item:space-filling2}. We now prove \ref{item:space-filling1} implies \ref{item:space-filling3}. Since $g_\alpha \lor g_\beta \subseteq \bigvee_{i=\alpha}^\beta g_i$, we only need to prove the reverse inclusion. We prove this by induction on $\delta := \beta - \alpha$. This is clear for $\delta = 0$, so suppose for up to $\delta - 1$. We have
\[
	\bigvee_{i=\alpha}^\beta g_i = \bigvee_{i=\alpha}^{\beta-1} g_i \lor g_\beta = (g_\alpha \lor g_{\beta - 1}) \lor g_\beta \subseteq (g_\alpha \lor g_\beta) \lor g_\beta = g_\alpha \lor g_\beta.
\]
The proof that \ref{item:space-filling2} implies \ref{item:space-filling3} is similar and hence omitted.
\end{proof}

An SFC \Define{partition} $P$ (of a DSFC $\Gamma$) on a grid $G$ is a set of consecutive
cells along the discrete space-filling curve $\Gamma$:
$P = \{g_i, \dots, g_j\}$ for some $1 \le i \le j \le n$.

We denote the ordering of the cells as $g_i \le_\Gamma g_j$ for any $i \le j$. For any node $v$, the cells contained in $v$ are consecutive according to $\Gamma$ (otherwise this would contradict the space-filling property). More generally, a DSFC then induces a total ordering of the nodes of $G$: say $u \le_\Gamma v$ if either $u \subseteq v$ or there exist two cells $x \subseteq u$ and $y \subseteq v$ such that $x \le_\Gamma y$ (or equivalently, $x' \le_\Gamma y'$ for all $x' \subseteq u$ and $y' \subseteq v$). The space-filling property is given in its most general form as follows.

\begin{theorem}
Let $\Gamma$ be a DSFC on a grid $G$. Then for any two nodes $u$ and $v$ of $G$, 
\[
	u \lor v = \bigvee_{u \le_\Gamma x \le_\Gamma v} x.
\]
\end{theorem}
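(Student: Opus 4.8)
The plan is to reduce the statement about arbitrary nodes $u$ and $v$ to the already-proven statement about cells (Lemma~\ref{lemma:space-filling}\ref{item:space-filling3}). First I would observe that $u\lor v$ is a node of $G$ (an ancestor of both), so it suffices to identify $u\lor v$ with the join over all nodes $x$ with $u\le_\Gamma x\le_\Gamma v$. The key tool is the fact, noted just before the theorem, that the cells contained in any node $v$ form a consecutive block along $\Gamma$; write $v=\{g_{a(v)},\dots,g_{b(v)}\}$ for the index range of cells inside $v$, and similarly $u=\{g_{a(u)},\dots,g_{b(u)}\}$.

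The main step is to translate the node ordering $u\le_\Gamma x\le_\Gamma v$ into a statement about cell indices. Without loss of generality assume $u\le_\Gamma v$; the two nodes are either nested or disjoint in content. If $u\subseteq v$ then $u\lor v=v$ and the claim is immediate since every $x$ in the join range satisfies $x\subseteq v$ (using the definition $u\le_\Gamma x$ together with $x\le_\Gamma v$ and the fact that $v$ spans a consecutive block of cells, so $x$'s cells lie inside $v$'s block, forcing $x\subseteq v$). If $u$ and $v$ have disjoint content, then $u\le_\Gamma v$ means every cell of $u$ precedes every cell of $v$, so $b(u)<a(v)$. I would then claim that the set of cells covered by the nodes $x$ with $u\le_\Gamma x\le_\Gamma v$ is exactly the consecutive block $g_{a(u)},\dots,g_{b(v)}$: every such $x$ has its cell-block inside $[a(u),b(v)]$ by the ordering, and conversely each cell $g_i$ with $a(u)\le i\le b(v)$ is itself a node $x$ in that range. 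Hence
\[
\bigvee_{u\le_\Gamma x\le_\Gamma v} x \;=\; \bigvee_{i=a(u)}^{b(v)} g_i \;=\; g_{a(u)}\lor g_{b(v)}
\]
by Lemma~\ref{lemma:space-filling}\ref{item:space-filling3}. It then remains to show $g_{a(u)}\lor g_{b(v)}=u\lor v$: since $g_{a(u)}\subseteq u$ and $g_{b(v)}\subseteq v$ we get $g_{a(u)}\lor g_{b(v)}\subseteq u\lor v$, and for the reverse, $u\lor v$ is the least common ancestor, so it is an ancestor of $g_{a(u)}$ and of $g_{b(v)}$, giving $u\lor v\subseteq g_{a(u)}\lor g_{b(v)}$ — wait, that inclusion goes the wrong way for ancestors, so instead I note that $u=g_{a(u)}\lor g_{b(u)}$ and $v=g_{a(v)}\lor g_{b(v)}$ by the same lemma applied inside each node, whence $u\lor v=g_{a(u)}\lor g_{b(u)}\lor g_{a(v)}\lor g_{b(v)}=\bigvee_{i=a(u)}^{b(v)}g_i$ using part~\ref{item:space-filling3} once more on the full block.

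The step I expect to be the main obstacle is the careful handling of the node ordering $\le_\Gamma$ and the verification that $\{x : u\le_\Gamma x\le_\Gamma v\}$ covers precisely the cells in $[a(u),b(v)]$ — in particular ruling out a node $x$ that straddles the boundary (some cells before $a(u)$, some after), which must be excluded by the definition of $u\le_\Gamma x$. I would handle this by unwinding the definition: $u\le_\Gamma x$ requires either $u\subseteq x$ or every cell of $u$ precedes every cell of $x$; combined with $x\le_\Gamma v$ this pins down $x$'s cell-block to lie within $[a(u),b(v)]$ (after treating the nested sub-cases, e.g.\ $u\subseteq x$ forces $x$ to be an ancestor with $b(u)\le b(x)$, and then $x\le_\Gamma v$ bounds it above). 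Everything else is a routine application of Lemma~\ref{lemma:space-filling}, so the proof is short once this bookkeeping is done cleanly.
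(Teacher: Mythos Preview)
Your strategy—translate nodes to their cell-index blocks $[a(\cdot),b(\cdot)]$ and invoke Lemma~\ref{lemma:space-filling}\ref{item:space-filling3}—is viable and more direct than the paper's route, which first reformulates the theorem as the monotonicity statement ``$u \le_\Gamma v \le_\Gamma w$ implies $u \lor v \subseteq u \lor w$'' and then proves that by case analysis on containment. Your endgame identity $u \lor v = g_{a(u)} \lor g_{b(u)} \lor g_{a(v)} \lor g_{b(v)} = g_{a(u)} \lor g_{b(v)}$ is correct and clean, and your treatment of the nested case $u\subseteq v$ is fine.

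There is, however, a genuine gap in the disjoint case. You claim that every node $x$ with $u \le_\Gamma x \le_\Gamma v$ has its cell-block contained in $[a(u), b(v)]$, and you flag the sub-case $u \subseteq x$ as the one to watch. But your handling of it only bounds $b(x)$ from above via $x \le_\Gamma v$; it says nothing about $a(x)$, and indeed $a(x)$ need not be bounded below by $a(u)$. Concretely, take $x=\parent{u}$ whenever $\parent{u}$ and $v$ are disjoint with $b(\parent{u}) < a(v)$: then $u\subseteq x$ gives $u\le_\Gamma x$ and $x\le_\Gamma v$ holds, yet $a(x)<a(u)$, so $x$ contains cells strictly preceding $g_{a(u)}$. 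Thus the intermediate claim ``the cells covered by the $x$'s are exactly $[a(u),b(v)]$'' is false, and your displayed equality $\bigvee_x x=\bigvee_{i=a(u)}^{b(v)} g_i$ is not justified by the argument given.

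The repair is short and keeps your architecture intact: rather than routing through cell-blocks, show directly that every $x$ in the range satisfies $x \subseteq u \lor v$. In the problematic sub-case $u \subseteq x$ with $x$ disjoint from $v$, observe that $x$ is an ancestor of $u$ and hence comparable (in the tree) with $u \lor v$; since $v\not\subseteq x$ we cannot have $u \lor v \subseteq x$, so $x \subseteq u \lor v$. The remaining sub-cases are the easy ones you already handle. Since $u$ and $v$ themselves lie in the range, this gives both inclusions and hence $\bigvee_x x = u \lor v$ without the cell-block detour.
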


\begin{proof}
Similarly to Lemma \ref{lemma:space-filling}, the result is equivalent to: let $u, v, w$ be nodes of $G$ with $u \le_\Gamma v \le_\Gamma w$, then $u \lor v \subseteq u \lor w$.

The result is clear if $v \subseteq w$ so we assume $v \not\subseteq w$. Let us first suppose that $u \subseteq v$, so that $u \lor v = v$. Then $z := u \lor w$ is a node that intersects $v$ nontrivially (since $u \subseteq z \cap v$), then either $z \subset v$ or $v \subseteq z$. Since $z$ has a part outside of $v$ (because $w \not\subseteq v$), we must have $v \subseteq z$.

Let us now suppose that $u \not\subseteq v$. We can denote the cells belonging to $u$, $v$ and $w$ as $g_\alpha, \dots, g_\beta$, $g_\gamma, \dots, g_\delta$, and $g_\epsilon, \dots, g_\zeta$ respectively for some $\alpha \le \beta < \gamma \le \delta < \epsilon \le \zeta$. By Lemma \ref{lemma:space-filling}, we obtain
\begin{align*}
	u \lor v &= g_\alpha \lor g_\delta,\\
	u \lor w &= g_\alpha \lor g_\zeta.
\end{align*}
The space-filling property then yields $u \lor v \subseteq u \lor w$.
\end{proof}

\noindent
A \Define{space-filling curve} $\Phi$ is a function that associates a DSFC $\Phi(G)$ to every grid
$G$, and that preserves the ordering of nodes. More formally, if $u$ and $v$ are nodes of $G$ with $u \le_{\Phi(G)} v$ and $G'$ refines $G$, then $u \le_{\Phi(G')} v$.


 The actual (continuous) curve then results from the limit for infinite
 subdivision.
 To meet the continuity requirement, SFCs rely on appropriate rotation and
 mirroring of a leitmotif per refinement step.
 This distinguishes the Peano from the Hilbert from the Lebesgue curve.
 The latter weakens the continuity requirements and phrases it in terms of a  
 Cantor set as preimage \cite{Bader:2013:SFCs}.
 As the continuous space-filling curve is of no further interest in this work,
 we neglect further continuous or limit properties and even use the term SFC as
 synonym for DSFC.
 To simply our work, we furthermore stick to continuous SFCs in the above sense.
 The extension of our statements to Lebesgue (z-ordering) is straightforward, as
 the number of discontinuous subdomains produced by this SFC is bounded
 (cmp.~\cite{Burstedde:19:ZCurve}).
 It is also possible to extend out work to non-cubic box
 hierarchies (Sierpinksi) or weak derivations of SFCs
 \cite{Burstedde:2016:Tetrahedra}.




%
%

\subsection{Shapes}

A \Define{decomposition} of a set of boxes $X$ is another set of boxes $S$ such
that $\content{X} = \content{S}$. For any finite set of boxes $X$, let $Q$ be the set of maximal boxes in $\content{X}$, i.e.
\[
	Q = \{ q \subseteq \content{X} : q \subset y \implies y \not\subseteq \content{X} \}.
\]
We refer to $Q$ as the \Define{shape} of $X$, and we denote it as $\shape(X)$.

\begin{lemma} \label{lem:shape_is_decomposition}
For any finite set of boxes $X$, $\shape(X)$ is the unique decomposition of $X$ of minimum cardinality.
\end{lemma}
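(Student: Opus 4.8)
The plan is to prove the two halves separately: that $\shape(X)$ is a decomposition of $X$, and that it embeds into every decomposition $S$ of $X$; a short rigidity argument then upgrades this to minimum cardinality and uniqueness. The workhorse is a dichotomy for boxes that follows at once from the tree structure $\tree$ together with the fact that the $k^d$ children of a box tile it with pairwise interior-disjoint pieces: \emph{if two boxes $a,b$ satisfy $\mathrm{int}(a)\cap\mathrm{int}(b)\neq\emptyset$, then they are comparable}, since otherwise they sit in distinct children of $a\lor b$. I will use two consequences. First, if $q$ is a maximal box inside $\content{X}$ and $s\subseteq\content{X}$ is a box with $q\subseteq s$, then $q=s$, because $q\subset s$ would give $\parent{q}\subseteq s\subseteq\content{X}$, contradicting maximality. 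Second, distinct maximal boxes $q\neq q'$ of $\content{X}$ satisfy $q\cap\mathrm{int}(q')=\emptyset$: a point of $q$ lying in $\mathrm{int}(q')$ is a limit of points of $\mathrm{int}(q)$, which would then meet $\mathrm{int}(q')$, forcing $q=q'$.

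First I would verify that $\shape(X)$ is a decomposition of $X$. For $x\in X$, the ancestors of $x$ contained in $\content{X}$ form a nonempty finite chain (it contains $x$), so it has a largest element $q$; any box strictly containing $q$ is an ancestor of $q$, hence of $x$, hence lies in that chain and is contained in $q$, so $q$ is maximal in $\content{X}$, i.e.\ $q\in\shape(X)$, and $x\subseteq q$. Hence $\content{X}=\bigcup_{x\in X}x\subseteq\content{\shape(X)}\subseteq\content{X}$. This also shows $\shape(X)$ is nonempty when $X$ is, and, since $X$ is itself a finite decomposition of $X$, that finite decompositions exist.

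The main step is to build an injection $\phi\colon\shape(X)\to S$ for an arbitrary decomposition $S$. Given $q\in\shape(X)$, the open set $\mathrm{int}(q)$ has positive measure, while $\bigcup_{s\in S}\partial s$ is a (countable) union of box boundaries and hence a null set; so I can choose $p\in\mathrm{int}(q)$ lying in no $\partial s$. Then $p\in\content{X}=\content{S}$ forces $p\in\mathrm{int}(s)$ for some $s\in S$, the dichotomy makes $s$ and $q$ comparable, and maximality of $q$ rules out $q\subset s$, so $s\subseteq q$; set $\phi(q)$ to be such an $s$. If $\phi(q_1)=\phi(q_2)$, then $q_1$ and $q_2$ both contain that common box, hence are comparable, hence equal by maximality, so $\phi$ is injective and $|\shape(X)|\le|S|$. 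Together with the previous paragraph, this shows $\shape(X)$ is a decomposition of minimum (finite, at most $|X|$) cardinality.

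Finally, for uniqueness, suppose $S$ is a decomposition with $|S|=|\shape(X)|<\infty$. Then $\phi$ is a bijection and $\phi(q)\subseteq q$ for every $q$. If some inclusion were strict, say $\phi(q_0)\subset q_0$, I would pick $y\in\mathrm{int}(q_0)\setminus\phi(q_0)$, which is nonempty since otherwise $q_0=\overline{\mathrm{int}(q_0)}\subseteq\phi(q_0)$; then for every $q\neq q_0$ we have $y\notin q$ (as $y\in\mathrm{int}(q_0)$ and $q\cap\mathrm{int}(q_0)=\emptyset$), hence $y\notin\phi(q)\subseteq q$, while $y\notin\phi(q_0)$ by choice, so $y\notin\content{S}$, contradicting $y\in q_0\subseteq\content{X}=\content{S}$. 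Thus $\phi(q)=q$ for all $q$ and $S=\shape(X)$. The one genuinely delicate point is this injection step, namely that the box of $S$ covering a generic interior point of $q$ is \emph{forced to lie inside} $q$ — this is exactly where maximality enters, and the rest is bookkeeping with the tree order. If one prefers to avoid even the harmless appeal to a null set, $p$ can instead be chosen as the centre of a sufficiently deep cell contained in $q$, with the same conclusion.
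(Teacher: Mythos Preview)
Your argument is correct, modulo a garbled sentence in the first step (``hence lies in that chain and is contained in $q$'' should read something like ``hence, if it were in $\content{X}$, it would lie in that chain and strictly contain $q$, contradicting the choice of $q$''); the intended logic is clear.

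The route is genuinely different from the paper's. The paper builds the comparison map in the \emph{opposite} direction: every box $s\in S$ sits inside some maximal box $q_s\in\shape(X)$, simply because every box in $\content{X}$ is dominated by a maximal one; then the chain $\content{S}\subseteq\bigcup_s q_s\subseteq\content{\shape(X)}$ forces $\{q_s\}=\shape(X)$ and $s=q_s$ in one stroke. This is purely order-theoretic and needs no topology. Your direction, mapping each $q\in\shape(X)$ to an $s\subseteq q$ via a generic interior point, is where you bring in interiors, boundaries, and a null-set (or deep-cell) argument to ensure the point lands in the interior of some $s$. That works, and your injectivity and rigidity steps are clean, but it is heavier machinery than the problem requires: the paper's ``lift $s$ to its maximal container'' avoids choosing points altogether and gives minimality and uniqueness simultaneously rather than in two passes. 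On the other hand, your approach has the mild advantage of proving $|\shape(X)|\le|S|$ for \emph{every} decomposition $S$ directly, whereas the paper begins by assuming $S$ already has minimum cardinality.
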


\begin{proof}
Firstly, we prove that $Q := \shape(X)$ is a decomposition of $X$, i.e. that $\content{Q} = \content{X}$. Let $Z = \{ z \in \boxes : z \subseteq \content{X} \}$; it is clear that $\content{Z} = \content{X}$. Now, since $Q \subseteq Z$, we have $\content{Q} \subseteq \content{Z} = \content{X}$. Conversely, for every $z \in Z$, there exists $q \in Q$ such that $z \subseteq q$, thus $\content{Q} = \content{Z} = \content{X}$.

Secondly, we prove that $Q$ is the unique decomposition of minimum cardinality.
Let $S$ be a decomposition of $X$ with minimum cardinality; we shall prove that $S = Q$. Note that $S \subseteq Z$, thus for every $s \in S$, there exists $q_s \in Q$ such that $s \subseteq q_s$. We obtain
\[
	\content{X} = \content{S} = \bigcup_{s \in S} s \subseteq \bigcup_{s \in S} q_s \subseteq \content{Q} = \content{X}.
\]
Therefore, we have $|S| \ge |Q|$ (otherwise, the second inclusion would be strict) and hence $|S| = |Q|$. We also have $s = q_s$ for all $s$ (otherwise the first inclusion would be strict) and hence $S = Q$.
\end{proof}

We naturally say a set of boxes $Q$ is a shape if $Q = \shape(X)$ for some finite set of boxes $X$. Shapes can be characterised as follows.

\begin{lemma}
The following are equivalent:
\begin{enumerate}[label=(\roman*)]
	\item \label{it:shape_def}
	$Q$ is a shape.

	\item \label{it:shape_R}
	For any $R \subseteq Q$, $\content{R}$ is a box if and only if $|R| = 1$.
	
	\item \label{it:shape_v}
	For any node $v \subseteq \content{Q}$, there is a unique $q \in Q$ such that $v \subseteq q$.
	
	\item \label{it:shape_Q} 
	$Q = \shape(Q)$.
\end{enumerate}
\end{lemma}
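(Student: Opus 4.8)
The plan is to prove the equivalence by establishing the implications (i) $\Rightarrow$ (iv) $\Rightarrow$ (iii) $\Rightarrow$ (ii) $\Rightarrow$ (iv), together with (iv) $\Rightarrow$ (i); since (ii) $\Rightarrow$ (iv) and (iv) $\Rightarrow$ (i), this ties all four statements together. The implication (iv) $\Rightarrow$ (i) is immediate (take $X = Q$ in \ref{it:shape_def}). For (i) $\Rightarrow$ (iv): if $Q = \shape(X)$, then $\content{Q} = \content{X}$ by Lemma \ref{lem:shape_is_decomposition}, and since $\shape$ depends only on the content of its argument, $\shape(Q) = \shape(X) = Q$.

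For (iv) $\Rightarrow$ (iii), fix a node $v$ with $v \subseteq \content{Q}$ (so $v$ is in particular a box). For existence: the boxes containing $v$ form the finite chain consisting of $v$ and its ancestors in $\tree$; let $q$ be the largest of these that is contained in $\content{Q}$ (this set is nonempty since it contains $v$). Any box strictly containing $q$ is a higher ancestor of $v$, hence not contained in $\content{Q}$ by maximality of $q$, so $q$ satisfies the defining condition of $\shape(Q) = Q$. For uniqueness: if $v \subseteq q$ and $v \subseteq q'$ with $q, q' \in Q$, then $q \cap q' \supseteq v$ has positive volume, so $q$ and $q'$ are comparable (two boxes are nested or have disjoint interiors), and since $\shape(Q)$ is trivially an antichain we get $q = q'$. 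For (iii) $\Rightarrow$ (ii): the direction $|R| = 1 \Rightarrow \content{R}$ is a box is clear, while conversely, if $\emptyset \ne R \subseteq Q$ and $\content{R}$ is a box $v$, then \ref{it:shape_v} applied to $v$ gives a unique $q \in Q$ with $v \subseteq q$, whereas for each $r \in R$ we have $r \subseteq v \subseteq q$ and, by \ref{it:shape_v} applied to the box $r$, also that $r$ is its own unique container in $Q$; hence $q = r$ for every $r \in R$, so $R = \{q\}$.

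The substantive step is (ii) $\Rightarrow$ (iv). First, (ii) forces $Q$ to be an antichain: a strict containment $q_1 \subsetneq q_2$ in $Q$ would yield $R = \{q_1, q_2\}$ with $\content{R} = q_2$ a box but $|R| = 2$. I then prove $Q = \shape(Q)$ by two inclusions, each via a covering argument. If some $q \in Q$ is not maximal in $\content{Q}$, choose a box $y$ with $q \subsetneq y \subseteq \content{Q}$ and set $R = \{q' \in Q : q' \subseteq y\}$; every point of $\mathrm{int}(y)$ avoiding the finitely many box boundaries lies in $\mathrm{int}(q')$ for a unique $q' \in Q$, which is then comparable to $y$ and, by the antichain property, satisfies $q' \subseteq y$, so taking closures gives $\content{R} = y$, a box, with $|R| \ge 2$ (since $q \subsetneq y$, no single element of $R$ equals $y$), contradicting (ii); hence $Q \subseteq \shape(Q)$. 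Symmetrically, for $q \in \shape(Q)$ the set $R = \{q' \in Q : q' \subseteq q\}$ covers $q$ (a comparable $q'$ cannot strictly contain $q$, by maximality of $q$ in $\content{Q}$), so $\content{R} = q$ is a box and (ii) forces $|R| = 1$, giving $q \in Q$. The main obstacle is precisely this covering argument: turning the set-theoretic inclusion $\mathrm{int}(y) \subseteq \content{Q}$ into an exact decomposition of the box $y$ by elements of $Q$, which is where the $k$-adic nesting structure of boxes and a closure argument are essential.
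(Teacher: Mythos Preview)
Your proof is correct, but it follows a different cycle of implications than the paper. The paper proves the single cycle (i) $\Rightarrow$ (ii) $\Rightarrow$ (iii) $\Rightarrow$ (iv) $\Rightarrow$ (i), whereas you prove (i) $\Leftrightarrow$ (iv) directly and then close the loop (iv) $\Rightarrow$ (iii) $\Rightarrow$ (ii) $\Rightarrow$ (iv). Your shortcut (i) $\Rightarrow$ (iv) via ``$\shape$ depends only on $\content{\cdot}$'' is a clean observation the paper does not make explicitly. On the other hand, your (ii) $\Rightarrow$ (iv) step is heavier than necessary: you run the covering argument twice (once for each inclusion between $Q$ and $\shape(Q)$), while the paper's route (ii) $\Rightarrow$ (iii) $\Rightarrow$ (iv) uses the covering idea only once (for existence in (iii)) and then gets (iii) $\Rightarrow$ (iv) almost for free: each $y \in \shape(Q)$ has a unique container $q_y \in Q$, maximality forces $y = q_y$, so $\shape(Q) \subseteq Q$, and equality follows since a strict inclusion would contradict $\content{\shape(Q)} = \content{Q}$ from Lemma~\ref{lem:shape_is_decomposition}. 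Both approaches are valid; the paper's ordering just economises on the covering step.
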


\begin{proof}
\ref{it:shape_def} $\implies$ \ref{it:shape_R}. Let $Q = \shape(X)$, and let $Z$ be the set of boxes contained in $\content{X}$. Let $R \subseteq Q$, $|R| \ge 2$, such that $\content{R}$ is a box. Say $r \in R$, then $r$ is maximal in $Z$ by definition of $Q$, but it is not maximal in $Z$ since it is strictly contained in $\content{R}$, which is the desired contradiction.

\ref{it:shape_R} $\implies$ \ref{it:shape_v}. Let $v \subseteq \content{Q}$. Firstly, suppose that there is no $q \in Q$ containing $v$. Then $v = \content{R}$ for some $R \subseteq Q$, $|R| \ge 2$, which contradicts the hypothesis. Secondly, suppose that $v \subseteq q_1$ and $v \subseteq q_2$ for some distinct $q_1, q_2 \in Q$. Then $q_1 \subseteq q_2$ (or vice versa), and hence $\content{ q_1, q_2 } = q_2$, which once again contradicts the hypothesis.

\ref{it:shape_v} $\implies$ \ref{it:shape_Q}. Let $Y = \shape(Q)$, then for any $y \in Y$, let $q_y$ be the unique box in $Q$ that contains $y$. Since $y$ is maximal, we have $q_y = y$ and hence $Y \subseteq Q$. If the inequality is strict, then $\content{Y} \subset \content{Q}$, which contradicts Lemma \ref{lem:shape_is_decomposition}.

\ref{it:shape_Q} $\implies$ \ref{it:shape_def}. Trivial.
\end{proof}

\begin{application}
 The partitioning with (discrete) SFCs and the subsequent distribution of these
 chunks over compute units is a non-hierarchical technique which
 leaves the question open to which units the coarser grid levels (boxes) belong
 to.
 Two approaches are found in simulations
 \cite{Weinzierl:2019:Peano}, yet both make a box where all children boxes belonging to a partition $P$ belong to
 $P$, too.
 Codes that replicate coarser levels replicate a coarse box whose children are
 members of $P_1, P_2, \ldots$ on all compute units handling these $P_1, P_2,
 \ldots$.
 This is a bottom-up approach where the root box is
 shared among all compute units.
 A top-down approach assigns the coarse box in question uniquely to one of the
 compute units responsible for a child box.
 With the first-child rule \cite{Clevenger:2020:MultigridFEM}, the first child
 of a box along the DSFC determines which compute unit is responsible for the
 coarser box.
 Both paradigms, bottom-up and top-down, are different to our shape formalism: 
 It explicitly excludes boxes where the compute unit assignments would not be
 unique from the analysis.
\end{application}


%
%
%

\subsection{Continuous and discrete volume and surface}

\begin{rationale}
 Our paper orbits around the analysis of face-cuts:
  When we split up a grid into partitions, we cut ``only'' along faces.
  We furthermore assume that algorithms working on cells make cells exchange
  information through their interfaces.
  The number of cells we cut through
  vs.~the number of cells we hold in one partition thus is a reasonable
  characterisation for an algorithm's parallel behaviour. 
  To understand the interplay of face-cuts and mesh refinement,
  it is
 important to distinguish refinement along faces from edges from vertices: 
 If AMR refines towards an edge, we create, relative to the additional $k^d$
 inner cells per cell refinement, more boundary faces than with a
 refinement towards a face.
 We therefore first classify vertex, edge, face, \ldots for an isolated cell,
 and then generalise this concept to the boundary of a partition.
\end{rationale}



A \Define{$c$-subcube} of a hypercube is one of its $(d-c)$-dimensional hypercube faces.
For a given hypercube $x = [a_1, b_1] \times \dots \times [a_d, b_d]$, it results
from picking $c$ dimensions and fixing the corresponding coordinate entry
$x_i \in \{ a_i, b_i \}$. In other words, it is any hypercube of the form
\[
 	x_{S,T} = \{ z \in x : z_i = a_i \,\forall i \in S, z_j = b_j \,\forall j \in T \}
 \] 
for some $S, T \subseteq [d]$, $S \cap T = \emptyset$ and $|S| + |T| = d-c$.

We denote the set of $c$-subcubes of $x$ as $\subcube^c x$, so that $\subcube^0 x = \{x\}$. We also denote $\subcube x = \subcube^1 x$ and $\subcube^c G := \bigcup_{g' \in G} \subcube^c g' :  s \subseteq x \in \subcube^c g$.

For any set of cells $X$ of a grid $G$ and any $g \in X$, we define the $c$-\Define{boundary} of $g$ with respect to $X$ as
\[
	\boundary^c_X g := \left\{ s \in \subcube^c G, s \not\subseteq h \text{ for all } h \in X \setminus g \right\}.
\]
These are the $(d-c)$-subcubes of $G$ that belong to $g$ but to no other element of $X$. We extend the notation to $\boundary^c_X Y = \bigcup_{g \in Y} \boundary^c_X g$ for all $Y \subseteq X$. In order to simplify notation, we denote $\boundary^c X = \boundary^c_X X$. In particular, we denote $\boundary X  := \boundary^1 X$. Any element of $\boundary^c X$ is called a $(d-c)$-\Define{face} of $X$, a $0$-face is a \Define{vertex} of $X$, while a $(d-1)$-face is a \Define{facet} of $X$. 

\begin{example} \label{example:P1}
 We shall illustrate some of the key concepts in this paper with a running example of the Hilbert partition ($k=2$, $d=2$) given in
 Figure~\ref{fig:P}. 
 It consists of only three cells of different depths: $P = \{a, b, c\}$. 
 The subcubes $\subcube^c g$ and the contribution $\boundary_P^c g$ to the boundary of $P$ are given for each $g \in P$ in the table below:
~\\

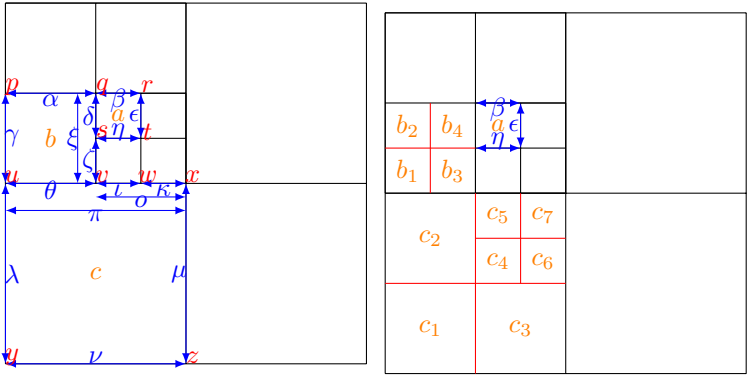
\begin{figure}
\centering
 \begin{tikzpicture}[scale=0.6]
\begin{scope}[scale=4]
    \draw (0,0) grid (2,2);
\end{scope}

\begin{scope}[scale=2]
    \draw (0,2) grid (2,4);
\end{scope}

\begin{scope}[scale=1]
    \draw (2,4) grid (4,6);
\end{scope}

\begin{scope}[color=orange]
\draw (2.5,5.5) node {$a$};
\draw (1,5) node {$b$};
\draw (2,2) node {$c$};
\end{scope}

\begin{scope}[color=blue]
\draw  (1,5.85) node {$\alpha$};
\draw[color=blue, latex-latex] (0,6) -- (2,6);

\draw  (2.5,5.85) node {$\beta$};
\draw[color=blue, latex-latex] (2,6) -- (3,6);

\draw  (0.15,5) node {$\gamma$};
\draw[color=blue, latex-latex] (0,4) -- (0,6);

\draw  (1.85,5.5) node {$\delta$};
\draw[color=blue, latex-latex] (2,5) -- (2,6);

\draw  (2.85,5.5)  node {$\epsilon$};
\draw[color=blue, latex-latex] (3,5) -- (3,6);

\draw  (1.85,4.5)  node {$\zeta$};
\draw[color=blue, latex-latex] (2,4) -- (2,5);

\draw  (2.5,5.15)  node {$\eta$};
\draw[color=blue, latex-latex] (2,5) -- (3,5);

\draw  (1,3.85)  node {$\theta$};
\draw[color=blue, latex-latex] (0,4) -- (2,4);

\draw  (2.5,3.85)  node {$\iota$};
\draw[color=blue, latex-latex] (2,4) -- (3,4);

\draw  (3.5,3.85)  node {$\kappa$};
\draw[color=blue, latex-latex] (3,4) -- (4,4);

\draw  (0.15,2)  node {$\lambda$};
\draw[color=blue, latex-latex] (0,0) -- (0,4);

\draw  (3.85,2)  node {$\mu$};
\draw[color=blue, latex-latex] (4,0) -- (4,4);

\draw  (2,0.15)  node {$\nu$};
\draw[color=blue, latex-latex] (0,0) -- (4,0);

\draw (1.5, 5) node {$\xi$};
\draw[latex-latex] (1.6,4) --  (1.6,6); 

\draw (3,3.6) node {$o$};
\draw[latex-latex] (2,3.7) -- (4,3.7); 

\draw (2,3.3) node {$\pi$};
\draw[latex-latex] (0,3.4) -- (4,3.4); 
\end{scope}

\begin{scope}[color=red]
\draw (0.15,6.15) node {$p$};
\draw (2.15,6.15) node {$q$};
\draw (3.15,6.15) node {$r$};
\draw (2.15,5.15) node {$s$};
\draw (3.15,5.15) node {$t$};
\draw (0.15,4.15) node {$u$};
\draw (2.15,4.15) node {$v$};
\draw (3.15,4.15) node {$w$};
\draw (4.15,4.15) node {$x$};
\draw (0.15,0.15) node {$y$};
\draw (4.15,0.15) node {$z$};
\end{scope}
\end{tikzpicture}
 \begin{tikzpicture}[scale=0.6]
\begin{scope}[scale=4]
    \draw (0,0) grid (2,2);
\end{scope}

\begin{scope}[scale=2]
    \draw (0,2) grid (2,4);
\end{scope}

\begin{scope}[scale=1]
    \draw (2,4) grid (4,6);
\end{scope}

\begin{scope}[color=red]
\draw (0,2) -- (4,2);
\draw (2,0) -- (2,4);

\draw (0,5) -- (2,5);
\draw (1,4) -- (1,6);

\draw (3,2) -- (3,4);
\draw (2,3) -- (4,3);
\end{scope}

\begin{scope}[color=orange]
\draw (2.5,5.5) node {$a$};

\draw (0.5,4.5) node {$b_1$};
\draw (0.5,5.5) node {$b_2$};
\draw (1.5,4.5) node {$b_3$};
\draw (1.5,5.5) node {$b_4$};

\draw (1,1) node {$c_1$};
\draw (1,3) node {$c_2$};
\draw (3,1) node {$c_3$};
\draw (2.5,2.5) node {$c_4$};
\draw (2.5,3.5) node {$c_5$};
\draw (3.5,2.5) node {$c_6$};
\draw (3.5,3.5) node {$c_7$};
\end{scope}

\begin{scope}[color=blue]

\draw  (2.5,5.85) node {$\beta$};
\draw[color=blue, latex-latex] (2,6) -- (3,6);



\draw  (2.85,5.5)  node {$\epsilon$};
\draw[color=blue, latex-latex] (3,5) -- (3,6);


\draw  (2.5,5.15)  node {$\eta$};
\draw[color=blue, latex-latex] (2,5) -- (3,5);









\end{scope}
\end{tikzpicture}
\caption{
 Left: The partition $P = \{a,b,c\}$ as
 discussed in Example \ref{example:P1}. Right:
 The pre-classification $\preclassified{P}$ of the same partition $P$.
 \label{figure:preclassifiedP}
 \label{fig:P}
}
\end{figure}

\begin{center}
\begin{tabular}{c|c|c|c|c|c|c}
    Cell $g$ & $\subcube^0 g$ & $\subcube^1 g$& $\subcube^2 g$& $\boundary^0_P g$& $\boundary^1_P g$& $\boundary^2_P g$\\
    \hline
    $a$ & $\{ a \}$ & $\{ \beta, \delta, \epsilon, \iota \}$ & $\{ q, r, s, t \}$ & $\{a\}$ & $\{ \beta, \epsilon, \iota \}$ & $\{ r, t \}$ \\
    $b$ & $\{ b \}$ & $\{ \alpha, \gamma, \xi, \theta \}$ & $\{ p, q, u, v \}$ & $\{b\}$ & $\{ \alpha, \gamma, \zeta \}$ & $\{ p \}$ \\
    $c$ & $\{ c \}$ & $\{ \pi, \lambda, \mu, \nu \}$ & $\{ u, x, y, z \}$ & $\{c\}$ & $\{ \iota, \kappa, \lambda, \mu, \nu \}$ & $\{ x, y, z \}$
\end{tabular}
\end{center}
~\\

Thus for $e \in \{0,1,2\}$, the $e$-boundary of $P$ is given by
\begin{align*}
    \boundary^0 P &= \{ a, b, c \},\\
    \boundary^1 P &= \{ \alpha, \beta, \gamma, \epsilon, \eta, \zeta, \iota, \kappa, \lambda, \mu, \nu \},\\
    \boundary^2 P &= \{ p, r, t, x, y, z \}.
\end{align*}
\end{example}


For any dimension $e \ge 1$, the continuous volume of any $e$-hypercube of size $\lambda$ is $\lambda^e$. In particular, the continuous volume of a cell $x$ of a grid is $k^{-d \cdot \depth(x)}$. For any set $X$ of cells of a grid $G$, the \Define{continuous volume} and \Define{continuous surface} of $X$ follow its definition as a union of hypercubes:
\begin{align*}
	\cv(X) &:= \sum_{x \in X} k^{- d \cdot \depth(x)},\\
	\cs(X) &:= \cv(\boundary X).
\end{align*}
We remark that the continuous volume of $X$ does not depend on the actual grid. Moreover,Lemma \ref{lemma:cvQ} below shows that the continuous volume of the $c$-boundary of $X$ only depends on the shape (or equivalently, content) of $X$.

\begin{lemma} \label{lemma:cvQ}
For any finite set of boxes $X$ of shape $Q$, $\cv( \boundary^c X) = \cv( \boundary^c Q)$ for all $0 \le c \le d$.
\end{lemma}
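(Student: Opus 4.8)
The plan is to reduce the statement to a claim about a single box and then lift it to arbitrary finite sets of boxes by an inclusion–exclusion / additivity argument on the content. First I would observe that $\boundary^c X$ is determined by which $(d-c)$-subcubes of the ambient grid lie in $\content{X}$ but are not ``absorbed'' into the interior; more precisely, a $(d-c)$-subcube $s$ contributes to $\cv(\boundary^c X)$ exactly when $s \subseteq \content{X}$ and $s$ is \emph{not} in the relative interior of $\content{X}$ in the appropriate sense. Since $\content{X} = \content{Q}$ by Lemma \ref{lem:shape_is_decomposition}, the set of $(d-c)$-dimensional hyperplane pieces sitting on the topological boundary of the region $\content{X}$ is the same whether we compute it from $X$ or from $Q$. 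The subtlety is that $\boundary^c X$ is defined combinatorially relative to the grid $G$ refining $X$, not purely topologically, so I must be careful that refining a cell of $X$ does not change the \emph{continuous volume} of the $c$-boundary even though it changes the combinatorial set of subcubes.

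The key step is therefore a \textbf{subdivision invariance lemma}: if $X'$ is obtained from $X$ by subdividing one cell $g$ into its $k^d$ children, then $\cv(\boundary^c X') = \cv(\boundary^c X)$ for every $c$. Granting this, the theorem follows because both $X$ and $Q = \shape(X)$ have a common refinement (indeed $\grid(X)$, or any grid containing both as sub-antichains in the tree sense), and $\cv(\boundary^c \cdot)$ is invariant under subdivision, so it agrees on $X$, on $Q$, and on that common refinement. To prove the subdivision lemma I would fix the cell $g$ being subdivided and analyse the $(d-c)$-subcubes of $G$ that are affected. The $(d-c)$-subcubes $s$ with $s \subseteq g$ that were facets-of-$g$-type split into $k^{d-c}$ congruent pieces of size $k^{-1}$ times smaller; their total continuous volume is preserved because $k^{d-c} \cdot (k^{-1}\lambda)^{d-c} = \lambda^{d-c}$. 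The only thing to check is the membership condition: an $s' \subseteq g$ (child subcube) satisfies $s' \not\subseteq h$ for all $h \in X' \setminus \{\text{child}\}$ iff the parent $s$ satisfied $s \not\subseteq h$ for all $h \in X \setminus \{g\}$ — because the boxes in $X \setminus \{g\} = X' \setminus C_g$ are unchanged, and $s' \subseteq h$ would force $s \subseteq h$ only after taking the union over the children, which is exactly the decomposition of $s$. New interior subcubes created strictly inside $g$ lie in the interior of $g \subseteq \content{X}$ and hence in $\content{X} \setminus \boundary$-relevant region, contributing nothing; more formally they are contained in a child box of $g$, all of which are in $X'$, so they fail the $\boundary^c$ membership test by being inside (a union of) cells other than any single witness — here one must treat the $c = d$ (vertex) case slightly separately since a $0$-subcube has zero continuous volume anyway, making that case trivially $0 = 0$.

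The main obstacle I anticipate is bookkeeping the membership condition $s \not\subseteq h \text{ for all } h \in X \setminus \{g\}$ across the subdivision when $s$ straddles the boundary between $g$ and a neighbouring box of $X$: a subcube $s$ lying on the shared facet of $g$ and a sibling $h$ must be handled consistently, and after subdividing $g$ the child subcubes on that shared facet still lie in $h$, so they drop out of $\boundary^c X'$ just as $s$ dropped out of $\boundary^c X$, keeping the count balanced. I would package this as: the continuous volume $\cv(\boundary^c X)$ equals $\mathcal{H}^{d-c}$ (the $(d-c)$-dimensional Hausdorff measure) of the union of all $(d-c)$-dimensional faces of the boxes in $\content{X}$ that are \emph{not} shared by two boxes whose union with the rest of $\content{X}$ covers a neighbourhood — i.e.\ a purely geometric quantity depending only on $\content{X}$. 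Once this intrinsic characterisation is established, the lemma is immediate: $\content{X} = \content{Q}$ gives $\cv(\boundary^c X) = \cv(\boundary^c Q)$ directly, with no induction on subdivisions needed. I would likely present it via the subdivision induction as the cleaner write-up, since it stays within the paper's box vocabulary, and relegate the Hausdorff-measure remark to intuition.
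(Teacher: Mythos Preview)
Your proposal is correct and follows essentially the same approach as the paper: both establish that $\cv(\boundary^c \cdot)$ is invariant under subdividing a single cell, and then pass from $Q$ to $X$ along a chain of subdivisions. The paper's proof is a one-line assertion of the decomposition $\cv(\boundary^c_Y h) + \cv(\boundary^c_Y (Y\setminus h)) = \cv(\boundary^c_X H) + \cv(\boundary^c_X (X\setminus H))$, whereas you spell out the bookkeeping behind that equality; your detour through a common refinement is harmless but unnecessary, since $X$ already refines $Q$ directly.
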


\begin{proof}
We only need to prove that if $X$ is obtained from $Y$ by subdividing one cell of $Y$, then $\cv( \boundary^c X ) = \cv( \boundary^c Y )$. Suppose $X$ is obtained by subdividing $h \in Y$, with set of children $H \subseteq X$. Then 
\[
    \cv( \boundary^c Y ) = \cv( \boundary^c_Y h ) + \cv( \boundary^c_Y (Y \setminus h) ) = \cv( \boundary^c_X H ) + \cv( \boundary^c_X (X \setminus H) ) = \cv( \boundary^c X).
\]
\end{proof}

The \Define{discrete volume} of $X$ and the \Define{discrete surface} of $X$ with respect to $G$ are as follows:
\begin{align*}
	\dv(X) &:= |X|,\\
	\ds(X,G) &:= |\boundary X|.
\end{align*}
We then define the \Define{discrete surface} of $X$ as 
\begin{equation} \label{equation:ds}
    \ds(X) := \ds( X, \grid(X) ) = \min \{ \ds(X, G) : X \subseteq G \}.
\end{equation}
Considering the grid $\grid(X)$ makes the definition intrinsic: considering grids where the outside of $\content{X}$ is much more finely refined artificially increases the surface, as can be seen in Figure \ref{fig:minimal_grid}. The main topic of this paper is the study of the (discrete) \Define{surface-to-volume ratio} of $X$:
\[
    \dr(X) := \frac{ \ds(X) }{ \dv(X) }.
\]

As an illustration of the discrete and continuous volumes and surfaces, we consider the case where $X = \{g\}$ is a single cell of maximum depth $M$ in a grid $G$. We then have $\boundary^c g = \subcube^c g$ for all $0 \le c \le d$ as no facet of $g$ is subdivided. A simple counting argument then yields
\begin{equation} \label{equation:dvg}
    \dv(\boundary^c g) = \binom{d}{c} 2^c.
\end{equation}
In particular, $\dv(g) = 1$, $\ds(g) = 2d$ and hence $\dr(g) = 2d$.

Let $0 \le c \le d-1$. Since any element $s \in \subcube^c g$ is a $(d-c)$-hypercube of size $k^{-M}$, we have $\cv(s) = k^{-M(d-c)}$ whence
\begin{equation} \label{equation:cvg}
    \cv(\boundary^c g) = \binom{d}{c} 2^c k^{-M(d-c)}.
\end{equation}
In particular, $\cv(g) = k^{-Md}$, $\cs(g) = 2d k^{-M(d-1)}$. It is worth noting that, in view of Lemma \ref{lemma:cvQ}, Equation \eqref{equation:cvg} actually holds for any cell of $G$; on the other hand, Equation \eqref{equation:dvg} does not hold for all cells of $G$ in general.

\begin{application}
 We may read our formalism with the minimum in (\ref{equation:ds}) as a
 superregularisation along the partition boundaries where we forbid a partitioning to cut through the mesh along a mesh
 refinement.
 As such a regularisation is barely found in codes, our estimates are off by
 up to a  factor of $k$ 
  compared to a 2:1-balanced tree or, otherwise, $k^{\ell }$ with $\ell $ being
  the maximum mesh level transition.
\end{application}

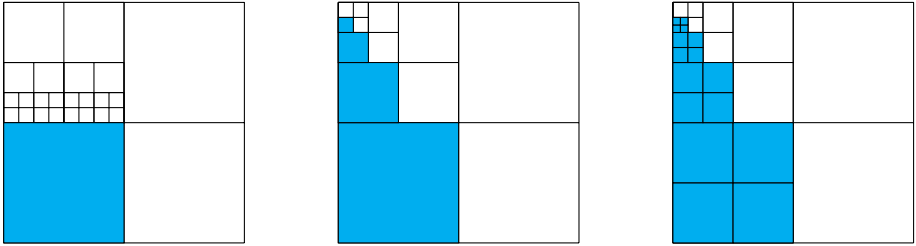
\begin{figure}[htp]
\centering
\begin{tikzpicture}[scale=0.4]
	\draw[step = 4cm] (0,0) grid (8,8);

	\draw[step = 2cm] (0,4) grid (4,8);

	\draw[step = 1cm] (0,4) grid (2,6);
	\draw[step = 1cm] (2,4) grid (4,6);

	\draw[step = 0.5cm] (0,4) grid (1,5);
	\draw[step = 0.5cm] (1,4) grid (2,5);
	\draw[step = 0.5cm] (2,4) grid (3,5);
	\draw[step = 0.5cm] (3,4) grid (4,5);
	
	\filldraw[fill=cyan, draw=black] (0,0) rectangle (4,4);
\end{tikzpicture}
\hspace{1cm}
\begin{tikzpicture}[scale=0.4]
	\draw[step = 4cm] (0,0) grid (8,8);
	\draw[step = 2cm] (0,4) grid (4,8);
	\draw[step = 1cm] (0,6) grid (2,8);
	\draw[step = 0.5cm] (0,7) grid (1,8);

	\filldraw[fill=cyan, draw=black] (0,0) rectangle (4,4);
	\filldraw[fill=cyan, draw=black] (0,4) rectangle (2,6);
	\filldraw[fill=cyan, draw=black] (0,6) rectangle (1,7);
	\filldraw[fill=cyan, draw=black] (0,7) rectangle (0.5,7.5);
\end{tikzpicture}
\hspace{1cm}
\begin{tikzpicture}[scale=0.4]
	\draw[step = 4cm] (0,0) grid (8,8);
	\draw[step = 2cm] (0,0) grid (4,8);
	\draw[step = 1cm] (0,4) grid (2,8);
	\draw[step = 0.5cm] (0,6) grid (1,8);
    \draw[step = 0.25cm] (0,7) grid (0.5,7.5);

	\filldraw[fill=cyan, draw=black] (0,0) rectangle (2,2);
	\filldraw[fill=cyan, draw=black] (0,2) rectangle (2,4);
	\filldraw[fill=cyan, draw=black] (2,0) rectangle (4,2);
	\filldraw[fill=cyan, draw=black] (2,2) rectangle (4,4);
	\filldraw[fill=cyan, draw=black] (0,4) rectangle (1,5);
	\filldraw[fill=cyan, draw=black] (1,4) rectangle (2,5);
	\filldraw[fill=cyan, draw=black] (0,5) rectangle (1,6);
	\filldraw[fill=cyan, draw=black] (1,5) rectangle (2,6);
	\filldraw[fill=cyan, draw=black] (0,6) rectangle (0.5,6.5);
	\filldraw[fill=cyan, draw=black] (0.5,6) rectangle (1,6.5);
	\filldraw[fill=cyan, draw=black] (0,6.5) rectangle (0.5,7);
	\filldraw[fill=cyan, draw=black] (0.5,6.5) rectangle (1,7);
	\filldraw[fill=cyan, draw=black] (0,7) rectangle (0.25,7.25);
	\filldraw[fill=cyan, draw=black] (0.25,7) rectangle (0.5,7.25);
	\filldraw[fill=cyan, draw=black] (0,7.25) rectangle (0.25,7.5);
	\filldraw[fill=cyan, draw=black] (0.25,7.25) rectangle (0.5,7.5);
\end{tikzpicture}

 \caption{
   Left:
   A degenerated partition with only one cell. Without further assumptions about
   the resolution transitions along the boundary, we can construct setups with
   an arbitrary high surface-to-volume ratio. Our ``balancing'' condition along
   the boundary is a cousin to the well-known 2:1-balancing, which we 
   do not enforce globally, i.e.~within partitions.
   Middle:
   Partition with linear number of non-classified cells as formalised in
   Section \ref{sec:classified_partitions}.
   Right:
   One further refinement yields a classified partition.
   \label{fig:minimal_grid}
 }
\end{figure}

\section{Classification} 
\label{sec:classified_partitions}

\subsection{Pre-classified partitions}

\begin{rationale}
 With the terminology in place, we next establish a language over cells that
 tells us how many partition boundary faces are induced by a single cell. We
 \emph{classify} cells by how much they contribute towards a parallel boundary
 and, hence, parallel data exchange.
 This per-cell nomenclature yields a nomenclature for whole partitions.
 Once we know how many boundary faces are introduced by a cell or would be 
 introduced by a further cell if we refined, 
 our overall strategy is to refine always cells with the worst-case face-to-cell
 ratio, i.e.~with the highest class.
 The classification provides us with an abstract language to formalise which
 cells do induce these worst cases.
 We conclude the section with some elementary properties of these cells.
 They later are plugged into the overall algorithm analysis. 
\end{rationale}

Let $P$ be a partition and $g \in P$. The class of $g$ with respect to $P$ is
\[
    \class(g) = \class( g, P ) := \max \{ c : \boundary^c_P g \ne \emptyset \}.
\]
It is the maximum $c$ such that $g$ contains a $(d-c)$-face of $P$: in particular $\class(g) = d$ if $g$ contains a vertex of $P$, while $\class(g) = 0$ if $g$ is in the interior of $P$.

\begin{example} \label{example:P2}
We continue Example \ref{example:P1} by illustrating the concept of class. All three cells in $P$ have class $2$, since $r, p, x \in \boundary^2 P$ belong to $a$, $b$, and $c$ respectively.
\end{example}

A cell $g \in P$ is \Define{pre-classified} (w.r.t. $P$) if for every $p \in \boundary P$, $p \subseteq g$ implies $p \in \subcube g$. In other words, $g$ is pre-classified if no
$(d-1)$-subcube of $g$ on the boundary of $P$ is subdivided. 
A partition is pre-classified if all its cells are pre-classified. It is clear that pre-classified partitions are closed under refinement.

 Our pre-classification implies that the boundary of a partition is not
 subdivided ``on the other side'', i.e.~no adjacent partition subdivides a
 boundary further.
 Therefore, studies of a partition study its worst-case:
 Adjacent partitions might work with coarser grids along the joint boundary and
 thus see a smaller face count, but they never work with a finer resolution
 along $\boundary P$.
 This characterisation is unidirectional.
 If a boundary border for one partition fits to the definition of
 pre-classified, this does not imply that it suits the definition for an
 adjacent partition.

We now introduce the \Define{pre-classification} of $P$, denoted as $\preclassified{P}$. First of all, for any $y \in \boundary P$, let $\hat{y} \subseteq \content{P}$ be the box such that $y \in \subcube \hat{y}$. We then denote $\hat{\boundary P} = \{ \hat{y} : y \in \boundary P \}$. Then the pre-classification of $P$ is obtained from $P$ by refining towards $\boundary P$:
\[
    \preclassified{P} := \grid( \hat{\boundary P} )[P].
\]
The following lemma justifies our terminology.

\begin{lemma} \label{lemma:preclassification}
Let $P$ be a partition, then the following hold.
\begin{enumerate}[label=(\alph*)]
    \item $\preclassified{P}$ is pre-classified.
    
    \item $\depth(\preclassified{P}) = \depth(P)$.
    
    \item A partition is a pre-classified refinement of $P$ if and only if it is a refinement of $\preclassified{P}$.
\end{enumerate}
\end{lemma}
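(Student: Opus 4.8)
The plan is to prove the three claims in order, since each essentially rests on unwinding the definition of $\preclassified{P} = \grid(\hat{\boundary P})[P]$. Throughout I will use that $\grid(\hat{\boundary P})$ is, by the discussion following the definition of minimal grids, the coarsest grid that contains every box $\hat{y}$ with $y \in \boundary P$ as a node, and that $G[P]$ denotes the refinement of $P$ along the cuts dictated by $G$ (i.e.\ each cell of $P$ is subdivided according to how $G$ refines it). The first observation I would record is that $\preclassified{P}$ is indeed a refinement of $P$ that keeps the same content, hence it is a partition on the common refinement grid, so all three statements are well-posed.

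For part (a), I would argue by contradiction: suppose some cell $g \in \preclassified{P}$ is not pre-classified, so there is a facet $p \in \boundary \preclassified{P}$ with $p \subsetneq g$ but $p \notin \subcube g$; then $p$ lies strictly inside some facet $f \in \subcube g$, and $f$ must be subdivided along the boundary. Since $\preclassified{P}$ refines $P$, the cell $g$ sits inside some original cell $h \in P$, and the facet $f$ of $g$ lies inside a facet of $h$, which must therefore carry a piece of $\boundary P$; call the corresponding boundary subcube $y \in \boundary P$ and its associated box $\hat y$. The key point is that $\grid(\hat{\boundary P})$ contains $\hat y$ as a node, so in $\preclassified{P}$ the facet of $h$ through $y$ is refined exactly down to the level of $\hat y$, and no finer — which forces $g$ to have depth at most $\depth(\hat y)$, contradicting $p \subsetneq g$ with $p$ strictly interior to a facet of $g$. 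Making this precise is where I expect to spend the most care: I need to check that refining $P$ along the nodes $\hat y$ never creates a cell whose own facet is subdivided on the boundary, i.e.\ that the ``stopping level'' $\depth(\hat y)$ is exactly right and not off by one. This is the main obstacle, and it hinges on the definition of $\hat y$ as the box with $y \in \subcube \hat y$, so that $\hat y$ is one level finer than the cell of $P$ whose facet contains $y$ — in other words the refinement stops precisely when the offending facet first appears undivided.

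Part (b) should follow once (a) is in hand, by a two-sided inequality. Clearly $\depth(\preclassified{P}) \ge \depth(P)$ since $\preclassified{P}$ refines $P$. For the reverse, every cell of $\preclassified{P}$ either already lies in $P$ or was produced by subdividing some $h \in P$ along the node $\hat y$ of some $y \in \boundary P$; in the latter case its depth is at most $\depth(\hat y) = \depth(h) + 1 \le \depth(P)$ because $y$ is a facet of $h$, hence $\hat y$ has the same depth as $h$... here I must be slightly careful about whether $\hat y$ has depth $\depth(h)$ or $\depth(h)$ itself, and conclude $\depth(\preclassified{P}) \le \depth(P)$. (If it turns out $\hat y$ genuinely has depth $\depth(h)$, the bound is immediate; if $\depth(h)+1$, I instead observe that any cell of $P$ at maximal depth already has its facets undivided on $\boundary P$ and so is untouched.)

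For part (c), the ``if'' direction is the statement that any refinement of $\preclassified{P}$ is pre-classified: this follows from (a) together with the already-noted fact that pre-classified partitions are closed under refinement. For the ``only if'' direction, let $P'$ be a pre-classified refinement of $P$; I must show $P'$ refines $\preclassified{P}$, equivalently (using the characterisation of refinement via $\tree$) that $\tree(\preclassified{P})$ is a subtree of $\tree(P')$, equivalently that every node $\hat y$, $y \in \boundary P$, appears in $\tree(P')$. Fix such a $\hat y$ and let $h \in P$ be the cell with $y \in \subcube h$ a facet carrying $y \in \boundary P$. Since $P'$ refines $P$ and $\content{P'} = \content{P}$, the boundary piece $y$ still lies on $\boundary P'$ (boundary content is preserved), so some cell $g' \in P'$ has $y' \subseteq y$ on its boundary; pre-classification of $P'$ forces the facet of $g'$ through $y'$ to be a genuine subcube of $g'$, which propagates up the tree to show the facet of $h$ through $y$ is subdivided in $P'$ exactly enough that $\hat y$ is a node of $\tree(P')$. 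Assembling these for all $y$ gives $\grid(\hat{\boundary P}) \wedge P \preceq P'$, i.e.\ $\preclassified{P} \preceq P'$ (using $\grid(X) = \bigwedge_{x\in X}\grid(x)$ and the interaction of $[\cdot]$ with common refinement), which is the claim.
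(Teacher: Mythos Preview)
Your overall strategy matches the paper's, and parts (a) and (c) are essentially the same argument, with (c) actually spelled out in more detail than the paper bothers with. There is, however, a concrete error in your part (b) that you should fix.

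You write that a cell of $\preclassified{P}$ produced by refining some $h \in P$ toward $\hat y$ has depth at most $\depth(\hat y) = \depth(h)+1$ (or perhaps $\depth(h)$), ``because $y$ is a facet of $h$''. But $y$ is \emph{not} in general a facet of $h$: it is an element of $\boundary P$, i.e.\ a $(d-1)$-subcube in the grid $\grid(P)$, and it can be a strict fragment of a facet of $h$ whenever the neighbouring cell outside $P$ is much finer than $h$. In that situation $\depth(\hat y)=\depth(y)$ can exceed $\depth(h)$ by far more than one. The correct bound, and the only one you need, is simply
\[
\depth(\hat y)=\depth(y)\le \depth(\grid(P))=\depth(P),
\]
since every element of $\boundary P$ is a subcube of a cell of $\grid(P)$. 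With this the inequality $\depth(\preclassified{P})\le\depth(P)$ is immediate, and your backup observation about maximal-depth cells is not needed. The paper's one-line proof of (b) is exactly this: $\depth(\preclassified{P})=\depth(\grid(\hat{\boundary P}))=\depth(\hat{\boundary P})=\depth(P)$.

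For part (a), note that the paper argues directly rather than by contradiction: for $g\in\preclassified{P}$ and any facet $y'\in\subcube g$ lying on the boundary, one has $y'\subseteq y$ for some $y\in\boundary P$, hence $g\subseteq\hat y$ (since $\hat y$ is a node of $\grid(\hat{\boundary P})$ and $g$ is a leaf meeting it), which forces $y'=g\cap y\in\boundary\preclassified{P}$. This sidesteps the ``off by one'' worry you flag, because it never compares $\depth(\hat y)$ to $\depth(h)$ at all.
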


\begin{proof}
~
\begin{enumerate}[label=(\alph*)]
    \item Let $g \in \preclassified{P}$. By construction, for any $y' \in \subcube g$ with $y' \subseteq \content{ \boundary P }$, there exists $y \in \boundary P$ such that $y' \subseteq y$. Therefore $g \subseteq \hat{y}$, from which $y' = g \cap y \in \boundary \preclassified{P}$ and $g = \hat{y'}$. Thus, no subcube of $g$ on the boundary of $P$ is subdivided, and $g$ is pre-classified.

    \item By construction, $\depth(\preclassified{P}) = \depth(  \grid( \hat{\boundary P} ) ) = \depth(\hat{\boundary P}) = \depth(P)$.

    \item Let $R$ be a pre-classified refinement of $P$, then the set of nodes of $R$ must contain $\hat{\boundary P}$, thus $R$ refines $\preclassified{P}$. 
\end{enumerate}
\end{proof}

\begin{example} \label{example:P3}
We continue Example \ref{example:P1} by constructing the pre-classification $\preclassified{P}$ of the partition $P$ given in Figure \ref{fig:P}. In this case, $a$ is pre-classified, but $b$ is not since $\zeta \in \boundary P$ satisfies $\zeta \subset \xi \in \subcube b$; similarly, $c$ is not pre-classified because $\iota, \kappa \subset \pi$. Therefore, $\preclassified{P}$ is obtained by refining towards 
$\hat{\zeta} = b_3$, $\hat{\iota} = c_5$, and $\hat{\kappa} = c_7$.
The result is displayed in Figure~\ref{figure:preclassifiedP}.
\end{example}

 Our analysis focuses on one partition only and hence is satisfied with the
 partition-centric definition of pre-classified where we impose refinement
 constraints on all adjacent partitions of a partition along the boundary.
 In practice, one will make statements over all partitions of a domain.
 While our formalism makes this constraint a unidirectional one---it holds only
 for the partition of study---the application of the formalism to all partitions
 within a grid thus implies that mesh partition boundaries cut only through
 regular meshes, i.e.~the mesh ``left'' and ``right'' from a cut-through has the
 same result.


\begin{lemma} \label{lem:class_facets}
Let $g$ 
be a pre-classified cell of class $c$ within a
partition $P$. Then $g$ contains at least $c$ facets in $\boundary P$. 
Moreover, if $g$ contains at least $c+1$ facets of
$\boundary P$, then $g$ contains a pair of parallel facets of $\boundary P$.
\end{lemma}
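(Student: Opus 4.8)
The plan is to prove the two assertions separately; the first is direct and the second is by contradiction.

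For the first assertion: since $\class(g)=c$, there is some $s\in\boundary^c_P g$, and by definition $s$ lies in a $(d-c)$-face $y\in\subcube^c g$ of $g$. Such a face is obtained by fixing $c$ of the coordinates of $g$ to boundary values, so $y=f_1\cap\dots\cap f_c$ where $f_1,\dots,f_c\in\subcube g$ are $c$ pairwise distinct (indeed pairwise non-parallel) facets of $g$, one per fixed coordinate. First I would check each $f_i\in\boundary^1_P g$: otherwise $f_i\subseteq h$ for some $h\in P\setminus g$, and then $s\subseteq y\subseteq f_i\subseteq h$ contradicts $s\in\boundary^c_P g$. Since $\boundary^1_P g\subseteq\boundary P$ and each $f_i$ is a whole facet of $g$ with $f_i\subseteq g$, this already exhibits $c$ facets of $\boundary P$ inside $g$.

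For the second assertion: suppose $g$ has at least $c+1$ facets in $\boundary P$ and, for contradiction, that no two of them are parallel. Then I can choose $c+1$ of them, $f_1,\dots,f_{c+1}$, fixing $c+1$ pairwise distinct coordinates, so their intersection $x':=f_1\cap\dots\cap f_{c+1}\in\subcube^{c+1} g$ is a $(d-c-1)$-face of $g$. It is enough to produce some $s'\in\boundary^{c+1}_P g$ with $s'\subseteq x'$, because this gives $\class(g)\ge c+1$, contradicting $\class(g)=c$. The candidate I would try is a sufficiently fine sub-box of $x'$ sitting at a corner of $g$ (when $c+1=d$ the face $x'$ is a single vertex of $g$ and the candidate is forced to be $x'$ itself). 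I would then show that no $h\in P\setminus g$ contains $s'$: any such $h$ must lie strictly outside $g$ in at least one of the coordinates $i\le c+1$ while being aligned with $g$ in the others, hence $h$ meets $g$ along a $(d-1)$-dimensional part of the corresponding $f_i$ and covers the $x'$-end of that facet.

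Pre-classification is what makes the last analysis tractable, through the following dichotomy for a pre-classified cell: for every facet $f\in\boundary^1_P g$, either no cell of $P\setminus g$ meets $g$ along a $(d-1)$-dimensional part of $f$, or $f$ is tiled completely by cells of $P$ strictly smaller than $g$ — because otherwise a proper sub-facet of $g$ would lie in $\boundary P$, violating pre-classification. Using this one can, direction by direction, constrain where a cell $h$ covering $s'$ could sit. The main obstacle is the simultaneous bookkeeping over the $c+1$ mutually perpendicular directions: one has to rule out, for instance, configurations in which some $f_i$ is tiled, some other $f_j$ is a genuine boundary facet, and a diagonally placed cell of $P$ plugs the corner region $x'$. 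I expect this to be the crux, and it is the point at which one must invoke that $P$ is a contiguous segment of a DSFC — equivalently, the recursive shape structure of SFC partitions, which constrains the local shape of $\content{P}$ around any of its cells. Everything else — the first assertion, the reduction of the second to $\boundary^{c+1}_P g\neq\emptyset$, and the single-facet dichotomy — is routine.
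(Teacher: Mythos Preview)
Your first paragraph is exactly the paper's argument for the first assertion: pick a witness $u\in\boundary^c_P g$ sitting inside the $(d-c)$-subcube $y_1\cap\dots\cap y_c$ of $g$, and observe that each containing facet $y_i$ lies in $\boundary_P g$, since otherwise $u\subseteq y_i\subseteq h$ for some $h\in P\setminus g$.

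For the second assertion your route diverges from the paper's and becomes far harder than it needs to be. You discard the witness $u$ and restart with $c+1$ \emph{arbitrary} pairwise non-parallel boundary facets; this forces you to manufacture an element of $\boundary^{c+1}_P g$ from scratch, which---as you correctly sense---is delicate, and leads you to talk of tiled versus genuine facets, diagonally placed cells plugging corners, and ultimately the DSFC structure. The paper sidesteps all of this by \emph{keeping} $u$. The $c$ facets produced in the first half are not arbitrary: they are precisely the $y_1,\dots,y_c$ through $u$. So any further boundary facet $f$ of $g$ is either some $z_i$ with $i\le c$ (parallel to $y_i$, done), or $y_j$ or $z_j$ for some $j>c$; in the latter case $f\cap u$ is a $(d-c-1)$-subcube of $g$ that the paper places directly in $\boundary^{c+1}_P g$, contradicting $\class(g)=c$. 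The entire second half is a three-line case split, and the SFC property of $P$ is never invoked. The simplification you are missing is purely organisational: anchor the extra facet $f$ against the already-existing $u$ rather than against $c$ freshly chosen facets.
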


\begin{proof}
Let $g$ be a cell of $P$ of depth $l$ and suppose it contains $u \in \boundary^c P$. Denote $g = [a_1, b_1] \times \dots \times [a_d, b_d]$, then its $(d-1)$-subcubes are of the form $y_i = \{ x \in g : x_i = a_i \}$ or $z_i = \{ x \in g : x_i = b_i \}$ for some $1 \le i \le d$. Without loss, let $u = \{ x \in g : x_1 = a_1, \dots, x_c = a_c \} \in \boundary^c P$. Since $g$ is the unique cell in $P$ containing $u$, then by continuity, $g$ is the unique cell in $P$ containing all the facets $y_i$ for all $1 \le i \le c$ that contain $u$.

Suppose $g$ contains another facet $f \in \boundary P$, then we need to consider three cases. Firstly, if $f = z_i$ for some $1 \le i \le c$, then $y_i$ and $f$ are parallel facets. Secondly, if $f = y_j$ for some $c+1 \le j \le d$, then $g$ contains $f \cap u = \{x \in u : x_j = a_j\} \in \boundary^{c+1} P$, which contradicts the class of $g$. Thirdly, the proof is similar if $f = z_j$ for $c+1 \le j \le d$. 
\end{proof}

\begin{figure}[htb]
 \begin{center}
  \includegraphics[width=0.85\textwidth]{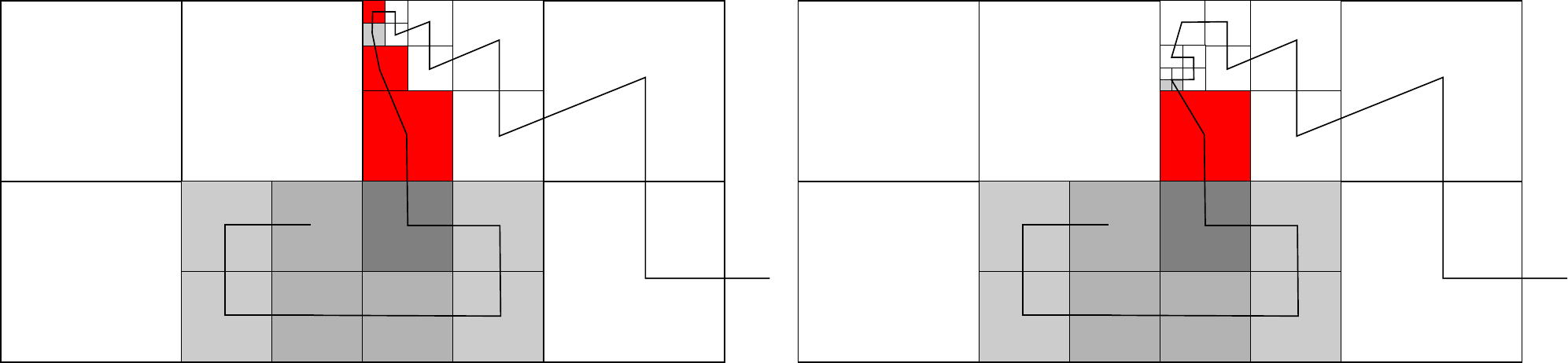}
 \end{center}
 \caption{
  Left:
  A pre-classified partition which is not classified as three cells of class two
  have three domain boundary faces (highlighted). 
  For the grey partition cells, darker means lower cell class.
  Right:
  A second example for pre-classified partitions which are not classified.
  This second example highlights that we do not require 2:1-balancing.
  \label{figure:classified-partitions:not-classified-partitions}
 }
\end{figure}

 \noindent
%
 If a tower of single cells sticks out of a partition and if we refine around
 that ``appendix'' to obtain a pre-classified partition, the cells that stick
 out can host more than $d$ parallel boundary faces and hence navigate us
 into a situation where we require a greater equals (``at least'') rather than
 equals in the Lemma \ref{lem:class_facets} (Figure
 \ref{figure:classified-partitions:not-classified-partitions}).
 The ``stick-out'' effect becomes more significant for higher dimensions but is
 obviously bounded by the mesh depth, i.e.~the number of refinement steps when we construct the
 grid.
 Per resolution level, the number of cells sticking out furthermore is naturally
 bounded through the compactness of the SFC: 
 All SFCs run forth-and-back, i.e.~run in one direction at most $k$ times
 ($k=2$ for Hilbert, $k=3$ for Peano, e.g.), to meet the H\"older continuity in
 the limit.

\subsection{Classified partitions}

A cell $g \in P$ is \Define{classified} if $g$ is pre-classified and contains exactly
$\class(g,P)$ facets. A partition is classified if all its cells are classified. The definition identifies a subset of partition boundary cells
following Lemma \ref{lem:class_facets}. By definition, if $P$ is classified,
then
\[
	\ds(P) = \sum_{g \in P} \class(g, P).
\]

\noindent
 We benefit from the definition that inner cells have class 0.
 It is clear that classified partitions are closed under subdivision.

\begin{lemma} \label{lem:sv_subdivision}
Subdividing a cell of class $c$ of a classified partition $P$ yields a classified partition $P'$ with discrete volume and surface
\begin{align*}
	\dv(P') &= \dv(P) + k^d - 1\\
	\ds(P') &= \ds(P) + c(k^{d-1} - 1).
\end{align*}
\end{lemma}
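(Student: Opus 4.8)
The volume identity needs no work: subdividing $g$ replaces one cell by its $k^d$ children, so $\dv(P') = \dv(P) - 1 + k^d$. For the surface, the plan is to use the fact that a classified partition $Q$ satisfies $\ds(Q) = \sum_{h \in Q} \class(h,Q)$, so it suffices to prove that $P'$ is classified and then to account for the change $\sum_{h \in P'}\class(h,P') - \sum_{h \in P}\class(h,P)$ cell by cell.

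First I would show that the cells $h \in P \setminus \{g\} = P' \setminus C_g$ are unaffected, i.e.\ $\boundary^e_{P'} h = \boundary^e_P h$ for every $e$, so $\class(h,P') = \class(h,P)$ and $h$ stays classified. The point is that the only $(d-e)$-faces $s \subseteq h$ whose status can change are those lying on $\boundary g$: such an $s$ either already separates $h$ from a cell outside $P$ (then $g$ is not the cell across $s$, so refining $g$ changes nothing), or it is covered by $g$ or by cells of $P$ on its far side (then after refinement its pieces are covered by children of $g$ together with the same other cells of $P$, all lying in $P'$). Either way $s$ contributes to $\boundary^e_{P'} h$ exactly when it contributed to $\boundary^e_P h$.

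The technical heart is the class of each child of $g$. Since $P$ is classified, $g$ is classified, hence has exactly $c = \class(g,P)$ facets on $\boundary P$; by the proof of Lemma~\ref{lem:class_facets}, after relabelling the coordinate axes these are the ``lower'' facets $y_1,\dots,y_c$ of $g$ in dimensions $1,\dots,c$, pairwise non-parallel and meeting in a common $(d-c)$-face $u \in \boundary^c_P g$. Since $g$ is pre-classified, none of $y_1,\dots,y_c$ is subdivided on its far side; moreover each of the remaining $2d-c$ facets of $g$ is entirely covered by cells of $P$ on its far side --- otherwise an uncovered part would be a strict sub-face of a facet of $g$ lying on $\boundary P$ (contradicting pre-classification), and a facet meeting no cell of $P$ on its far side would itself lie on $\boundary P$ (contradicting $\class(g,P) = c$). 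Now fix a child $h$ of $g$ with tuple $(j_1,\dots,j_d) \in [k]^d$ and set $S = \{ i \le c : j_i = 0 \}$. A facet of $h$ interior to $g$ is shared with a sibling in $P'$; a facet of $h$ lying on a facet of $g$ other than $y_1,\dots,y_c$ is covered by $P \subseteq P'$ on its far side; and a facet of $h$ lying on $y_i$ ($i \le c$) fails to be covered by a cell of $P'$ precisely when $j_i = 0$. Hence $h$ has exactly $|S|$ facets on $\boundary P'$, all of them genuine facets of $h$, so $h$ is pre-classified; the $(d-|S|)$-face of $h$ at the lower corner of $g$ in the dimensions of $S$ lies in $\boundary^{|S|}_{P'} h$ by the same reasoning (no sibling and no cell of $P$ contains it), and counting the fixed coordinates of any putative face of $\boundary^{c'}_{P'} h$ forces $c' \le |S|$; thus $\class(h,P') = |S|$ and $h$ is classified.

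Putting the two cases together, $P'$ is classified and
\[
  \ds(P') = \sum_{h \in P \setminus \{g\}} \class(h,P) \;+\; \sum_{(j_1,\dots,j_d) \in [k]^d} \bigl|\{ i \le c : j_i = 0\}\bigr| = \bigl(\ds(P) - c\bigr) + c\,k^{d-1},
\]
since for each $i \le c$ there are exactly $k^{d-1}$ children with $j_i = 0$; this is the claimed $\ds(P') = \ds(P) + c(k^{d-1} - 1)$. I expect the main obstacle to be the upper bound $\class(h,P') \le |S|$ for the children: that is where one must use the pre-classification of $g$ to guarantee that every face of a child not sitting on one of $y_1,\dots,y_c$ is absorbed by a genuine cell of $P$ on its far side (rather than by a sub-piece of one), and hence cannot contribute to the boundary of $P'$.
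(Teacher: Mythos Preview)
Your proposal is correct and follows essentially the same approach as the paper: both reduce to computing $\class(h,P')=|\{i\le c:j_i=0\}|$ for each child $h=(j_1,\dots,j_d)$ of $g$ and summing. You are more careful than the paper in verifying that the other cells of $P$ keep their class and that $P'$ is classified (the paper simply asserts the class formula for the children), and your final summation via linearity (``for each $i\le c$ there are $k^{d-1}$ children with $j_i=0$'') is a clean alternative to the paper's binomial-identity computation $\sum_a a\binom{c}{a}(k-1)^{c-a}k^{d-c}=ck^{d-1}$, but these are cosmetic differences.
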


\begin{proof}
The volume of $P'$ is clear.
For the surface, without loss let $g = [0, k^{-l}] \times \dots \times [0,
k^{-l}]$ be a cell of class $c$ in $P$ (with depth $l$). Without loss 
of generality assume that the element of $\boundary^c_P g$ is $\{  z \in g : z_1 = \dots = z_c = 0 \}$. Then subdividing $g$ yields $k^d$ cells 
\[
    g_x := [x_1 k^{-l-1}, (x_1+1) k^{-l-1} ] \times \dots \times [x_d k^{-l-1}, (x_d + 1) k^{-l-1}],
\]
one for every $x = (x_1, \dots, x_d) \in [k]^d$. Since $\class(g_x) = |\{ i : 1 \le i \le c, x_i = 0 \}|$, we obtain 
\[
	|\{x : \class(g_x) = a\}| = \binom{c}{a} (k-1)^{c-a} k^{d-c}
\]
and the surface of the subdivided cell is
\[
	\sum_{a=0}^d a \binom{c}{a} (k-1)^{c-a} k^{d-c} = c k^{d-1},
\]
thus yielding $\ds(P') = \ds(P) - c + c k^{d-1}$. 
\end{proof}


\begin{rationale}
  We next handle the degenerated cell appendices from Lemma
  \ref{lem:class_facets} within our partitions. 
  Indeed, one single further refinement around a given partition always allows
  us to end up with a partition without the degenerated appendices.  
  Removing degenerated appendices, i.e.~cells with too many partition surface
  faces for their class, allows us to establish sharp estimates.
\end{rationale}

\noindent For any partition $P$, the \Define{classification} of $P$, denoted as
$\classified{P}$, is the refinement of $\preclassified{P}$ obtained by subdividing those cells from
$\preclassified{P}$ that are not classified.

\begin{example} \label{example:P4}
We continue Example \ref{example:P1} by giving the classification of $P$. In $\preclassified{P}$, all cells $b_1, \dots, b_4$ and $c_1, \dots, c_7$ are classified. However, $a$ is not classified, as $\class(a, \preclassified{P}) = 2$ and yet $a$ contains three facets of $\preclassified{P}$, namely $\beta$, $\epsilon$, and $\eta$. Thus, $\classified{P}$ is obtained by simply subdividing $a$. The result is displayed in Figure~\ref{fig:classifiedP}.

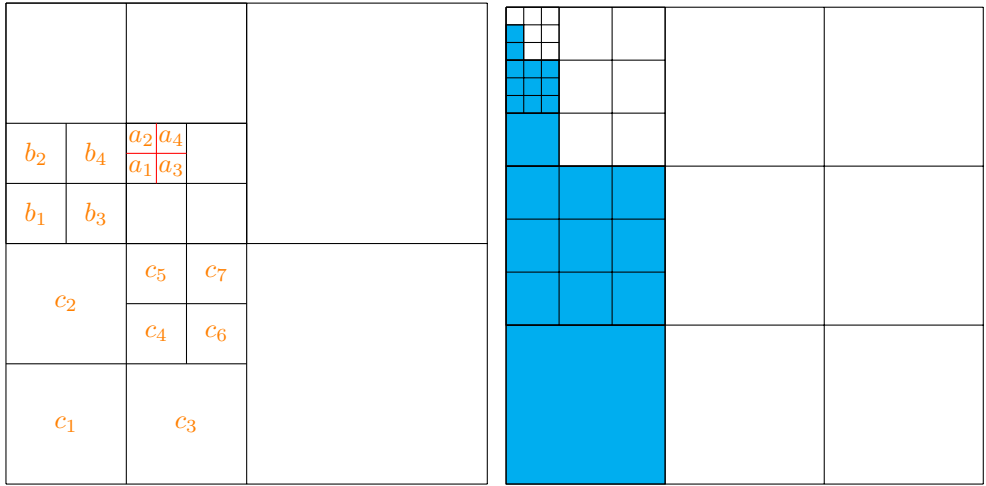
\begin{figure}
\centering
\begin{tikzpicture}[scale=0.8]
\begin{scope}[scale=4]
    \draw (0,0) grid (2,2);
\end{scope}

\begin{scope}[scale=2]
    \draw (0,2) grid (2,4);
\end{scope}

\begin{scope}[scale=1]
    \draw (2,4) grid (4,6);
\end{scope}

\draw (0,2) -- (4,2);
\draw (2,0) -- (2,4);

\draw (0,5) -- (2,5);
\draw (1,4) -- (1,6);

\draw (3,2) -- (3,4);
\draw (2,3) -- (4,3);

\begin{scope}[color=orange]
\draw (2.25, 5.25) node {$a_1$};
\draw (2.25, 5.75) node {$a_2$};
\draw (2.75, 5.25) node {$a_3$};
\draw (2.75, 5.75) node {$a_4$};
\draw (1,1) node {$c_1$};
\draw (1,3) node {$c_2$};
\draw (3,1) node {$c_3$};
\draw (2.5,2.5) node {$c_4$};
\draw (2.5,3.5) node {$c_5$};
\draw (3.5,2.5) node {$c_6$};
\draw (3.5,3.5) node {$c_7$};
\draw (0.5,4.5) node {$b_1$};
\draw (0.5,5.5) node {$b_2$};
\draw (1.5,4.5) node {$b_3$};
\draw (1.5,5.5) node {$b_4$};
\end{scope}

\begin{scope}[color=red]
\draw (2.5,5) -- (2.5,6);
\draw (2,5.5) -- (3,5.5);
\end{scope}
\end{tikzpicture}
\begin{tikzpicture}[scale=0.235]

	\filldraw[fill=cyan, draw=black] (0,0) rectangle (9,9);
	\filldraw[fill=cyan, draw=black] (0,9) rectangle (9,18);
	\filldraw[fill=cyan, draw=black] (0,18) rectangle (3,21);
	\filldraw[fill=cyan, draw=black] (0,21) rectangle (3,24);
	\filldraw[fill=cyan, draw=black] (0,24) rectangle (1,25);
    \filldraw[fill=cyan, draw=black] (0,25) rectangle (1,26);

    \begin{scope}[scale=9]
    	\draw[step = 1cm] (0,0) grid (3,3);
    \end{scope}
    
    \begin{scope}[scale=3]
    	\draw[step = 1cm] (0,3) grid (3,6);
    	\draw[step = 1cm] (0,6) grid (3,9);
    \end{scope}

    \begin{scope}[scale=1]
    	\draw[step = 1cm] (0,21) grid (3,24);
    	\draw[step = 1cm] (0,24) grid (3,27);
    \end{scope}

\end{tikzpicture}
\caption{
 Left: 
 The classification $\classified{P}$ of the partition $P$ on Figure
 \ref{fig:P}. Right:
 Pre-classified partition with linear number of non-classified cells}
 \label{fig:classifiedP}
 \label{fig:linear-non-classified}
\end{figure}

The cells of $\classified{P}$ are listed according to their class as follows:\\
\begin{center}
\begin{tabular}{c|c}
    Class & Cells \\
    \hline
    $0$ & $c_4$\\
    $1$ & $a_1, a_2, b_1, b_3, b_4, c_2, c_5, c_6$ \\
    $2$ & $a_3, a_4, b_2, c_1, c_3, c_7$ \\
\end{tabular}
\end{center}
\end{example}


 The following theorem shows that we can focus on classified
partitions, as long as the cells within a partition do
\emph{not} grow linearly with the mesh depth.
That is, as long as the cell counts grows superlinearly. Its proof is given at the end of this section.

\begin{theorem} \label{th:classification}
For any partition $P$ of depth $M$,
\begin{align*}
	\dv(\classified{P}) &= \dv(P) + O(M),\\
	\ds(\classified{P}) &= \ds(P) + O(M).
\end{align*}
\end{theorem}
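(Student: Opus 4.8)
The strategy is to show that $\classified{P}$ is obtained from $P$ by only $O(M)$ cell subdivisions, and then translate this into the two estimates. The translation is the easy part. Every subdivision replaces one cell by $k^d$ cells, so it increases $\dv$ by exactly $k^d-1$; and, since subdividing a cell only alters the boundary faces lying inside that cell (multiplying their number by at most $k^{d-1}$) while $\ds$ is monotone under refinement, every subdivision changes $\ds$ by at most a constant depending only on $d$ and $k$ — compare Lemma~\ref{lem:sv_subdivision} for the exact value on classified partitions. Hence, if $N$ subdivisions turn $P$ into $\classified{P}$, then $\dv(\classified{P}) = \dv(P) + N(k^d-1)$ and $\ds(P) \le \ds(\classified{P}) \le \ds(P) + O(N)$, so everything reduces to proving $N = O(M)$.

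I would split $N = N_1 + N_2$, where $N_1$ counts the subdivisions in the passage $P \to \preclassified{P}$ (refining towards $\boundary P$) and $N_2$ those in $\preclassified{P} \to \classified{P}$ (subdividing the cells of $\preclassified{P}$ that are not classified). For $N_2$: by definition $N_2$ is the number of non-classified cells of $\preclassified{P}$, and by Lemma~\ref{lem:class_facets} each such cell contains a pair of parallel facets of $\boundary P$, i.e.\ sits in a part of $\content{P}$ that is one cell thick in some axis direction. For $N_1$: a cell $g \in P$ contributes exactly when it is not pre-classified, which forces some facet of $g$ to be split in $\grid(P)$ with a piece on $\boundary \content{P}$; pre-classifying such a $g$ costs a chain of subdivisions reaching the depth of the finest boundary piece on its facets.

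The substance of the proof is therefore the geometric claim that, for an SFC partition $P = \{g_i,\dots,g_j\}$ of depth $M$, the total subdivision work just described is $O(M)$. Here the space-filling property is essential: because the cells inside any node are consecutive along the curve, at each level $\ell$ at most two nodes of $\grid(P)$ have a cell-range straddling an endpoint of the interval $[i,j]$, and these $O(M)$ ``straddling'' nodes carry all of $\boundary \content{P}$. One then argues that every non-pre-classified cell, and every one-cell-thick cell of $\preclassified{P}$, can be charged to one of these straddling nodes with $O(1)$ charge per node, and — crucially — that the chain of subdivisions pre-classifying a cell charged to a straddling node descends only $O(1)$ levels below it, because the forth-and-back (Hölder) compactness of the curve bounds the ``overhang'' of refinement across each node by a constant number of levels. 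Summing over the $O(M)$ straddling nodes yields $N = O(M)$.

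The main obstacle is precisely this last claim. It is tempting to fear a configuration in which a coarse cell of $P$ abuts a portion of the grid refined all the way to depth $M$, so that pre-classifying it costs $\Theta(M)$ subdivisions, with $\Theta(M)$ such cells producing a $\Theta(M^2)$ blow-up — or, worse, an exponentially branching ``tree of staircases''. Ruling these out is exactly where one must invoke, beyond the bare space-filling property, the continuity of the curve (consecutive cells are face-adjacent) and the resulting bound on how far and how often the curve switches scales around a fixed node; turning this into the $O(1)$-charge-per-straddling-node bookkeeping is the technical core of the theorem, and the place where I expect the argument to be most delicate.
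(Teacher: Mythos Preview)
Your overall architecture is right: count subdivisions $N=N_1+N_2$ for the two stages $P\to\preclassified{P}\to\classified{P}$, and show $N=O(M)$. The identification of ``straddling nodes'' is also essentially the paper's Lemma~\ref{lem:grid_of_shape} (at most $2k^d$ cells of $\grid(Q)$ at each depth, so $O(M)$ in total). Where your proposal diverges from the paper, and where it goes wrong, is in the mechanism you propose for bounding $N_1$.

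The claim that ``the chain of subdivisions pre-classifying a cell charged to a straddling node descends only $O(1)$ levels below it'' is false, and the appeal to H\"older continuity is a red herring. Take a shape $Q=\{q_1,q_2\}$ with $\depth(q_1)=M$ and $\depth(q_2)=1$, consecutive along the curve (so $q_1$ sits against a facet $F$ of $q_2$). In $\grid(Q)$ the facet $F$ of $q_2$ is subdivided down to depth $M$ by the siblings of the ancestors of $q_1$; pre-classifying $q_2$ therefore requires $\Theta(M)$ subdivisions of that one cell. No amount of H\"older-type compactness prevents this. The paper's argument is not per-node but \emph{global}: Lemma~\ref{lem:dv_one_cell} gives $\nu_g\le C\,d_g$ where $d_g$ counts boundary facets in $g$, and then one sums $\sum_g d_g\le 4dn=O(M)$ over all cells of $\grid(Q)$. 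The total is $O(M)$ even though individual terms are not $O(1)$. Only the space-filling property is used; continuity of the curve plays no role here.

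You also miss the reduction that makes the whole thing clean: the paper first proves $\classified{P}=\classified{Q}\land P$ (Lemma~\ref{lem:P*}\ref{it:P*_Q*}), whence $\dv(\classified{P})\le\dv(P)+\dv(\classified{Q})$. This moves the entire problem from $P$ (which can have arbitrarily many cells) to its shape $Q$ (which has $O(M)$ cells), and then bounding $\dv(\preclassified{Q})$ and $\dv(\classified{Q})$ by $O(M)$ is a matter of counting adjacencies in the small grid $\grid(Q)$. Your direct attack on $P$ can be made to work, but it amounts to rediscovering this reduction implicitly, and without it the charging scheme you sketch does not go through as stated.
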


 \noindent
 As long as we assume that the finest resolution of the mesh can be found along
 a domain's boundary, we know that the surface $\ds$ is bounded from
 below by $C \sqrt[d]{\dv}$ and from above by $C dk^{-1}$ (see introductory
 example); $C$ is a generic constant.
 The other way
 round, we can say that the volume $\dv$ grows at least linearly in depth. 
 Otherwise we would not hit the upper bound $C dk^{-1}$ from the introduction.
 Without the constraint that the finest mesh has to be found at the boundary, we
 can construct arbitrarily advantageous surface-to-volume ratios.
 As soon as the volume grows faster than linearly,
 the theorem clarifies that we can study the classification of a partition.
 The linear term $O(M)$ quantifies how much we are off from the real data. It 
 eventually disappears with growing mesh depth.

We obtain a further interesting corollary: if the volume
of a partition is superlinear in the depth, then almost all of its cells are classified.

\begin{corollary} \label{cor:non-classified}
Any partition has $O(M)$ non-classified cells.
\end{corollary}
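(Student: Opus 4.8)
\emph{Proof proposal.} The plan is to deduce the corollary directly from Theorem~\ref{th:classification}. Let $N$ denote the number of cells of $P$ that are not classified, and recall that $\classified{P}$ is produced in two stages: first one refines $P$ towards $\boundary P$ to obtain the pre-classification $\preclassified{P}$ (subdividing the boxes that properly contain an element of $\hat{\boundary P}$), and then one subdivides the cells of $\preclassified{P}$ that are not classified. The heart of the argument is the claim that \emph{every} non-classified cell $g$ of $P$ is subdivided at least once during this construction, so that it is not a cell of $\classified{P}$. Granting the claim: distinct cells of $P$ have disjoint interiors, so these subdivisions occur in pairwise disjoint regions, each being a distinct subdivision event that creates $k^d-1$ fresh cells; hence $\dv(\classified{P}) \ge \dv(P) + N(k^d-1)$. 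Since Theorem~\ref{th:classification} gives $\dv(\classified{P}) = \dv(P) + O(M)$, we obtain $N(k^d-1) = O(M)$, and as $k \ge 2$ this is exactly $N = O(M)$.

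It remains to prove the claim. If $g$ is not pre-classified, then some $\hat{y}$ with $y \in \boundary P$ satisfies $\hat{y} \subsetneq g$, so the box $g$ is subdivided while building $\grid(\hat{\boundary P})$; hence $g$ is not a cell of $\preclassified{P}$, and therefore not of $\classified{P}$. If $g$ is pre-classified but not classified, I would first note that $g$ survives into $\preclassified{P}$: since $g$ is pre-classified, no box of $\hat{\boundary P}$ lies strictly inside $g$, so the refinement towards $\boundary P$ does not subdivide $g$, and $g$ is a cell of the pre-classified partition $\preclassified{P}$. Next, sharpening the proof of Lemma~\ref{lem:class_facets}: a pre-classified cell that is classified, of class $c$, has exactly the $c$ facets through its distinguished $(d-c)$-face on the boundary, and these are pairwise non-parallel; conversely, a pre-classified cell carrying a pair of parallel boundary facets cannot be classified. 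Now $g$ is pre-classified and non-classified with respect to $P$, so it has more than $\class(g,P)$ facets on $\boundary P$, hence by Lemma~\ref{lem:class_facets} a parallel pair of such facets. Every facet of $g$ lying on $\boundary P$ still lies on $\boundary \preclassified{P}$, because refining other cells cannot cover a facet of $g$ and $g$ itself is untouched; so $g$ still carries a parallel pair of facets of $\boundary \preclassified{P}$, and being pre-classified in $\preclassified{P}$ it is therefore not classified there. Consequently $g$ is subdivided in the second stage and is not a cell of $\classified{P}$, which completes the claim.

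The main obstacle is this second case, and concretely the two ingredients used in it: that refining the partition towards $\boundary P$ never upgrades a formerly non-classified cell to classified, and the parallel-pair characterisation of pre-classified classified cells that makes this robust under the fact that the boundary can only grow. Both should fall out of a careful re-reading of the proof of Lemma~\ref{lem:class_facets}. As a fallback, one can instead observe that a cell with a parallel pair of boundary facets lies inside a one-cell-wide appendix protruding from $P$, and that a DSFC crosses any axis-parallel line only boundedly often per refinement level, which bounds the number of such cells per level and hence by $O(M)$ overall; this is essentially the bookkeeping already carried out in the proof of Theorem~\ref{th:classification}, so the corollary may also be read off directly from there.
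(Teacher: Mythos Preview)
Your argument is correct and is essentially the approach the paper intends: the corollary is stated immediately after Theorem~\ref{th:classification} without a separate proof, and your deduction---every non-classified cell of $P$ is subdivided in forming $\classified{P}$, hence contributes at least $k^d-1$ to $\dv(\classified{P})-\dv(P)=O(M)$---is precisely the intended reading. Your case analysis (non-pre-classified cells are split when building $\preclassified{P}$; pre-classified non-classified cells survive into $\preclassified{P}$, still carry a parallel pair of boundary facets, and are split there) is sound, and the parallel-pair characterisation you extract from Lemma~\ref{lem:class_facets} is exactly right.
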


We claim that when discussing the maximum surface-to-volume ratio, one should consider classified partitions only. 
In the example illustrated in Figure \ref{fig:minimal_grid}, the partition $P$ is highlighted in blue, and is given in its minimal grid. Consider the topmost $1$-subcube $y$ of the large blue cell $g$. Even though $y$ is a single element of $\subcube g$, it contains seven elements of $\boundary P$. This contribution then brings the surface-to-volume ratio to $13/2$. This example can be trivially generalised, and we obtain an unbounded surface-to-volume ratio when considering cells that are not classified. This phenomenon occurs because the exterior of the partition is not ``too refined''  for that cell, and as such the cell intersects many small cells outside of the partition. This is a phenomenon similar to that illustrated in Figure \ref{fig:minimal_grid}, and as such we make the choice to disregard it.

%
%
%
%

On the other hand, we have shown in Corollary \ref{cor:non-classified} that there are only $O(M)$ non-classified cells. This is tight, for the example in Figure \ref{fig:minimal_grid} shows a partition with exactly $M$ cells, all of which are not classified; we can even exhibit a pre-classified partition with $O(M)$ cells and $O(M)$ non classified cells in Figure \ref{fig:linear-non-classified}. Thus, when considering volumes that grow superlinearly with the depth ($V(M) \gg M$), the effect of non-classified cells is negligible, and we can just consider classified partitions without loss. 

%
%
%
%
%

 \medskip

We now prove Theorem \ref{th:classification}. The proof consists of four steps. In the sequel, $P$ is a partition of depth $M$ and $Q$ is its shape.
 The first step of the proof is to reduce ourselves to evaluating
 $\dv(\classified{Q})$. 
 Lemma \ref{lem:P*} and its follow-up corollary show that the
 volume of a partition's classified refinement can be characterised by the
 discrete volume of the partition plus the classified refinement along its
 (shape) boundary.
 In a second step, we show that this additional term can be characterised
 through cells that are added due to the pre-classification refinement plus a
 constant term depending on the mesh depth $M$ (Lemma \ref{lem:pre-classified_v_classified}).
 We can ignore cells added by the classification refinement from thereon.
 A third step (Lemma \ref{lem:dv_tilde_Q}) quantifies the remaining term, i.e.~the number
 of cells added at a shape boundary due to the pre-classification.
 The fourth and final step then brings the lemmas together to prove the theorem.

\paragraph{Step 1: Additional cells are found in the classification of the shape}

\begin{lemma} \label{lem:P*}
Let $P$ be a partition and let $\classified{P}$ be its classification. The following hold.
\begin{enumerate}[label=(\alph*)]
	\item \label{it:P**}
	$\classified{P}$ is classified; in fact, $\classified{P}$ is its own classification: $\classified{(\classified{P})} = \classified{P}$.

	\item \label{item:N*}
	$\depth(P) \le \depth(\classified{P}) \le \depth(P) + 1$.
	
	\item \label{it:Pbar}
	A partition is a classified refinement of $P$ if and only if it is a refinement of $\classified{P}$.
	
	\item \label{it:P*_Q*}
	$\classified{P} = \classified{Q} \land P$, where $Q = \shape(P)$.
\end{enumerate}
\end{lemma}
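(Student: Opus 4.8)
The four items together say that $\classified{P}$ is the coarsest classified refinement of $P$, that building it costs at most one extra level of depth, and that it is governed entirely by the shape $Q=\shape(P)$. The plan is to prove them in the order (a), (b), (c), (d), since each later item leans on the earlier ones, and essentially all of the real work sits in (a).

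For (a): since $\classified{P}$ refines $\preclassified{P}$, which is pre-classified by Lemma~\ref{lemma:preclassification}(a), and pre-classified partitions are closed under refinement, $\classified{P}$ is pre-classified; so I only need to check that every cell of $\classified{P}$ is classified. There are two kinds of cells. First, the children of a non-classified cell $g$ of $\preclassified{P}$: because $g$ is pre-classified, each of its facets either lies entirely on $\partial\content{P}$ or is covered inside $\content{P}$; when $g$ is subdivided, a child $g_x$ inherits a boundary facet in a given axis direction exactly when $g$ had one on $g_x$'s side and acquires no new boundary facet, so the boundary facets of $g_x$ are pairwise non-parallel, which forces $\class(g_x,\classified{P})$ to equal the number of boundary facets of $g_x$ --- i.e.\ $g_x$ is classified (this is the same facet count as in the proof of Lemma~\ref{lem:sv_subdivision}, carried out for a cell that may carry up to $\class(g)$ boundary facets, located via Lemma~\ref{lem:class_facets}). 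Second, a cell $h$ of $\preclassified{P}$ that was already classified: subdividing the \emph{other} cells does not change $\content{P}$, hence does not change which faces of $h$ lie on $\partial\content{P}$, and $h$ stays pre-classified, so both $\class(h,\cdot)$ and the number of boundary facets of $h$ are unchanged and $h$ stays classified. Idempotence is then immediate: a classified partition is pre-classified, so its own pre-classification is itself, and it has no non-classified cell left to subdivide, whence $\classified{(\classified{P})}=\classified{P}$.

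For (b): $\classified{P}$ refines $\preclassified{P}$, so $\depth(\classified{P})\ge\depth(\preclassified{P})=\depth(P)$ by Lemma~\ref{lemma:preclassification}(b); and $\classified{P}$ is obtained from $\preclassified{P}$ by subdividing each non-classified cell exactly once, which raises the depth by at most one. For (c): the backward direction is formal --- a refinement $R$ of $\classified{P}$ refines $\preclassified{P}$ and $P$, and refines the classified partition $\classified{P}$, so by iterating Lemma~\ref{lem:sv_subdivision} it is classified, i.e.\ a classified refinement of $P$. For the forward direction, a classified refinement $R$ of $P$ is pre-classified, hence refines $\preclassified{P}$ by Lemma~\ref{lemma:preclassification}(c); if $R$ left some non-classified cell $g$ of $\preclassified{P}$ unsubdivided, then --- by the same invariance of $\content{P}$ and of the pre-classified status of $g$ used in (a) --- $g$ would still have more facets than its class in $R$, contradicting that $R$ is classified, so $R$ subdivides every non-classified cell of $\preclassified{P}$ and therefore refines $\classified{P}$.

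For (d): every cell of $\classified{P}$ sits inside a maximal box of $Q$, and $\content{\classified{P}}=\content{P}=\content{Q}$, so $\classified{P}$ refines $Q$; being classified by (a), it is a classified refinement of $Q$, hence refines $\classified{Q}$ by (c) applied with $Q$ in place of $P$, and since it also refines $P$ it refines $\classified{Q}\land P$. Conversely $\classified{Q}\land P$ refines $P$ and refines the classified partition $\classified{Q}$, hence is classified, hence a classified refinement of $P$, hence refines $\classified{P}$ by (c); antisymmetry of the refinement order then gives $\classified{P}=\classified{Q}\land P$. The one genuinely delicate point is inside (a): one must verify that the single round of subdivisions simultaneously turns each non-classified cell into classified children and leaves all untouched cells classified. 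Both parts hinge on the fact --- which should be read off from the definition of $\boundary^c$ (the boundary faces contributed by a cell depend only on that cell and on $\content{P}$, not on how the rest of $P$ is refined) --- so I would isolate that fact first. Once (a) is in place, (b)--(d) follow formally from Lemmas~\ref{lemma:preclassification} and~\ref{lem:sv_subdivision} together with the lattice structure of $\refines$.
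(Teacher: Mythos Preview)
Your proof is correct and follows essentially the same route as the paper's: for (a) you split into children of non-classified cells (which lose all parallel boundary facets upon one subdivision, hence become classified via Lemma~\ref{lem:class_facets}) and already-classified cells (which are unaffected since their boundary faces depend only on $\content{P}$); (b)--(d) then follow formally from Lemma~\ref{lemma:preclassification}, Lemma~\ref{lem:sv_subdivision}, and the lattice structure of~$\refines$, exactly as in the paper. Your version is in places more explicit --- you spell out why $\classified{P}\refines\classified{Q}$ in (d) by applying (c) to the shape $Q$, and you isolate the key invariance ``the boundary faces of a cell depend only on that cell and on $\content{P}$'' --- whereas the paper leaves these as tacit.
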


\begin{proof}
~
\begin{enumerate}[label=(\alph*)]
    \item
    Let $g$ be a non-classified cell of $\preclassified{P}$. By Lemma \ref{lem:class_facets}, $g$ contains a pair of parallel facets of $\boundary \preclassified{P}$. After subdividing $g$, any child cell of $g$ is pre-classified and does not contain a pair of parallel facets of $\boundary \classified{P}$, it is hence classified. Moreover, it is easily shown that any classified cell $h$ of $\preclassified{P}$ remains classified in $\classified{P}$. Therefore, $\classified{P}$ is classified.
    As a refinement of a classified does not introduce a
    cell that is not classified, the construction of a
    classified partition is idempotent:
    If $P$ is a classified partition
    already, then the construction yields $P =
    \preclassified{P} = \classified{P}$.

    \item 
    By construction, $\depth(\classified{P}) \le \depth(\preclassified{P}) + 1$, which in turn is equal to $\depth(P) + 1$ by Lemma \ref{lemma:preclassification}.
    
    \item 
    By \ref{it:P**}, $\classified{P}$ is a classified refinement of $P$. Also, any refinement of $\classified{P}$ is also classified by Lemma \ref{lem:sv_subdivision}, and hence it is a classified refinement of $P$. Conversely, if $R$ is a classified refinement of $P$, then $R \refines \preclassified{P}$. Moreover, every non-classified cell of $\preclassified{P}$ must be subdivided in $R$, thus $R$ refines $\classified{P}$.
    
    \item 
    Recall that for all sets of boxes $X$ and $Y$, we denote $X \refines Y$ if $X$ refines $Y$. Then $\classified{P} \refines \classified{Q}$ and $\classified{P} \refines P$, thus $\classified{P} \refines \classified{Q} \land P$. Conversely, $\classified{Q} \land P \refines P$, and since $\classified{Q} \land P$ is classified by Lemma \ref{lem:sv_subdivision}, we have $\classified{Q} \land P \refines \classified{P}$ by \ref{it:Pbar}. Thus, $\classified{P} = \classified{Q} \land P$.
\end{enumerate}
\end{proof}


\begin{corollary} \label{cor:dvP*}
$\dv(\classified{P}) \le \dv(P) + \dv(\classified{Q})$.
\end{corollary}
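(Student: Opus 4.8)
The plan is to build directly on the identity $\classified{P} = \classified{Q} \land P$ established in Lemma~\ref{lem:P*}\ref{it:P*_Q*}, where $Q = \shape(P)$. First I would note that this common refinement is well defined: $\content{Q} = \content{P}$ by the definition of the shape, and the classification operation (pre-classifying, then subdividing the non-classified cells) preserves content, so $\content{\classified{Q}} = \content{P}$. Hence $\classified{P}$ is a common refinement of the two finite sets of boxes $\classified{Q}$ and $P$, both with content $\content{P}$, and it suffices to bound the number of cells of such a common refinement.

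The key structural fact I would isolate and prove is: if $X$ and $Y$ are finite sets of boxes with $\content{X} = \content{Y}$, then every cell of $X \land Y$ is a cell of $X$ or a cell of $Y$. I would argue this through the associated trees. From the characterisation of refinement via the subgraph relation ($Z$ refines $X$ iff $\tree(X)$ is a subgraph of $\tree(Z)$) together with the universal property of $\land$, one gets $\tree(X \land Y) = \tree(X) \cup \tree(Y)$, once one checks that $\tree(X) \cup \tree(Y)$ is itself a finite subtree of $\tree$ rooted at $\hypercube$ in which every internal node has all $k^d$ children — closure under ancestors is immediate, and if $v$ is a non-leaf of the union then one of its children lies in $\tree(X)$ or $\tree(Y)$, forcing $v$ to be a non-leaf there and hence to have all $k^d$ children in the union. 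Now a leaf $v$ of $\tree(X) \cup \tree(Y)$ cannot be a non-leaf of $\tree(X)$ (its children would then sit in the union), and likewise not of $\tree(Y)$; since $v$ belongs to $\tree(X)$ or to $\tree(Y)$, it is a leaf of at least one of them, i.e. a cell of $X$ or of $Y$.

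Applying this with $X = \classified{Q}$ and $Y = P$ finishes the proof: the cells of $\classified{P}$ that are cells of $P$ number at most $|P| = \dv(P)$, the cells of $\classified{P}$ that are cells of $\classified{Q}$ number at most $|\classified{Q}| = \dv(\classified{Q})$, and every cell of $\classified{P}$ is of one of these two kinds, so $\dv(\classified{P}) = |\classified{P}| \le \dv(P) + \dv(\classified{Q})$.

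The only ingredient that is not pure bookkeeping is the structural lemma about common refinements of box families. If one prefers to avoid it, an alternative route works directly inside each cell of $P$: since $\classified{P}$ refines $P$, for each $g \in P$ that is genuinely subdivided in $\classified{P}$ the cells of $\classified{P}$ contained in $g$ coincide with the cells of $\classified{Q}$ contained in $g$ (within $g$, $P$ imposes no constraint while $\classified{Q}$ does, and $\classified{P}$ is the coarsest common refinement), and these cell sets are pairwise disjoint over distinct such $g$; writing $\dv(\classified{P}) = \dv(P) + \sum_g (m_g - 1)$ with $m_g$ the number of those cells, the excess is at most $\sum_g m_g \le \dv(\classified{Q})$. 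Either way, the main (mild) obstacle is simply making the ``a cell of the common refinement comes from one of the two factors'' statement precise; everything else is immediate.
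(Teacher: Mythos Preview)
Your proof is correct and follows the same approach as the paper: both deduce the bound from the identity $\classified{P} = \classified{Q} \land P$ in Lemma~\ref{lem:P*}\ref{it:P*_Q*}. The paper actually gives no formal proof of the corollary, treating the inequality as an immediate consequence of that identity; your structural lemma (every leaf of $\tree(X)\cup\tree(Y)$ is a leaf of $\tree(X)$ or of $\tree(Y)$) just makes explicit the one-line observation the paper leaves to the reader.
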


 \noindent
 Corollary \ref{cor:dvP*} acknowledges that the construction of a
 classification refinement from a partition $P$ affects solely the surface of a
 shape.
 We refine along the shape boundary, but not within the partition.
 Therefore, a classification's discrete volume, i.e.~cell count, is bounded by
 the partition's volume plus additional refinements $\classified{Q}$ around the shape surface.

\begin{figure}
 \begin{center}
  \includegraphics[width=0.5\textwidth]{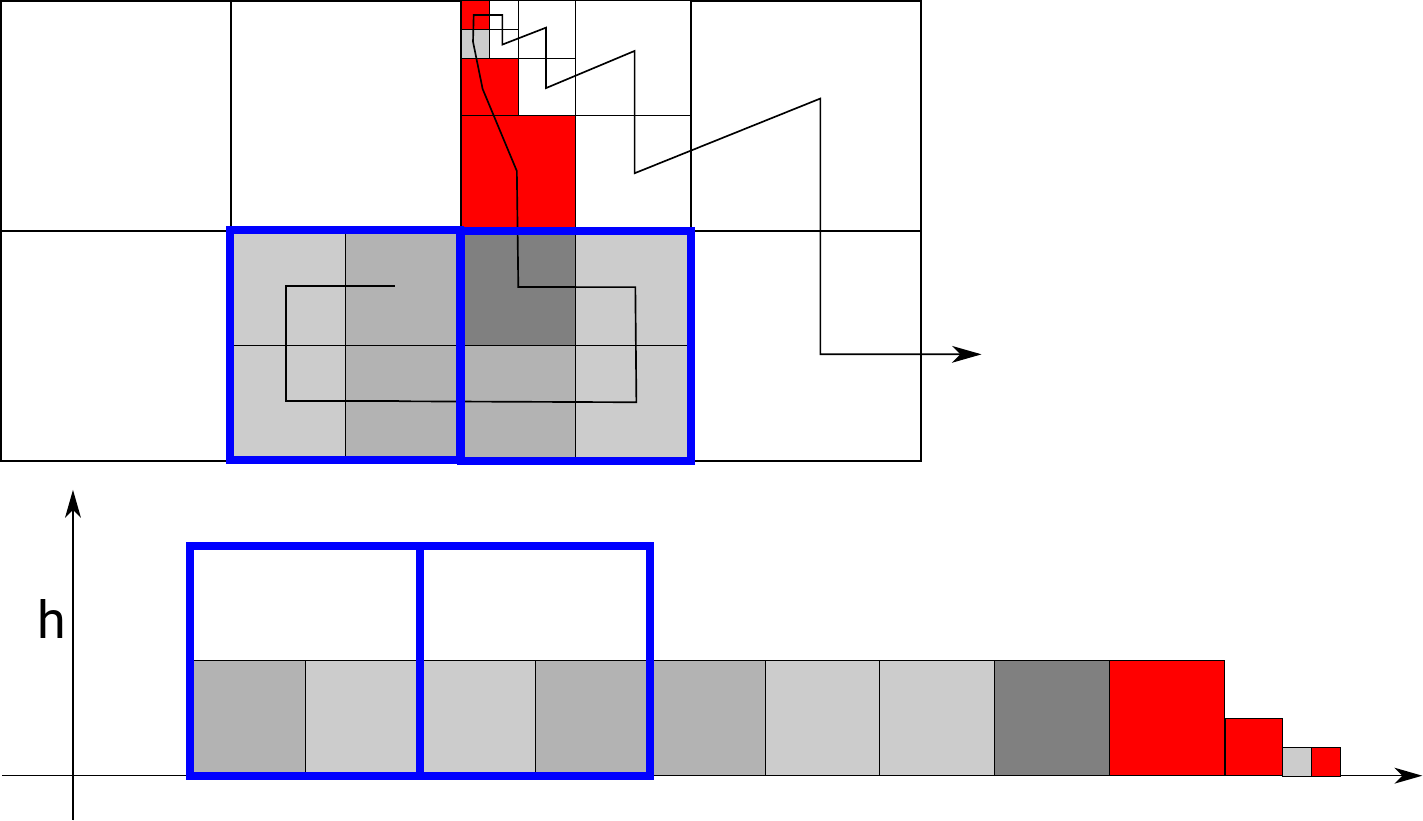}
 \end{center}
 \caption{
  Top: Partition from Figure
  \ref{figure:classified-partitions:not-classified-partitions} with a highlight
  of shape cells (empty blue square).
  Bottom: The shape cells align along the space-filling curve like pearls in a
  row.
  \label{figure:classified-partitions:pearls-in-a-row}
 }
\end{figure}

\paragraph{Step 2: Reducing the problem to evaluating $\dv(\preclassified{Q})$}


By the space-filling property of $\Gamma$, the cells contained in any 
node are consecutive, i.e. they form a partition of $\Gamma$. 
Therefore, $\Gamma$ induces an ordering $Q = (q_1, \dots, q_n)$ of the nodes 
of $Q$. 
Since the cells of a partition are consecutive, consecutive nodes of its shape are adjacent (by the continuity property of $\Gamma$).
 Furthermore, any sequence $Q$ decomposes into two parts whereby one of them
 might degenerate into the empty sequence.
 The sizes of the nodes within the first segment are non-decreasing, while
the sizes within the second segment are non-increasing (Figure
 \ref{figure:classified-partitions:pearls-in-a-row}).

%
%
%
%

\begin{lemma} \label{lem:grid_of_shape}
If $Q$ is the shape of a partition, then $\grid(Q)$ has at most $2 k^d$ cells of depth $l$ for all $l$.
\end{lemma}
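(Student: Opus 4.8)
The plan is to recast the statement in terms of the internal nodes of $\tree(Q)$, where $Q = \shape(P)$, and then bound their number using the recursive structure of a space-filling curve. First I would record the easy reduction: a box $y$ of depth $l \ge 1$ can be a cell of $\grid(Q)$ only if its parent is an internal node of $\tree(Q)$ of depth $l-1$ (the parent lies in the upward-closed set $\tree(Q)$ and has a child, hence is not a leaf), and every internal node of $\tree(Q)$ has exactly $k^d$ children. Consequently, if $\tree(Q)$ has at most two internal nodes of each depth, then $\grid(Q)$ has at most $2k^d$ cells of each depth $l \ge 1$, and at most one cell of depth $0$. So everything reduces to showing that, for every depth, $\tree(Q)$ has at most two internal nodes.

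To prove that I would use the recursive decomposition of the shape along $\Phi$. Order the $k^d$ children $c_1,\dots,c_{k^d}$ of the ambient box in the order $\Phi$ visits them; since the cells of any node are consecutive along $\Phi(G)$, a block $P=\{g_i,\dots,g_j\}$ meets a contiguous run of children $c_a,\dots,c_b$, the strictly intermediate ones lie entirely inside $\content{P}$, and $P\cap c_a$ (resp.\ $P\cap c_b$) consists of the \emph{last} (resp.\ \emph{first}) cells of $c_a$ (resp.\ $c_b$). Provided $\content{P}$ is not the whole ambient box, no two of the resulting pieces can fuse into a larger box, so
\[
\shape(P) \;=\; \shape(P\cap c_a) \;\sqcup\; \{c_{a+1},\dots,c_{b-1}\} \;\sqcup\; \shape(P\cap c_b),
\]
and one recurses into $c_a$ with a \emph{suffix block} and into $c_b$ with a \emph{prefix block}. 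The same argument shows that a suffix block meets children $c_{a'},\dots,c_{k^d}$ and decomposes as $\shape(P\cap c_{a'})\sqcup\{c_{a'+1},\dots,c_{k^d}\}$, hence recurses into a \emph{single} suffix block, and symmetrically for prefix blocks. Every ``full child'' $c_t$ occurring here is minimal in $\shape(P)$, hence a leaf of $\tree(Q)$, and every box disjoint from $\content{P}$ is also a leaf; thus the only nodes that can be internal are the $\le 2$ sub-blocks (for a general block) or $\le 1$ sub-block (for a one-sided block) that are handed to the recursion. I would then run an induction with the invariant ``a one-sided block contributes at most one internal node per depth, an arbitrary block at most two'': $\tree(Q)$ carries at most one general ``trunk'' which, once it splits, feeds at most one suffix line and one prefix line, so there are at most two internal nodes at each depth. (This also matches the remark that the node sizes of $Q$ along $\Phi$ rise and then fall: the suffix line produces the non-decreasing-size segment, the prefix line the non-increasing-size segment.)

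The difficulty is in the bookkeeping rather than in any new idea. One must check: that the displayed decomposition really holds, i.e.\ that sibling pieces never combine into a strictly larger box — this is exactly where $\content{P}\ne\hypercube$ is used, and the excluded case $P=G$ is trivial since then $Q=\{\hypercube\}$; that the ``chaining'' case $a=b$ (and its one-sided analogue, where the block stays general/one-sided and merely descends one level) is handled — here the block still has at most $k^d$ children and at most one internal child, so the invariant survives, but the induction must run on a measure that decreases under chaining, for instance the number of cells of $P$ paired lexicographically with the depth of $P$ relative to the ambient box; and that degenerate pieces (a block equal to its ambient box, or an empty segment between the rising and falling parts of the shape sequence) are not dropped. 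Finally, the bound $2k^d$ is already attained for $d=1$, $k=2$, so it is sharp.
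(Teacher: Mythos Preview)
Your argument is correct, but the paper takes a noticeably shorter route. Rather than recursing level by level, the paper makes a single global observation: if $Q=(q_1,\dots,q_n)$ is the shape ordered along $\Phi$, then for every $q\in Q$ the parent $\parent{q}$ must contain either $q_1$ or $q_n$ (otherwise $\parent{q}\subseteq\content{Q}$, contradicting maximality of $q$). This immediately gives $\grid(Q)=\grid(q_1,q_n)$, and since $\grid(v)$ for a single box $v$ has at most $k^d$ leaves at each depth, the bound $2k^d$ follows in one line. Your recursive decomposition into a trunk followed by a suffix line and a prefix line is effectively a level-by-level reconstruction of the same fact (the two internal nodes at depth $l$ are precisely the depth-$l$ ancestors of $q_1$ and $q_n$), and it also makes explicit the ``pearls on a string'' picture that the paper only states informally; but it costs you the bookkeeping you yourself flag (the chaining case $a=b$, the well-founded induction measure, the degenerate cases). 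The paper's argument sidesteps all of that by jumping straight to the two extremal boxes.
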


\begin{proof}
First, we note that for any $q \in Q$ either $q_1 \subseteq
\parent{q}$ or $q_n \subseteq \parent{q}$. 
Otherwise, we would have $\parent{q} \subseteq \content{Q}$ and hence $q \notin Q$.

Second, we prove that $\grid(Q) = \grid(q_1, q_n)$. Clearly, $\grid(Q) = \bigwedge_{q \in Q} \grid(q)$ refines $\grid(q_1,q_n)$. We now prove the reverse refinement. We have just shown that for any $q \in Q$, $\parent{q}$ is an ancestor of either $q_1$ or $q_n$. But $\parent{q}$ is not a leaf in $\tree(q_1,q_n)$ since the latter contains a descendant of $\parent{q}$ (namely, either $q_1$ or $q_n$), thus $\tree(q_1,q_n)$ also contains the children of $\parent{q}$ and in particular $q$ itself. Thus $\tree(q_1,q_n)$ contains $\tree(Q)$.

We now prove the lemma. For any box $v$ of depth $M$, $\tree(v)$ is a tree with exactly $k^d$ leaves of depth $M$ and $k^d - 1$ leaves of depth $l$ for all $l < M$. Therefore, $\grid(q_1,q_n)$ has at most $2k^d$ cells of any depth.
\end{proof}

\noindent
Our lemma is generous as it argues with the mesh depth $M$.
Indeed, the difference between the resolution level of the coarsest and the
finest node within a partition would yield a stricter bound.

 We use the present lemma in an adaptive mesh context.
 However, it would directly yield the quasi-optimality for regular meshes over
 very fine meshes (cmp.~Appendix \ref{sec:continuous-ratio}).

\begin{lemma} \label{lem:pre-classified_v_classified}
For any partition $P$ with shape $Q$, $\dv(\classified{Q}) =
\dv(\preclassified{Q}) + O(M)$.
\end{lemma}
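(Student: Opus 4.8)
The plan is to compare the classification $\classified{Q}$ with the pre-classification $\preclassified{Q}$ of the shape $Q$, and to show that the cells of $\preclassified{Q}$ that fail to be classified — which are exactly the ones that must be subdivided to pass from $\preclassified{Q}$ to $\classified{Q}$ — number only $O(M)$; since each such subdivision adds $k^d-1$ cells, this gives $\dv(\classified{Q}) = \dv(\preclassified{Q}) + O(M)$. So the whole statement reduces to bounding the number of non-classified cells of $\preclassified{Q}$ by $O(M)$.

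First I would recall, via Lemma \ref{lem:class_facets}, that a non-classified (pre-classified) cell $g$ of class $c$ is precisely one that contains a pair of parallel facets of $\boundary \preclassified{Q}$. Such a ``protruding'' cell $g$ forces its parent $\parent{g}$ to stick out of the shape on two opposite sides, which is a strong structural constraint. The idea is that these protruding cells of $\preclassified{Q}$ sit above the cells of $\grid(Q)$ itself: more precisely, if $g \in \preclassified{Q}$ is non-classified, then the box $g$ (or $\parent g$) is a node of $\grid(Q)$ that has a sibling-free, tower-like position in $\tree(Q)$. I would then invoke Lemma \ref{lem:grid_of_shape}: $\grid(Q)$ has at most $2k^d$ cells at each of the $M+1$ possible depths, hence $O(M)$ cells in total, and in fact $\tree(Q)$ itself has $O(M)$ nodes by the same counting (each depth contributes at most $O(1)$ branching nodes because $Q = \grid(q_1,q_n)$). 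Since every non-classified cell of $\preclassified{Q}$ is accounted for by a bounded number of nodes of $\tree(Q)$ (a protruding cell and its parent live on the at-most-$2k^d$-per-level skeleton, up to the local $k^d$ children produced by the pre-classification refinement), there are $O(M)$ of them.

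The remaining bookkeeping is routine: $\dv(\classified{Q})$ is obtained from $\dv(\preclassified{Q})$ by subdividing the $O(M)$ non-classified cells, each subdivision contributing $k^d-1$ extra cells by Lemma \ref{lem:sv_subdivision} (the shape $Q$ being itself a classified partition in the relevant sense, or else one applies the volume count directly), so $\dv(\classified{Q}) - \dv(\preclassified{Q}) = (k^d-1)\cdot O(M) = O(M)$.

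The main obstacle I anticipate is the geometric claim that every non-classified cell of $\preclassified{Q}$ lies on the thin $O(M)$-sized skeleton of $\grid(Q)$, rather than being spread over the full (potentially exponential) cell count of $\preclassified{Q}$. The subtlety is that $\preclassified{Q}$ refines $Q$ towards its boundary $\boundary Q$, and $\boundary Q$ can be large; one has to argue that the \emph{protruding} structure — a cell with two parallel boundary facets — can only occur where $Q$ itself already protrudes, i.e.\ at the small number of nodes of $\grid(Q)$ whose parent leaves the shape on two sides. Pinning down that correspondence carefully (including the factor $k^d$ from one extra level of subdivision in the pre-classification, controlled by Lemma \ref{lemma:preclassification}(b)) is where the real work lies; once it is in place, Lemma \ref{lem:grid_of_shape} closes the argument immediately.
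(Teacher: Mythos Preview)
Your strategy is exactly the paper's: bound the number of non-classified cells of $\preclassified{Q}$ by $O(M)$ via Lemma~\ref{lem:grid_of_shape}, then multiply by $k^d-1$. The paper in fact proves the more general statement $\dv(\classified{P}) = \dv(\preclassified{P}) + O(M)$ for any partition $P$ and then specialises to $Q$.

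Where you are vague---and you rightly flag it as the main obstacle---the paper has a one-line resolution that is sharper than what you sketch. The claim is not merely that a non-classified cell $g \in \preclassified{Q}$ corresponds to some node of $\tree(Q)$ up to bounded multiplicity; it is that $g \in Q$ \emph{itself}. Proof: by Lemma~\ref{lem:class_facets}, $g$ contains two parallel facets $y,z$ of $\boundary Q$. If $g \notin Q = \shape(Q)$, then $\parent{g} \subseteq \content{Q}$ (since $Q$ consists of the maximal boxes in $\content{Q}$). But $g$, being a child of $\parent{g}$, has at most one of $y,z$ on the boundary of $\parent{g}$; the other lies in the interior of $\parent{g} \subseteq \content{Q}$ and hence cannot be in $\boundary Q$---contradiction. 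Thus the non-classified cells of $\preclassified{Q}$ form a subset of $Q$, of size at most $2k^d M$ by Lemma~\ref{lem:grid_of_shape}. No extra $k^d$ factor from the pre-classification depth is needed, and no appeal to ``tower-like positions'' or sibling structure is required.
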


\begin{proof}
We prove a more general result: $\dv(\classified{P}) = \dv(\preclassified{P}) + O(M)$ for any partition $P$.

We prove that any pre-classified cell $g$ of $\preclassified{P}$ which is not classified actually belongs to $Q = \shape(P)$. Suppose $g$ is pre-classified but not classified, then by Lemma \ref{lem:class_facets} it contains two parallel facets $y$ and $z$ of $\boundary P$. If $g \notin Q$, then  $\parent{g} \subseteq \content{P}$, and in particular, either $y$ or $z$ is in the interior of $\parent{g}$ and hence not in $\boundary P$, which is the desired contradiction. 

Thus, the number of non-classified cells in $\preclassified{P}$ is at most $|Q|$, which is at most $2k^dM$ according to Lemma \ref{lem:grid_of_shape}. Therefore, one needs $O(M)$ refinements to construct $\classified{P}$ from $\preclassified{P}$, and hence $\dv(\classified{P}) = \dv(\preclassified{P}) + O(M)$.
\end{proof}


\paragraph{Step 3: Quantify $\dv(\preclassified{Q})$}

 We continue to give a
 quantitative estimate for $\dv(\preclassified{Q})$. 
 For this, we rely on a simple upper bound on the node counts, i.e.~on fine grid
 cells plus all their coarser predecessors from  the construction process.

\begin{lemma} \label{lem:number_of_nodes}
A grid with $n$ cells has at most $\frac{n}{1 - k^{-d}}$ nodes.
\end{lemma}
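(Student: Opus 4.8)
The plan is to count the nodes of a grid by a simple recursive or telescoping argument on depth. Recall that the nodes of a grid $G$ are all boxes of $\tree(G)$, i.e.\ all the cells of $G$ together with all their ancestors in $\tree$. Let $n_l$ denote the number of nodes of $G$ of depth $l$, and let $m_l$ denote the number of those nodes that are cells of $G$ (i.e.\ leaves of $\tree(G)$). The root $\hypercube$ is the unique node of depth $0$, so $n_0 = 1$, and $\sum_l m_l = n$ since every cell has a well-defined depth. The key observation is that every node of depth $l+1$ is a child of a node of depth $l$ that is \emph{not} a leaf of $\tree(G)$; such a non-leaf node has exactly $k^d$ children, all of which are again nodes of $G$. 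Hence $n_{l+1} = k^d (n_l - m_l)$.

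From this recursion I would extract the bound. Summing a weighted version of $n_{l+1} = k^d(n_l - m_l)$ over all $l$, or equivalently iterating, gives $n_l = k^{d l} - \sum_{j=0}^{l-1} k^{d(l-j)} m_j$ with $n_0 = 1$; in particular $n_l \le k^{dl}$, but that alone is too weak. Instead I would argue directly: the total node count is $N := \sum_l n_l$, and from $n_{l+1} = k^d n_l - k^d m_l \le k^d n_l$ one sees that the nodes at depth $l$ are bounded by $k^d$ times those at depth $l-1$ minus a correction; a cleaner route is to sum $n_{l+1} + k^d m_l = k^d n_l$ over $l \ge 0$, which yields $(N - n_0) + k^d n = k^d N$, i.e.\ $k^d N - N = k^d n + (n_0 - \text{(something)})$. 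I need to be slightly careful with the boundary term at $l=0$: since $n_0 = 1$ and is itself either a leaf (if $G = \{\hypercube\}$) or not, the identity $\sum_{l \ge 0} n_{l+1} = \sum_{l\ge 0}(k^d n_l - k^d m_l)$ reads $N - 1 = k^d N - k^d n$, hence $N = \dfrac{k^d n - 1}{k^d - 1} \le \dfrac{k^d n}{k^d - 1} = \dfrac{n}{1 - k^{-d}}$, which is exactly the claimed bound.

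The main thing to get right is the bookkeeping of the telescoping sum and the $l=0$ term — in particular checking that the identity $n_{l+1} = k^d(n_l - m_l)$ is valid for every $l \ge 0$ including the degenerate one-cell grid (where $m_0 = 1$, $n_0 = 1$, and indeed $n_1 = 0$). Once that recursion is established the rest is a one-line summation, so I do not anticipate any genuine obstacle; the only subtlety is presentational, namely whether to phrase it as an induction on depth or as a single global counting identity. I would favour the global identity, since it gives the exact value $N = (k^d n - 1)/(k^d - 1)$ and the stated inequality is an immediate weakening.
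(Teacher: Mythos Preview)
Your argument is correct and in fact yields the exact value $N=(k^dn-1)/(k^d-1)$, from which the stated bound follows immediately. The recursion $n_{l+1}=k^d(n_l-m_l)$ is exactly right (non-leaf nodes at depth $l$ each contribute $k^d$ children, all of which are nodes), and the telescoping sum is clean; the sums are finite since the grid has finite depth, so there is no issue there.

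This is a genuinely different route from the paper's. The paper groups nodes by \emph{height} $i=\depth(G)-\depth(v)$ and argues that $|V_i|\le n\,k^{-id}$ by claiming that every node of height $i$ has $k^d$ children of height $i-1$, then sums the geometric series. Your approach instead groups by depth and tracks leaves explicitly through the $m_l$ term. The payoff is twofold: you get an equality rather than an inequality, and you avoid a subtlety in the paper's per-level bound. The paper's step $|V_i|\le|V_{i-1}|/k^d$ implicitly requires that every node of height $i$ have all $k^d$ of its children present as nodes, which fails whenever a cell sits at depth strictly less than $\depth(G)$ (such a cell is a leaf of $\tree(G)$ with height $>0$ and no children in $\tree(G)$). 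Your recursion handles this case correctly by subtracting $m_l$ before multiplying by $k^d$.
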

\begin{proof}
Let $G$ be a grid and let $V$ be its set of nodes. 
For any $v \in V$, define the height of $v$ as
\[
	\mathrm{height}(v) := \depth(G) - \depth(v).
\]

\noindent
Denote $V_i = \{v \in V: \mathrm{height}(v) = i\}$ for all $i \in \mathbb{N}$ so that $V=V_0 \cup V_1 \cup V_2 \cup \ldots$ with $V_i \cap V_j = \emptyset$
 for $i \ne j$. We prove by induction on $i$ that $|V_i| \le \frac{n}{k^{id}}$. This is clear for $i=0$, since nodes of height $0$ are cells in $G$. Suppose it holds for $i-1$. If $v$ has height $i$, then all its $k^d$ children have height $i-1$, and they are the children of $v$ only, thus
$|V_i| \le \frac{|V_{i-1}|}{k^d} \le \frac{n}{k^{id}}$. We finally obtain
\[
	|V| = \sum_{i=0}^\infty |V_i| \le n \sum_{i=0}^\infty k^{-id} = \frac{n}{1 - k^{-d}}. 
\]
\end{proof}

\noindent

 With this straightforward estimate at hand, we show that the number of cells
 $\nu_g $ in a pre-classified partition can be estimated through the number of
 faces of a partition's boundary.
 The latter means we count the faces of a partition boundary including the
 subdivided fragments.
 In line with Lemma \ref{lem:pre-classified_v_classified}, the resulting Lemma
 \ref{lem:dv_one_cell} and its supporting claim can be phrased over $P$
 which is more general than $Q$.
 However, we apply it for $Q$ only and thus phrase it over $Q$.

\begin{lemma} \label{lem:dv_one_cell}
For any $g \in Q$, 
let $d_g := |\{ z \in \boundary Q: z \subseteq g\}|$ 
and $\nu_g := |\{ r \in \preclassified{Q} : r \subseteq g \}|$.
Then
\[
	\nu_g \le \frac{k}{1 - k^{-d+1}} d_g.
\]
\end{lemma}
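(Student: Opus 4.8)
The plan is to relate the cells of $\preclassified{Q}$ that lie inside a fixed $g \in Q$ to the boundary subcubes of $Q$ inside $g$ via the construction $\preclassified{Q} = \grid(\hat{\boundary Q})[Q]$. First I would observe that, restricted to the box $g$, the partition $\preclassified{Q}$ coincides with the grid obtained by taking $g$ and subdividing it toward the boxes $\hat{y}$ for those $y \in \boundary Q$ with $y \subseteq g$; in other words the relevant sub-object is $\grid(\{\hat{y} : y \in \boundary Q, \ y \subseteq g\})$ viewed inside $g$. Each such $\hat{y}$ is a box whose depth is bounded by $M = \depth(P)$, so this is a grid inside $g$ whose ``leaves of interest'' number at most $d_g$ (up to the multiplicity coming from one box $\hat y$ carrying several facets $y \in \subcube \hat y$, which only helps). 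The core of the argument is a counting bound on $|\grid(F)|$ in terms of $|F|$ when $F$ is a set of at most $d_g$ boxes sitting inside one parent box.

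The key step, which I expect to be the main obstacle, is the sharp constant $\frac{k}{1-k^{-d+1}}$. A naive bound using Lemma \ref{lem:number_of_nodes} would give a factor $\frac{1}{1-k^{-d}}$ on the number of \emph{nodes} of a grid with $n$ cells, but here we need something stronger and of a different flavour: each ``active'' box $\hat y$ is refined toward only because of facets of $Q$, and since $y \in \subcube \hat y$ is a $(d-1)$-subcube, refining $g$ along one coordinate hyperplane of $\hat y$ per level produces $k^{d-1}$ new cells touching that facet rather than the full $k^d$. The plan is therefore to set up an induction on $\depth(g)$ versus $\min\{\depth(\hat y)\}$: when we subdivide $g$ once, its $k^d$ children split into those that still contain some $y \in \boundary Q$ (at most $k^{d-1}$ per relevant facet, all with their own smaller $d$-value) and those that do not (which contribute exactly one cell each and no further boundary faces). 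Writing the recursion for $\nu_g$ in terms of $d_g$ and summing the geometric series in $k^{-(d-1)}$ — i.e.\ $\nu_g \le d_g(1 + k^{-(d-1)} + k^{-2(d-1)} + \cdots) \cdot k$, where the leading $k$ accounts for the $\le k$ sibling cells produced alongside each refinement step by the forth-and-back (compactness) behaviour of the SFC along a single facet — yields exactly $\nu_g \le \frac{k}{1-k^{-d+1}} d_g$.

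Concretely the steps are: (1) reduce to analysing, inside $g$, the grid $\grid(F_g)$ with $F_g = \{\hat y : y \in \boundary Q,\ y \subseteq g\}$, noting $|F_g| \le d_g$; (2) prove that every cell $r \in \preclassified{Q}$ with $r \subseteq g$ is either a cell of $\grid(F_g)$ containing some $y \in \boundary Q$, or lies between $g$ and such a cell, and in either case is ``charged'' to a boundary face; (3) establish the per-level recursion: subdividing toward a facet multiplies the relevant cell count by at most $k^{d-1}$ while generating at most $k-1$ extra interior siblings per facet, so $\nu_g \le \sum_{j\ge 0} k \cdot k^{-j(d-1)} d_g$; (4) sum the geometric series, using $d \ge 2$ so that $k^{-(d-1)} < 1$, to get the claimed bound. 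The only real subtlety is bookkeeping the siblings so that the constant comes out as $k/(1-k^{-d+1})$ rather than something larger; everything else is routine.
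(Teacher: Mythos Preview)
Your overall direction---bound the cells of $\preclassified{Q}$ inside $g$ by the boundary faces and extract the constant $\frac{k}{1-k^{-d+1}}$ via a geometric series in $k^{-(d-1)}$---matches the paper's. But two concrete points in your plan do not hold up.

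First, the paper does not run a single recursion over all of $F_g$ at once; it decomposes by facets $y \in \subcube g$. For each facet separately, the boundary faces $Y = \{z \in \boundary Q : z \subseteq y\}$ form a $(d-1)$-dimensional grid on $y$, and Lemma~\ref{lem:number_of_nodes} in dimension $d-1$ bounds the number of nodes of that grid by $\frac{1}{1-k^{-(d-1)}}\,\dv(G[y])$. One then lifts this to the $d$-dimensional grid $H_y = \grid(\hat{Y})[g]$ at a multiplicative cost of $k$, and finally sums over the $2d$ facets, using $\sum_y \dv(G[y]) = d_g$ and $\bigwedge_y H_y \refines \preclassified{Q}[g]$. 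Your step~(3) tries to run the recursion globally on $F_g$, but boundary faces in $F_g$ sit on different facets of $g$ (and of its descendants), so ``subdividing toward a facet multiplies the relevant cell count by at most $k^{d-1}$'' is not a well-posed induction hypothesis: a single subdivision can create cells relevant to several facets simultaneously, and your charging of ``at most $k-1$ extra interior siblings per facet'' double-counts or undercounts depending on how the facets overlap.

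Second, your explanation of the factor $k$---``the forth-and-back (compactness) behaviour of the SFC along a single facet''---is incorrect. The SFC ordering plays no role whatsoever in this lemma. The factor $k$ is purely geometric: in the tree $\tree(\hat{Y})$ over $g$, every box that is not ``close to $y$'' (i.e.\ whose minimal $x_1$-coordinate differs from its parent's) is a leaf, and removing those leaves yields exactly the $(d-1)$-dimensional tree $\tree(Y)$. Since each node of $\tree(Y)$ has at most $k$ preimages (one close box plus its $k-1$ perpendicular siblings, all leaves), $|\tree(\hat{Y})| \le k\,|\tree(Y)|$. That is where $k$ comes from, and it is what makes the constant come out as $\frac{k}{1-k^{-d+1}}$ rather than something involving $k^d$.
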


\begin{proof}
We first settle the case of one boundary of $g$. For any grid $G$ and any $s \in \subcube^c G$, $G$ naturally induces the grid $G[s] := \{ t \in \subcube^c G : t \subseteq s \}$ on $s$. For any cell $g \in G$ and any $y \in \subcube g$, let $Y = \{ z \in \boundary P, z \subseteq y \}$ and $\hat{Y} = \{ \hat{z} : z \in Y \}$. Let $H_y$ be the minimal grid of $\hat{Y}$ on $g$: $H_y = \grid( \hat{Y} )[g]$. 

\begin{claim} \label{claim:grid_classification}
For any $y \in \subcube g$, $\dv(H_y) \le \frac{k}{1 - k^{-d+1}} \dv(G[y])$.
\end{claim}

\begin{proof}
We place ourselves inside of $g$. Without loss of generality, say $y = \{ x \in g : x_1 = 0 \}$. Say a box $v \subseteq g$ is close to $y$ if $\min\{ x_1 : x_1 \in v \} = \min\{ x_1 : x_1 \in \parent{v} \}$. In $\tree( \hat{Y} )$, any box that is not close to $v$ is a leaf. Then it is easily seen that $\tree(Y)$ is obtained by removing all leaves in $\tree( \hat{Y} )$ corresponding to boxes that are not close to $Y$. Therefore,
\[
    \dv(H_y) \le |\tree( \hat{Y} )| \le k | \tree(Y) | \le \frac{k}{1 - k^{-d+1}} \dv(G[y]).
\]
\end{proof}

\noindent
 We return to the proof of Lemma \ref{lem:dv_one_cell}, where we recognise
 that $\sum_{y \in \subcube g} \dv(G[y]) = d_g$. Let $H = \preclassified{Q}[g]$ with $\nu_g=|H|$. Since $\bigwedge_{y \in \subcube g} H_y \refines H$, we obtain
\[ 
    \nu_g = |H| \le \sum_y |H_y| \le \frac{k}{1 - k^{-d+1}} d_g.
\]
\end{proof}

\begin{lemma} \label{lem:dv_tilde_Q}
$\dv(\preclassified{Q}) = O(M)$. 
\end{lemma}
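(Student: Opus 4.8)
The plan is to sum the per-cell estimate of Lemma~\ref{lem:dv_one_cell} over the cells of the shape $Q$ and then exploit that the minimal grid $\grid(Q)$ is small. Because $\preclassified{Q}$ is a refinement of $Q$ and $Q$ is a shape (so its cells are pairwise interior-disjoint), every cell of $\preclassified{Q}$ is contained in a unique $g \in Q$; hence $\dv(\preclassified{Q}) = \sum_{g \in Q} \nu_g$ with $\nu_g$ as in Lemma~\ref{lem:dv_one_cell}. Applying that lemma cell by cell gives
\[
	\dv(\preclassified{Q}) \;\le\; \frac{k}{1 - k^{-d+1}} \sum_{g \in Q} d_g .
\]

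Next I would rewrite $\sum_{g \in Q} d_g$ as $|\boundary Q|$. The sets $\{ z \in \boundary Q : z \subseteq g \}$, indexed by $g \in Q$, are pairwise disjoint and cover $\boundary Q$: each $z \in \boundary Q$ lies, by definition of $\boundary$, in exactly one cell of $Q$, since a $(d-1)$-subcube contained in two distinct shape cells would sit on their common facet and hence in the interior of $\content{Q}$, contradicting $z \in \boundary Q$. Therefore $\dv(\preclassified{Q}) \le \frac{k}{1 - k^{-d+1}}\, |\boundary Q|$.

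The last step is to bound $|\boundary Q|$. Every element of $\boundary Q$ is, by definition, a $(d-1)$-subcube of $\grid(Q)$, i.e.\ a facet of a cell of $\grid(Q)$ (a subdivided fragment of a boundary facet of a shape cell simply coincides with a facet of the smaller cell of $\grid(Q)$ on its exterior side). Hence $|\boundary Q| \le 2d\,|\grid(Q)|$. By Lemma~\ref{lem:grid_of_shape}, $\grid(Q)$ has at most $2k^d$ cells of each depth, and its depth is at most $\depth(Q) \le M$, so $|\grid(Q)| \le 2k^d(M+1)$ and $|\boundary Q| = O(M)$. Combining the three displays yields $\dv(\preclassified{Q}) = O(M)$.

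The routine arithmetic (the precise constant, the factor $M+1$) is harmless. The only place that needs care is the middle step: one must make sure that the ``subdivided fragments'' counted by $d_g$ are distributed one-to-one among the shape cells and never get finer than the resolution of $\grid(Q)$, so that the whole collection $\boundary Q$ can be charged against the $O(M)$ facets of $\grid(Q)$. This is exactly where the shape property of $Q$ and the fact that $\boundary$ is taken with respect to $\grid(Q)$ are used.
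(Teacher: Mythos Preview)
Your proof is correct and follows essentially the same route as the paper: sum the per-cell bound from Lemma~\ref{lem:dv_one_cell}, then control $\sum_g d_g$ by the size of $\grid(Q)$ via Lemma~\ref{lem:grid_of_shape}. The only cosmetic difference is that the paper sums over all cells of $\grid(Q)$ and bounds the total adjacency count by $4d\,|\grid(Q)|$, whereas you identify $\sum_{g\in Q} d_g = |\boundary Q|$ and bound that by $2d\,|\grid(Q)|$; both yield the same $O(M)$.
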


\begin{proof}
Let $\grid(Q) = G = \{ g_1, \dots, g_n \}$. Firstly, we have
\begin{equation} \label{equation:4dn}
    \sum_{i=1}^n |\{ g_j : g_i \sim g_j \}| = 2 |\{ \{i,j\} : g_i \sim g_j \}| \le 2 |\subcube G| = 4dn.
\end{equation}

For any $1 \le i \le n$, let $d_i := |\{ g_j : g_i \sim g_j \}|$ and $\nu_i := |\{ r \in \preclassified{Q} : r \subseteq g_i \}|$. We have 
\begin{alignat*}{3}
	\dv(\preclassified{Q}) &\le \sum_{i=1}^n \nu_i & \qquad & \textit{Definition of $\preclassified{Q}$}\\
	&\le \frac{k}{1 - k^{-d+1}} \sum_{i=1}^n d_i & \qquad & \textit{Lemma \ref{lem:dv_one_cell}}\\
	&\le \frac{k}{1 - k^{-d+1}} 4dn & \qquad & \textit{Equation \eqref{equation:4dn}}\\
	&\le \frac{k}{1 - k^{-d+1}} 4d 2k^d M & \qquad & \textit{Lemma \ref{lem:grid_of_shape}}.
\end{alignat*}	
\end{proof}

\paragraph{Step 4: Final theorem proof}

We gather all our results and finally prove the theorem.

\begin{proof}[Proof of Theorem \ref{th:classification}]
We prove that $\dv(\classified{P}) = \dv(P) + O(M)$. We have
\begin{alignat*}{3}
	\dv(\classified{P}) &\le \dv(P) + \dv(\classified{Q})   &\qquad& \textit{Corollary \ref{cor:dvP*}}\\
			&= \dv(P) + \dv(\preclassified{Q}) + O(M)   &\qquad& \textit{Lemma  \ref{lem:pre-classified_v_classified}}\\
			&= \dv(P) + O(M)    &\qquad& \textit{Lemma \ref{lem:dv_tilde_Q}}.
\end{alignat*}

Let us now prove the result on the discrete surface. We note that subdividing a cell can only increase the discrete surface by at most $2d k^{d-1}$, thus 
\[
	\ds(\classified{P}) - \ds(P) \le (2d k^{d-1}) (\dv(\classified{P}) - \dv(P)) = O(M).
\]
\end{proof}

\section{Shape-class-regular partitions}
\label{sec:shape-class-regular-partitions}

\subsection{Shape-class-regular partitions for a given shape}

The strategy to obtain the highest surface to volume ratio for a given depth $M$ and volume $V$ is to do as such. First, select a shape $Q$ and classify it. Second, refine towards all the cells of class $d$ (as each contributes $d$ facets towards the surface) up to depth $M$, then refine towards the cells of class $d-1$, and so on until we reach $V$ cells in total. Therefore, in this section we determine the volume and the surface of the partitions obtained in that fashion.

Let $Q$ be a shape of depth $N$, with classification $Q^*$ of depth $N^*$. Then for all $0 \le c \le d$ and all $M \ge N^*$, we introduce the \Define{shape-class-regular partition} $\shapeclassregular_c(Q,M)$ as follows. First, let
\[
    \shapeclasscells(Q,M) := \{ v \subseteq \content{Q} : \depth(v) = M \}
\]
be the boxes of depth $M$ contained in $\content{Q}$. Since $\shapeclasscells(Q,M) \refines Q^*$, it is classified. Then for any $0 \le c \le d$, let 
\begin{align*}
	\shapeclasscells_c(Q,M) &:= \{v \in \shapeclasscells(Q,M) : \class(v, \shapeclasscells(Q,M)) \ge c \}, \\
	\shapeclassregular_c(Q,M) &:= \grid(\shapeclasscells_c(Q,M))[Q],\\
	V_c(Q,M) &:= \dv(\shapeclassregular_c(Q,M)),\\
	S_c(Q,M) &:= \ds( \shapeclassregular_c(Q,M) ),\\
	R_c(Q,M) &:= \dr( \shapeclassregular_c(Q,M) ).
\end{align*}
We then have
\begin{align*}
    \shapeclasscells_0(Q,M) &\supseteq \shapeclasscells_1(Q,M) \supseteq \dots \supseteq \shapeclasscells_d(Q,M),\\ 
	\shapeclassregular_0(Q,M) &\refines \shapeclassregular_1(Q,M) \refines \dots \refines \shapeclassregular_d(Q,M),\\
	V_0(Q,M) &\ge V_1(Q,M) \ge \dots \ge V_d(Q,M),\\
	S_0(Q,M) &\ge S_1(Q,M) \ge \dots \ge S_d(Q,M).
\end{align*}


For $0 \le c \le d-1$, the asymptotic behaviour of $V_c(Q,M)$ and $S_c(Q,M)$ is dictated by $\cv(\boundary^c Q)$. However, this is not the case for $c=d$. Instead, for all $1 \le l \le N^*$ and $0 \le r \le d$, let $A(Q^*, l, r)$ denote the elements in $\boundary^r Q^*$ that belong to a cell of $Q^*$ of depth $l$:
\[
    A(Q^*, l, r) = |\{ s \in \boundary^r Q^* : s \subseteq q^* \in Q^*, \depth(q^*) = l  \}|.
\]
Then let
\[
    \gamma(Q,M) := \sum_{l=1}^{N^*} A(Q^*,l,d) (M- l).
\]

The main result of this section is Theorem \ref{th:VcSc(Q,M)}, where we determine the asymptotic behaviour of $V_c(Q,M)$ and $S_c(Q,M)$; the asymptotic behaviour of $R_c(Q,M)$ then immediately follows. The rest of this section is devoted to its proof.


\begin{theorem} \label{th:VcSc(Q,M)}
Let $Q$ be a shape. For any $0 \le c \le d-1$, we have
\begin{align*}
    V_c(Q,M) &= \frac{k^d - 1}{k^{d-c} - 1} \cv(\boundary^c Q) k^{M(d-c)} + o(k^{M(d-c)}),\\
    S_c(Q,M) &= c \frac{k^{d-1} - 1}{k^{d-c} - 1} \cv(\boundary^c Q) k^{M(d-c)} + o(k^{M(d-c)}).
\end{align*}
For $c=d$, we have
\begin{align*}
    V_d(Q,M) &= (k^d - 1) \gamma(Q,M) + o(M),\\
    S_d(Q,M) &= d(k^{d-1} - 1) \gamma(Q,M) + o(M).
\end{align*}
\end{theorem}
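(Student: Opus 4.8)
The plan is to analyse the construction $\shapeclassregular_c(Q,M)$ by decomposing the count of cells (respectively facets) according to which cell of the classified shape $Q^*$ they descend from, and then summing a geometric series in $M$. First I would reduce everything to $\shapeclasscells(Q,M)$, the uniform grid of depth $M$ inside $\content{Q}$, since $\shapeclasscells_c(Q,M)$ is just the set of its cells of class at least $c$ and $\shapeclassregular_c(Q,M) = \grid(\shapeclasscells_c(Q,M))[Q]$ is obtained by coarsening the interior. The key combinatorial observation (mirroring Lemma \ref{lem:sv_subdivision}) is local: a depth-$M$ box $v$ near a given $r$-face of $Q^*$ has $\class(v,\shapeclasscells(Q,M)) = r'$ where $r'$ counts how many of the $r$ relevant coordinates of $v$ sit at the extreme slot. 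So for a fixed $c$-subcube $s$ of $Q^*$ of continuous volume $\cv(s) = k^{-N^*(d-c)}$ — wait, of $\cv(s) = (\text{size})^{d-c}$ in general — the number of depth-$M$ boxes abutting $s$ with class exactly $c$ is $k^{(M-\depth(s))(d-c)}$ up to a bounded factor, and these contribute $c$ facets each (by Lemma \ref{lem:class_facets}, since they are classified).

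For the range $0 \le c \le d-1$ I would then argue as follows. By Lemma \ref{lemma:cvQ}, $\cv(\boundary^c Q) = \cv(\boundary^c Q^*)$, so it suffices to track the $(d-c)$-subcubes $s$ that lie on $\boundary^c Q^*$. Each such $s$ of size $k^{-l}$ (so $\depth(s)=l$) generates, in $\shapeclasscells_c(Q,M)$, exactly $k^{(M-l)(d-c)}$ depth-$M$ boxes of class $\ge c$ touching it, and then the interior of the region they span collapses, under $\grid(\cdot)[Q]$, into a geometric cascade of coarser cells; summing $k^{d\cdot j}$ levels of a $k^{-1}$-scaling box gives the factor $\frac{k^d-1}{k^{d-c}-1}$ relative to the count $k^{(M-l)(d-c)}$ of fine cells. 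Multiplying by the number of such $s$ at each scale and using $\cv(\boundary^c Q) = \sum_l (\#s \text{ at scale } l)\, k^{-l(d-c)}$ assembles exactly $\frac{k^d-1}{k^{d-c}-1}\cv(\boundary^c Q) k^{M(d-c)}$, with the lower-order contributions (from boxes that touch $s$ but also touch a higher-codimension face, and from the $O(1)$-per-subcube slack in Lemma \ref{lem:dv_one_cell}) being $o(k^{M(d-c)})$ since they scale like $k^{M(d-c-1)}$ or are bounded by $O(M)$. The surface count is identical except each fine cell near an exact-$c$ face contributes $c$ facets and the interior cascade of coarse cells contributes none, so the geometric factor becomes $c\frac{k^{d-1}-1}{k^{d-c}-1}$; alternatively one derives $S_c$ from $V_c$ via Lemma \ref{lem:sv_subdivision} applied to the difference $\shapeclassregular_c(Q,M)$ versus the one-cell-per-shape-cell baseline.

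The case $c=d$ is genuinely different and will be the main obstacle. Here one is refining only towards \emph{vertices} of $Q^*$, and the volume no longer grows polynomially in $k^M$ but only linearly in $M$: a single vertex $s$ of $Q^*$ lying in a cell $q^*$ of depth $l$ spawns, at each intermediate level $l \le j \le M$, a bounded number (one, up to the $k$-fold forth-and-back slack of the SFC and the $2k^d$-per-level bound of Lemma \ref{lem:grid_of_shape}) of new cells, yielding $\asymp (M-l)$ cells total, each contributing $k^d-1$ to the volume increment by Lemma \ref{lem:sv_subdivision}. Summing over all vertices weighted by the depth of their host cell gives precisely $(k^d-1)\gamma(Q,M)$, and the facet count gives $d(k^{d-1}-1)\gamma(Q,M)$. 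The delicate point I expect to wrestle with is pinning down the constant in front of $(M-l)$ exactly — showing that asymptotically exactly one nontrivial cell is added per intermediate level per vertex, with all corrections absorbed into $o(M)$ — which requires a careful argument that the vertex-refinement cascades emanating from distinct vertices, and the forth-and-back passes of the curve, overlap only in $O(1)$-many levels; this is where the space-filling property and Lemma \ref{lem:grid_of_shape} must be used most carefully.
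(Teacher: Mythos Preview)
Your outline for $0 \le c \le d-1$ is essentially the paper's own argument. The paper proceeds by first computing the volume and ``a-surface'' of the building block $\classregular(c,r,M)$ explicitly (this is Lemma~\ref{lem:class-regular}), then replacing each cell $q^* \in Q^*$ of depth $l$ and class $r$ by $\classregular(c,r,M-l)$, and finally recognising the resulting double sum $\sum_{l,r} A(Q^*,l,r)\binom{r}{c} k^{-l(d-c)}$ as $\cv(\boundary^c Q^*) = \cv(\boundary^c Q)$. Your indexing by $(d-c)$-faces of $Q^*$ rather than by cells of $Q^*$ is equivalent, since a classified cell of class $r$ carries exactly $\binom{r}{c}$ such faces. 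Your sketch of the factor $\frac{k^d-1}{k^{d-c}-1}$ is correct in spirit but would need the detailed layer-by-layer count carried out in Lemma~\ref{lem:class-regular}.

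For $c=d$ you have manufactured an obstacle that is not there. The construction $\shapeclassregular_d(Q,M)$ is purely geometric: it is the minimal grid of the depth-$M$ boxes containing vertices of $Q$, restricted to $\content{Q}$. The space-filling curve plays no role whatsoever, so neither ``forth-and-back slack'' nor Lemma~\ref{lem:grid_of_shape} is relevant. Moreover, the vertex-refinement cascades cannot overlap: since $Q^*$ is classified, each cell of class $d$ contains exactly one vertex of $Q$, and distinct cells of $Q^*$ have disjoint interiors, so the copy of $\classregular(d,d,M-l)$ sitting inside each class-$d$ cell of depth $l$ is disjoint from all others. The computation is therefore the direct sum
\[
    V_d(Q,M) = \sum_{l} A(Q^*,l,d)\,\dv(\classregular(d,d,M-l)) + O(1) = (k^d-1)\sum_l A(Q^*,l,d)(M-l) + O(1),
\]
which is $(k^d-1)\gamma(Q,M) + o(M)$ on the nose. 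The $O(1)$ term accounts only for the finitely many cells of $Q^*$ of class less than $d$, which are untouched, and the $+1$ constant in $\dv(\classregular(d,d,M-l)) = (k^d-1)(M-l)+1$. There is nothing delicate here.
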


We introduce the notation
\[
    \rho = \rho(k,d) := \frac{k^{d-1} - 1}{k^d - 1}.
\]

\begin{corollary}
Let $Q$ be a shape. For any $0 \le c \le d$, we have
\[
    \lim_{M \to \infty} R_c(Q,M) = c \rho.
\]
\end{corollary}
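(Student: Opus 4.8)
The plan is to derive this corollary directly from Theorem~\ref{th:VcSc(Q,M)} by computing the ratio $R_c(Q,M) = S_c(Q,M)/V_c(Q,M)$ and taking the limit $M \to \infty$. The work splits into two cases mirroring the theorem's own dichotomy.

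First I would handle the case $0 \le c \le d-1$. Here Theorem~\ref{th:VcSc(Q,M)} gives both $V_c(Q,M)$ and $S_c(Q,M)$ as a leading term of order $k^{M(d-c)}$ plus an error $o(k^{M(d-c)})$. Dividing, the common factor $\cv(\boundary^c Q) k^{M(d-c)}$ cancels (note $\cv(\boundary^c Q) > 0$ whenever $\boundary^c Q \neq \emptyset$, which holds for any genuine shape, so the denominator's leading coefficient is nonzero), and the error terms contribute a vanishing correction. One is left with
\[
    \lim_{M\to\infty} R_c(Q,M) = \frac{c\,\dfrac{k^{d-1}-1}{k^{d-c}-1}}{\dfrac{k^d-1}{k^{d-c}-1}} = c\,\frac{k^{d-1}-1}{k^d-1} = c\rho,
\]
where the factor $1/(k^{d-c}-1)$ cancels between numerator and denominator and the definition of $\rho$ is applied. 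This is just arithmetic once the theorem is in hand.

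For $c = d$ the argument is the same in spirit but uses the second pair of formulas: $V_d(Q,M) = (k^d-1)\gamma(Q,M) + o(M)$ and $S_d(Q,M) = d(k^{d-1}-1)\gamma(Q,M) + o(M)$. Here I would first observe that $\gamma(Q,M) = \sum_{l=1}^{N^*} A(Q^*,l,d)(M-l)$ grows linearly in $M$ (since $Q^*$ has at least one vertex, some $A(Q^*,l,d) \ge 1$, and $N^*$ is a fixed constant depending only on $Q$), so $\gamma(Q,M) = \Theta(M)$ and the $o(M)$ terms are genuinely negligible against the leading $\Theta(M)$ terms. Dividing then cancels $(k^d-1)\gamma(Q,M)$ and yields $\lim_{M\to\infty} R_d(Q,M) = d(k^{d-1}-1)/(k^d-1) = d\rho$, completing all cases.

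The only real subtlety — and the thing worth stating explicitly rather than grinding through — is verifying that the denominators do not vanish asymptotically, i.e.\ that the leading coefficients $\cv(\boundary^c Q)$ (for $c<d$) and $\gamma(Q,M)$ (for $c=d$) are bounded away from zero so that the $o(\cdot)$ error genuinely disappears in the quotient. Both follow from $Q$ being a non-degenerate shape: its $c$-boundary is nonempty for every $0 \le c \le d$, giving $\cv(\boundary^c Q) > 0$ and at least one vertex feeding a positive coefficient into $\gamma(Q,M)$. With that observation recorded, the corollary is immediate from Theorem~\ref{th:VcSc(Q,M)}.
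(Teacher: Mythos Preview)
Your proposal is correct and matches exactly what the paper intends: the corollary is stated immediately after Theorem~\ref{th:VcSc(Q,M)} with no proof, so the paper treats the division of $S_c(Q,M)$ by $V_c(Q,M)$ and passage to the limit as immediate. Your explicit check that the leading coefficients $\cv(\boundary^c Q)$ and $\gamma(Q,M)$ are positive is a welcome bit of care that the paper leaves tacit.
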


The main building block for this step is obtained by focusing on one cell only. If the cell in question has class $r$, then we need to consider the \BF{class-regular grid} $\classregular(c,r,M)$ for $0 \le c \le r \le d$, where we repeatedly refine towards all the $c$-subcubes of the cell containing the element of $\boundary^r P$. 

We formally define the class-regular grid as follows; we place ourselves in $\hypercube$. First of all, for any box $v = [v_1 k^{-l}, (v_1 + 1) k^{-l}] \times \dots \times [v_d k^{-l}, (v_d + 1) k^{-l}]$ of depth $l$ and any $1 \le r \le d$, we denote
\[
    \ax(v, r) := |\{i : 1 \le i \le r, v_i = 0\}|.
\]
Then the class-regular grid is 
\[
    \classregular(c,r,M) := \grid \left( \{ v: \depth(v) = M, \ax(v,r) \ge c \} \right).
\]
A few examples of class-regular grids include:
\begin{itemize}
    \item $\classregular(0,r,M)$ is the regular grid of depth $M$;
    
    \item $\classregular(2,2,M)$ is the ``corner grid,'' illustrated in Figure \ref{fig:introduction:corner};
    
    \item $\classregular(1,2,M)$ is the ``two-side grid,'' illustrated in Figure \ref{fig:2sides};
    
    \item $\classregular(1,1,M)$ is the ``side grid,'' illustrated in Figure \ref{fig:2sides}.
\end{itemize}

We now give an explicit characterisation of $\classregular(c,r,M)$. It will be useful to introduce the following set of cells:
\begin{align*}
	\bx(v,r) &:= |\{i : 1 \le i \le r, v_i \le k-1\}|,\\
	\classcells(l,a,b) &:= \{v : \depth(v) = l, \ax(v,r) = a, \bx(v,r) = b \}. 
\end{align*}
Technically, the definition of $\classcells(l,a,b)$ depends on $r$, but the value of the latter will be clear from the context.

\begin{lemma} \label{lem:subdivision}
The cells of the class-regular grid are $\classregular(c,r,M) = \bigcup_{l=1}^M L_l$, where the cells of depth $l$ are given by 
\begin{align*}
	L_l &= \bigcup \{ \classcells(l,a,b) : 0 \le a \le c-1, c \le b \le r\} \quad \text{for } 1 \le l \le M-1,\\
	L_M &=  \bigcup \{ \classcells(M,a,b) : 0 \le a \le r, c \le b \le r \}.
\end{align*}
\end{lemma}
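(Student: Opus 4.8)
The plan is to characterize which boxes of depth $l \le M$ survive in the minimal grid $\grid(\{v : \depth(v) = M, \ax(v,r) \ge c\})$, by understanding exactly which intermediate boxes get subdivided. Recall that $\grid(X)$ is built by starting from $\{\hypercube\}$ and repeatedly subdividing any box that strictly contains an element of $X$. So a box $w$ of depth $l < M$ is subdivided (i.e.\ is \emph{not} a cell of the grid) if and only if it contains at least one depth-$M$ box $v$ with $\ax(v,r) \ge c$; conversely $w$ \emph{is} a cell of the grid precisely when it is not subdivided but its parent is, i.e.\ the parent contains such a $v$ but $w$ does not. A depth-$M$ box is always a cell exactly when it itself satisfies $\ax(v,r) \ge c$, so $L_M = \bigcup\{\classcells(M,a,b) : 0 \le a \le r,\ c \le b \le r\}$ — here the condition $b \ge c$ encodes ``$w$'s parent contains a qualifying descendant'' while $a \le r$ is automatic. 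That handles the $l = M$ case.

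For $1 \le l \le M-1$, first I would translate the ``$w$ contains a qualifying depth-$M$ descendant'' condition into coordinate language. Writing $w = (w_1,\dots,w_d)$ at depth $l$, its depth-$M$ descendants are exactly the boxes whose first $l$ digits (base $k$) agree with $w$; among the coordinates $1 \le i \le r$, coordinate $i$ of a descendant can be made $0$ iff $w_i = 0$ (all later digits zero), and it will be forced nonzero iff $w_i \ge 1$. Hence the maximum possible value of $\ax(\cdot,r)$ over descendants of $w$ is exactly $\#\{i \le r : w_i = 0\} = \ax(w,r)$ — wait, not quite: it is $\ax(w,r)$ only if we also count coordinates that are currently ``$0$ in the truncated sense.'' More carefully, a descendant $v$ has $v_i = 0$ (as a depth-$M$ index) iff $w_i = 0$ and the remaining $M-l$ digits are all zero; so among $i \le r$, the descendant achieving the maximum has $\ax(v,r) = \#\{i \le r : w_i = 0\}$. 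Thus \emph{some} descendant qualifies ($\ax(v,r) \ge c$) iff $\#\{i\le r: w_i = 0\} \ge c$. Meanwhile the analogous count with ``$\le k-1$'' is $\bx(w,r) = \#\{i \le r : w_i \le k-1\}$, but at depth $l < M$ every coordinate $w_i$ lies in $[k^l]$, so $w_i \le k-1$ is a genuine restriction only detecting the leading digit; after unwinding definitions one sees $\bx$ here plays the role of tracking the parent. I would make this precise and conclude: $w$ (depth $l < M$) is subdivided iff $\ax(w,r) \ge c$, hence $w$ is a cell iff $\ax(w,r) \le c-1$ and $\ax(\parent{w},r) \ge c$; the latter translates, via the relation between a box and its parent's qualifying-descendant count, into $\bx(w,r) \ge c$. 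This yields $L_l = \bigcup\{\classcells(l,a,b) : 0 \le a \le c-1,\ c \le b \le r\}$.

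The main obstacle, and the step requiring the most care, is the bookkeeping connecting $\ax(\parent{w},r)$ to the pair $(\ax(w,r),\bx(w,r))$ — i.e.\ verifying that ``$\parent{w}$ contains a qualifying descendant but $w$ does not'' is equivalent to ``$\ax(w,r) \le c-1$ and $\bx(w,r) \ge c$.'' One must check that passing to the parent (dropping the last base-$k$ digit) changes which coordinates can be zeroed out among $i \le r$ exactly in the way measured by $\bx$; concretely, coordinate $i$ of $\parent{w}$ can be extended to a zero coordinate at depth $M$ iff $\lfloor w_i/k \rfloor = 0$, equivalently $w_i \le k-1$, which is the definition of the count $\bx(w,r)$. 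So $\ax(\parent{w},r) \ge c$ iff $\bx(w,r) \ge c$, and the characterization of $L_l$ follows. Finally I would note the edge case $l = M$ is consistent: there ``$w$ not subdivided'' is vacuous (depth-$M$ boxes are never subdivided), so only the parent condition $\bx(w,r) \ge c$ remains together with the automatic $0 \le a \le r$, matching the stated $L_M$. Assembling $\classregular(c,r,M) = \bigcup_{l=1}^M L_l$ completes the proof.
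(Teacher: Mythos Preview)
Your approach is correct and takes a genuinely different route from the paper. The paper proves the lemma by induction on $M$: it checks the base case $M=1$ (the regular depth-$1$ grid, where every cell has $\bx(v,r) = r \ge c$), and for the inductive step observes that passing from $\classregular(c,r,M)$ to $\classregular(c,r,M+1)$ amounts to subdividing exactly those cells of $L_M(M)$ with $\ax \ge c$, leaving the cells with $\ax \le c-1$ as the new $L_M(M+1)$ and producing as children exactly the depth-$(M+1)$ boxes whose parent has $\ax \ge c$, i.e.\ those with $\bx \ge c$.

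Your argument instead characterises the leaves of $\grid(X)$ directly via the condition ``parent is subdivided but $w$ is not,'' and then translates both halves into coordinate language. The crux is the identity $\ax(\parent{w},r) = \bx(w,r)$, which you correctly derive from $\lfloor w_i/k \rfloor = 0 \iff w_i \le k-1$; this is precisely why $\bx$ appears in the statement at all, and your approach makes that transparent. The paper's induction is perhaps easier to verify mechanically, while yours is more conceptual and dispenses with the induction entirely.

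One expository slip to fix: early in the $l=M$ discussion you write ``A depth-$M$ box is always a cell exactly when it itself satisfies $\ax(v,r) \ge c$,'' which is false (e.g.\ for $c=r=1$, $k=d=2$, $M=2$, the box with $v_1=1$ is a cell but has $\ax = 0$). The correct condition is $\bx(v,r) \ge c$, i.e.\ the parent is subdivided, as you yourself state at the end (``only the parent condition $\bx(w,r) \ge c$ remains''). The argument is ultimately sound, but the earlier sentence should be corrected to match.
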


\begin{proof}
A box $v$ contains an element of $\classcells(M,a,b)$ for some $c \le a \le b \le r$ if and only if $\ax(v,r) \ge c$. Thus, starting from the one-cell grid $\{ \mathbb{H} \}$ and subdividing boxes with $\ax(v,r) \ge c$ repeatedly $M$ times yields $\classregular(c,r,M)$.

We prove the result by induction on $M$; we denote the set of cells with depth $l$ in $\classregular(c,r,M)$ as $L_l(M)$. For $M=1$, subdividing the one-cell grid yields the regular grid of depth one. Since any cell $v$ of depth one satisfies $\bx(v,r) = r \ge c$, we obtain
\[
	L_1(1) = \{v : \depth(v) = 1 \} = \bigcup \{ \classcells(1,a,b) : 0 \le a \le r, c \le b \le r \}.
\]
Now suppose it holds for $M$, and subdivide the cells $v$ with $\ax(v,r) \ge c$. These only occur in $L_M(M)$, hence the cells of depth $l$ remain $L_l(M) = L_l(M+1)$ for all $l \le M-1$, and the cells of depth $M$ are exactly $\{v \in L_M(M) : \ax(v,r) \le c-1\} = L_M(M+1)$. Moreover, the subdivided cells produce the following cells of depth $M+1$: $\{v : \depth(v) = M+1, \bx(v,r) \ge c\} = L_{M+1}(M+1)$.
\end{proof}

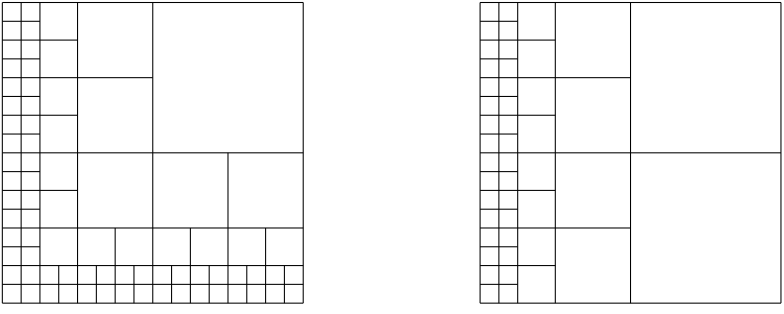
\begin{figure}
\centering
\setlength{\unitlength}{0.5cm}
\begin{picture}(8,8)
\put(0,0){\line(0,1){8}}
\put(0,0){\line(1,0){8}}
\put(8,0){\line(0,1){8}}
\put(0,8){\line(1,0){8}}

\put(4,0){\line(0,1){8}}
\put(0,4){\line(1,0){8}}

\put(2,0){\line(0,1){8}}
\put(0,2){\line(1,0){8}}

\put(6,0){\line(0,1){4}}
\put(0,6){\line(1,0){4}}

\put(1,0){\line(0,1){8}}
\put(0,1){\line(1,0){8}}

\put(7,0){\line(0,1){2}}
\put(0,7){\line(1,0){2}}
\put(3,0){\line(0,1){2}}
\put(5,0){\line(0,1){2}}
\put(0,3){\line(1,0){2}}
\put(0,5){\line(1,0){2}}

\put(0.5,0){\line(0,1){8}}
\put(0,0.5){\line(1,0){8}}

\put(1.5,0){\line(0,1){1}}
\put(2.5,0){\line(0,1){1}}
\put(3.5,0){\line(0,1){1}}
\put(4.5,0){\line(0,1){1}}
\put(5.5,0){\line(0,1){1}}
\put(6.5,0){\line(0,1){1}}
\put(7.5,0){\line(0,1){1}}

\put(0,1.5){\line(1,0){1}}
\put(0,2.5){\line(1,0){1}}
\put(0,3.5){\line(1,0){1}}
\put(0,4.5){\line(1,0){1}}
\put(0,5.5){\line(1,0){1}}
\put(0,6.5){\line(1,0){1}}
\put(0,7.5){\line(1,0){1}}

\end{picture}
\hspace{2cm}
\setlength{\unitlength}{0.5cm}
\begin{picture}(8,8)
\put(0,0){\line(0,1){8}}
\put(0,0){\line(1,0){8}}
\put(8,0){\line(0,1){8}}
\put(0,8){\line(1,0){8}}

\put(4,0){\line(0,1){8}}
\put(0,4){\line(1,0){8}}

\put(2,0){\line(0,1){8}}
\put(0,2){\line(1,0){4}}
\put(0,6){\line(1,0){4}}

\put(1,0){\line(0,1){8}}
\put(0,1){\line(1,0){2}}
\put(0,7){\line(1,0){2}}
\put(0,3){\line(1,0){2}}
\put(0,5){\line(1,0){2}}

\put(0.5,0){\line(0,1){8}}
\put(0,0.5){\line(1,0){1}}
\put(0,1.5){\line(1,0){1}}
\put(0,2.5){\line(1,0){1}}
\put(0,3.5){\line(1,0){1}}
\put(0,4.5){\line(1,0){1}}
\put(0,5.5){\line(1,0){1}}
\put(0,6.5){\line(1,0){1}}
\put(0,7.5){\line(1,0){1}}
\end{picture}
 \caption{
  The two-side adaptive grid $\classregular(c=1, r=2, M=4)$ (left) and the 
  one-side adaptive grid $\classregular(c=1, r=1, M=4)$ (right).
  \label{fig:2sides}
 }
\end{figure}

%
%
%
%

We shall use $\classregular(c,r,M)$ as a building block: we will replace every cell of a given partition by a grid $\classregular(c,r,M)$. As such, only part of the boundary (corresponding to the first coordinates being equal to zero) should be considered in the ``a-surface'' of $\classregular(c,r,M)$. Thus let
\[
	\as(\classregular(c,r,M)) := \sum_{v \in \classregular(c,r,M)} \ax(v,r).
\]

\begin{lemma} \label{lem:class-regular}
For all $0 \le c \le r \le d$ and $M$, the discrete volume and a-surface of $\classregular(c,r,M)$ are given as follows. For $0 \le c \le d-1$, we have
\begin{align*}
	\dv(\classregular(c,r,M)) &=  \frac{k^d-1}{k^{d-c}-1} \binom{r}{c} k^{M(d-c)} + o(k^{M(d-c)}),\\
	\as(\classregular(c,r,M)) &=  c \frac{k^{d-1}-1}{k^{d-c}-1} \binom{r}{c} k^{M(d-c)} + o(k^{M(d-c)}).
\end{align*}
For $c=d$, we have
\begin{align*}
	\dv(\classregular(d,d,M)) &=  (k^d-1)M + o(M),\\
	\as(\classregular(d,d,M)) &=  d (k^{d-1}-1)M + o(M).
\end{align*}
\end{lemma}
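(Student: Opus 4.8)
The plan is to feed the explicit level-by-level description of $\classregular(c,r,M)$ from Lemma~\ref{lem:subdivision} into a direct count, so that both $\dv(\classregular(c,r,M))$ and $\as(\classregular(c,r,M))$ become sums of the cardinalities $|\classcells(l,a,b)|$ over the prescribed index ranges, weighted by $1$ for the volume and by $a$ for the a-surface (since every $v \in \classcells(l,a,b)$ has $\ax(v,r)=a$). First I would record the closed form
\[
    |\classcells(l,a,b)| = \binom{r}{a}\binom{r-a}{b-a}(k-1)^{b-a}(k^l-k)^{r-b}(k^l)^{d-r},
\]
which counts the ways to pick the $a$ coordinates among the first $r$ that are equal to $0$, the $b-a$ of them lying in $\{1,\dots,k-1\}$, the $r-b$ lying in $\{k,\dots,k^l-1\}$, and the last $d-r$ coordinates free in $[k^l]$.

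For $0 \le c \le d-1$, I split the sum into the depths $l < M$, where the index set is $0 \le a \le c-1$, $c \le b \le r$, and depth $M$, where it is $0 \le a \le r$, $c \le b \le r$. To leading order $|\classcells(l,a,b)| = \binom{r}{a}\binom{r-a}{b-a}(k-1)^{b-a}k^{l(d-b)} + O(k^{l(d-b-1)+1})$, so the dominant slice is $b=c$, and every contribution with $b \ge c+1$, together with the discrepancies between $(k^l-k)^{r-b}(k^l)^{d-r}$ and $k^{l(d-b)}$, sum (as geometric series in $l$ with ratio at most $k^{d-c-1} < k^{d-c}$) to $o(k^{M(d-c)})$. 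On the $b=c$ slice I would use $\binom{r}{a}\binom{r-a}{c-a} = \binom{r}{c}\binom{c}{a}$ together with the binomial identities $\sum_{a=0}^{c}\binom{c}{a}(k-1)^{c-a} = k^c$ and $\sum_{a=0}^{c} a\binom{c}{a}(k-1)^{c-a} = ck^{c-1}$ (their truncations to $a \le c-1$ merely remove the $a=c$ term, giving $k^c-1$ and $c(k^{c-1}-1)$). Summing $\sum_{l=1}^{M-1} k^{l(d-c)} = \frac{k^{M(d-c)}-k^{d-c}}{k^{d-c}-1}$ and adding the depth-$M$ term, the volume coefficient becomes $\binom{r}{c}\bigl(k^c + \frac{k^c-1}{k^{d-c}-1}\bigr) = \binom{r}{c}\frac{k^d-1}{k^{d-c}-1}$ after the simplification $k^c(k^{d-c}-1) + (k^c-1) = k^d-1$, and the a-surface coefficient becomes $c\binom{r}{c}\bigl(k^{c-1} + \frac{k^{c-1}-1}{k^{d-c}-1}\bigr) = c\binom{r}{c}\frac{k^{d-1}-1}{k^{d-c}-1}$ via $k^{c-1}(k^{d-c}-1) + (k^{c-1}-1) = k^{d-1}-1$.

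For $c=d$ (so $r=d$) the count collapses: the only admissible second index is $b=d$, hence $(k^l-k)^{r-b}=1$ and $|\classcells(l,a,d)| = \binom{d}{a}(k-1)^{d-a}$ is independent of $l$. Thus each depth $l<M$ contributes $\sum_{a=0}^{d-1}\binom{d}{a}(k-1)^{d-a} = k^d-1$ cells and $\sum_{a=0}^{d-1} a\binom{d}{a}(k-1)^{d-a} = d(k^{d-1}-1)$ to the a-surface, while depth $M$ contributes $k^d$ and $dk^{d-1}$ respectively; summing over $l$ gives $(k^d-1)M + O(1)$ and $d(k^{d-1}-1)M + O(1)$, which is the claim with $O(1) = o(M)$.

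The combinatorial identities here are routine; I expect the only genuine obstacle to be the uniform bookkeeping of the error terms — making precise that replacing $(k^l-k)^{r-b}(k^l)^{d-r}$ by $k^{l(d-b)}$, discarding all slices $b \ge c+1$, and then summing over $l \le M$ really yields an error of order $o(k^{M(d-c)})$ (respectively $O(1)$ when $c=d$). I would dispatch this with the crude per-level estimate that the $b=c$ slice has size $\Theta(k^{l(d-c)})$ while every higher slice and every correction is $O(k^{l(d-c-1)+1})$, so the offending total is a geometric series dominated by $O(k^{M(d-c-1)+1}) = o(k^{M(d-c)})$ (and for $c=d$ there are no lower-order slices at all, only the additive $O(1)$ from separating the final level).
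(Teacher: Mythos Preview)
Your proposal is correct and follows essentially the same approach as the paper: both feed the level decomposition from Lemma~\ref{lem:subdivision} into the closed form for $|\classcells(l,a,b)|$, identify the $b=c$ slice as dominant via $|\classcells(l,a,b)|=\Theta(k^{l(d-b)})$, and then reduce to the binomial identities $\sum_{a}\binom{c}{a}(k-1)^{c-a}=k^c$ and $\sum_{a}a\binom{c}{a}(k-1)^{c-a}=ck^{c-1}$ together with a geometric sum in $l$. The only cosmetic difference is that the paper keeps the exact factor $(k^l-k)^{r-c}$ and evaluates the resulting sum $X=\sum_{l=1}^M (k^l-k)^{r-c}k^{l(d-r)}$ via binomial expansion before extracting the leading term, whereas you approximate $(k^l-k)^{r-b}(k^l)^{d-r}$ by $k^{l(d-b)}$ from the outset; the error bookkeeping is the same either way.
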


\begin{proof}
Let $K = \classregular(c,r,M)$. We have
\begin{align*}
	|\classcells(l,a,b)| &= \binom{r}{a} \binom{r-a}{b-a} (k-1)^{b-a} (k^l - k)^{r-b} k^{l(d-r)}\\
	&= \binom{r}{b} \binom{b}{a} (k-1)^{b-a} (k^l - k)^{r-b} k^{l(d-r)},\\
	|L_l| &= 
	\begin{cases}
		\sum_{a=0}^{c-1} \sum_{b=c}^r |\classcells(l,a,b)| &\text{if } l \le M-1\\
		\sum_{a=0}^r \sum_{b=c}^r |\classcells(M,a,b)| &\text{if } l = M,
	\end{cases}
\end{align*}
hence the discrete volume and a-surface of $K$ are given by
\begin{align*}	
	\dv(K) &= \sum_{l=1}^M \sum_{a=0}^{c-1} \sum_{b=c}^r |\classcells(l,a,b)| + \sum_{b=c}^r \sum_{a=c}^b |\classcells(M,a,b)|,\\
	\as(K) &= \sum_{l=1}^M \sum_{a=0}^{c-1} \sum_{b=c}^r a |\classcells(l,a,b)| + \sum_{b=c}^r \sum_{a=c}^b a |\classcells(M,a,b)|.
\end{align*}

\textbf{Case 1: $0 \le c \le d-1$.} Since $|\classcells(l,a,b)| = \Theta(k^{l(d-b)})$, the discrete volume can be simplified as
\[
	\dv(K) = \sum_{l=1}^M \sum_{a=0}^{c-1} |\classcells(l,a,c)| + |\classcells(M,c,c)| + o(k^{M(d-c)}).
\]
We have
\begin{align*}
	X &:= \sum_{l=1}^M (k^l-k)^{r-c} k^{l(d-r)}\\
	&= \sum_{l=1}^M k^{l(d-r)} \sum_{i=0}^{r-c} \binom{r-c}{i} k^{li} (-k)^{r-c-i}\\
	&= \sum_{i=0}^{r-c} \binom{r-c}{i}  (-k)^{r-c-i}  \sum_{l=1}^M k^{l(d-r)}  k^{li}\\
	&= \sum_{i=0}^{r-c} \binom{r-c}{i}  (-k)^{r-c-i} k^{d-r+i} \frac{k^{M(d-r+i)} - 1}{k^{d-r+i} - 1}\\
	&= \frac{k^{d-c}}{k^{d-c}-1} k^{M(d-c)} + o(k^{M(d-c)}).
\end{align*}
Thus
\begin{align*}
	\dv(K) &= \sum_{a=0}^{c-1} \binom{r}{c} \binom{c}{a} (k-1)^{c-a} X + \binom{r}{c} (k^M - k)^{r-c} k^{M(d-r)} + o(k^{M(d-c)})\\
	&= \binom{r}{c} \frac{k^{d-c}}{k^{d-c}-1} k^{M(d-c)} (k^c - 1) + \binom{r}{c} k^{M(d-c)} + o(k^{M(d-c)})\\
	&= \binom{r}{c} \frac{k^d - 1}{k^{d-c} - 1} k^{M(d-c)} + o(k^{M(d-c)}).
\end{align*}
Similarly, the a-surface is given by
\begin{align*}
	\as(K) &= \sum_{l=1}^M \sum_{a=0}^{c-1} a |\classcells(l,a,c)| + c |\classcells(M,c,c)| + o(k^{M(d-c)})\\
	&= \sum_{a=0}^{c-1} \binom{r}{c} a \binom{c}{a} (k-1)^{c-a} X + c \binom{r}{c} (k^M - k)^{r-c} k^{M(d-r)} + o(k^{M(d-c)})\\
	&= \binom{r}{c} \frac{k^{d-c}}{k^{d-c}-1} k^{M(d-c)} c (k^{c-1} - 1) + c \binom{r}{c} k^{M(d-c)} + o(k^{M(d-c)})\\
	&= c \binom{r}{c} \frac{k^{d-1} - 1}{k^{d-c} - 1} k^{M(d-c)} + o(k^{M(d-c)}).
\end{align*}

\textbf{Case 2: $c = d$.} We then have $K = \classregular(d,d,M)$ and
\begin{align*}
	\dv(K) &= \sum_{l=1}^M \sum_{a=0}^{d-1} \left( \binom{d}{a} (k-1)^{d-a} \right) + 1\\
	& = (k^d-1) M + 1,\\
	\as(K) &= \sum_{l=1}^M \sum_{a=0}^{d-1} \left( a \binom{d}{a} (k-1)^{d-a} \right) + d\\
	&= d(k^{d-1} - 1)M + d.
\end{align*}
\end{proof}

We now determine the volume $V_c(Q,M)$ and the surface $S_c(Q,M)$ for $0 \le c \le d-1$.



\begin{proof}[Proof of Theorem \ref{th:VcSc(Q,M)}]
Let $\classified{Q}$ be the classification of $Q$, with $A(\classified{Q},l,r)$ cells of depth $l$ and class $r$ for all $1 \le l \le \classified{N}$ and $0 \le r \le d$. By Lemma \ref{lemma:cvQ}, we have $\cv(\boundary^c Q) = \cv(\boundary^c \classified{Q})$.

\textbf{Case 1: $0 \le c \le d-1$.} Firstly, a cell of depth $l$ and class $r$ contributes to $r$ facets in $\boundary \classified{Q}$, hence to $\binom{r}{c}$ $(d-c)$-faces in $\boundary^c \classified{Q}$. Its contribution to $\cv( \boundary^c \classified{Q} ) = \cv(\boundary^c Q)$ is then $\binom{r}{c} k^{-l(d-c)}$. Adding up, we obtain
\[
	\cv(\boundary^c Q) = \sum_{l=1}^{\classified{N}} \sum_{r = c}^d A(\classified{Q},l,r) \binom{r}{c} k^{-l(d-c)}.
\]
Secondly, replace each cell of $\classified{Q}$ with depth $l$ and class $r \ge c$ by the class-regular grid $\classregular(c,r,M-l)$ to obtain $\shapeclassregular_c(Q,M)$. The volume and the surface of $\shapeclassregular_c(Q,M)$ are then
\begin{align*}
	\dv(\shapeclassregular_c(Q,M)) &= \sum_{l=1}^{\classified{N}} \sum_{r = c}^d A(\classified{Q},l,r) \dv(\classregular(c,r,M-l)) + O(1)\\
	&= \frac{k^d-1}{k^{d-c} - 1}  \sum_{l=1}^{\classified{N}} \sum_{r = c}^d A(\classified{Q},l,r) \binom{r}{c} k^{(M-l)(d-c)} + o(k^{M(d-c)})\\
	&= \frac{k^d-1}{k^{d-c} - 1} \cv( \boundary^c Q )  k^{M(d-c)} + o(k^{M(d-c)}),\\
	\ds(\shapeclassregular_c(Q,M)) &= \sum_{l=1}^{\classified{N}} \sum_{r = c}^d A(\classified{Q},l,r) \as(\classregular(c,r,M-l)) + O(1)\\
	&= c \frac{k^{d-1}-1}{k^{d-c} - 1}  \sum_{l=1}^{\classified{N}} \sum_{r = c}^d A(\classified{Q},l,r) \binom{r}{c} k^{(M-l)(d-c)} + o(k^{M(d-c)})\\
	&= c \frac{k^{d-1}-1}{k^{d-c} - 1} \cv( \boundary^c Q ) k^{M(d-c)} +  o(k^{M(d-c)}).
\end{align*}
The $O(1)$ term in each summation comes from the cells of class less than $c$ in $\classified{Q}$ which are not subdivided when constructing $\shapeclassregular_c(Q,M)$.

\textbf{Case 2: $c = d$.} By a similar reasoning as above, we have
\begin{align*}
	\dv(\shapeclassregular_d(Q,M)) &= \sum_{l=1}^{\classified{N}} A(\classified{Q},l,d) \dv(\classregular(d,d,M-l)) + O(1)\\
	&= (k^d-1)  \sum_{l=1}^{\classified{N}} A(\classified{Q},l,d) (M-l) + o(M)\\
	&= (k^d-1)  \gamma(Q, M) + o(M),\\
	\ds(\shapeclassregular_d(Q,M)) &= \sum_{l=1}^{\classified{N}} A(\classified{Q},l,d) \as(\classregular(d,d,M-l)) + O(1)\\
	&= d (k^{d-1}-1)  \sum_{l=1}^{\classified{N}} A(\classified{Q},l,d)  (M-l) + o(M)\\
	&= d (k^{d-1} - 1) \gamma(Q, M) + o(M).
\end{align*}
\end{proof}

\subsection{Maximising the discrete volume of shape-class-regular partitions}

The second step of the proof is to maximise the volume of shape-class-regular partitions $\shapeclassregular_c(Q,M)$ over all shapes $Q$ of partitions of the SFC $\Phi$.

We denote the set of all shapes $Q$ of partitions of all DSFCs $\Phi(G)$ as $\Shapes(\Phi)$. We also denote the set of shapes in $\Shapes(\Phi)$ such that $\depth(Q^*) \le M$ as $\Shapes(\Phi,M)$. For $0 \le c \le d$, we define
\[
    V_c(\Phi, M) := \max\{ V_c(Q,M) : Q \in \Shapes(\Phi,M) \}.
\]
We determine the asymptotic behaviour of $V_c(\Phi,M)$. We prove that
\begin{alignat*}{3}
    V_c(\Phi,M) &= \Theta( k^{M(d-c)} ) &\quad& \text{for } 0 \le c \le d-1,\\
    V_d(\Phi,M) &= \Theta(M^2)          &\quad& \text{for } c=d.
\end{alignat*}
In fact, we prove a much tighter result in Theorem \ref{th:Vc(phi)} below. For $0 \le c \le d-1$, we define
\[
    \mu_c(\Phi) := \frac{k^d - 1}{k^{d-c} - 1} \sup\{ \cv(\boundary^c Q) : Q \in \Shapes(\Phi) \}.
\]
(In particular, $\mu_0(\Phi) = (k^d - 1) \cv(\hypercube) = k^d-1$.) For $c = d$, we define
\begin{align*}
    \mu_d^-(\Phi) &:=  (k^d - 1)   \liminf_{M \to \infty} \max\left\{ \frac{\gamma(Q,M)}{M^2} : Q \in \Shapes(\Phi,M) \right\},\\
    \mu_d^+(\Phi) &:=  (k^d - 1)   \limsup_{M \to \infty} \max\left\{ \frac{\gamma(Q,M)}{M^2} : Q \in \Shapes(\Phi,M) \right\}.
\end{align*}

\begin{theorem} \label{th:Vc(phi)}
For $0 \le c \le d-1$, $0 < \mu_c(\Phi) < \infty$ and
\begin{align*}
    V_c(\Phi,M) &= \mu_c(\Phi) k^{M(d-c)} + o(k^{M(d-c)}).
\end{align*}
For $c = d$, $0 < \mu^-_d(\Phi) \le \mu^+_d(\Phi) < \infty$ and
\begin{align*}
    \mu^-_d(\Phi)M^2 + o(M^2) \le  V_d(\Phi,M) &\le \mu^+_d(\Phi) M^2 + o(M^2).
\end{align*}
\end{theorem}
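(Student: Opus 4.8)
The plan is to derive both statements from Theorem~\ref{th:VcSc(Q,M)}, after upgrading its asymptotic expansions to hold \emph{uniformly} over $\Shapes(\Phi,M)$. The key uniform input is $\dv(Q^*)=O(M)$ for every $Q\in\Shapes(\Phi,M)$: since $\depth(Q)\le\depth(Q^*)\le M$, Lemma~\ref{lem:grid_of_shape} gives $\dv(Q)\le 2k^dM$, while Lemmas~\ref{lem:pre-classified_v_classified} and~\ref{lem:dv_tilde_Q} (whose implied constants depend only on $k,d$) give $\dv(Q^*)=\dv(\preclassified Q)+O(M)=O(M)$. Re-reading the proof of Theorem~\ref{th:VcSc(Q,M)}, the only $Q$-dependence in the error enters through the $O(1)$ terms, which count unsubdivided low-class cells and are hence at most $\dv(Q^*)=O(M)$, and through the explicit lower-order terms of Lemma~\ref{lem:class-regular}, each smaller than its main term by a factor $O(k^{-m})$; summing the latter against the $A(Q^*,l,r)$ and using $\sum_{l,r}A(Q^*,l,r)=O(\dv(Q^*))=O(M)$ shows that
\[
    V_c(Q,M)=\frac{k^d-1}{k^{d-c}-1}\,\cv(\boundary^c Q)\,k^{M(d-c)}+o(k^{M(d-c)})\quad(0\le c\le d-1),\qquad V_d(Q,M)=(k^d-1)\gamma(Q,M)+O(M),
\]
uniformly in $Q\in\Shapes(\Phi,M)$.

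For $0\le c\le d-1$, finiteness of $\mu_c(\Phi)$ is a geometric-series estimate: as $Q\subseteq\grid(Q)$, Lemma~\ref{lem:grid_of_shape} gives $O(1)$ boxes of each depth $l$ in $Q$, each contributing at most $\binom dc 2^c$ faces of continuous volume $k^{-l(d-c)}$ to $\boundary^c Q$, and $\sum_{l\ge1}k^{-l(d-c)}<\infty$ since $d-c\ge1$; hence $\cv(\boundary^c Q)$ is bounded by a constant depending only on $k,d$. Positivity is witnessed by $Q=\{\hypercube\}$, the shape of the one-cell partition, for which $\cv(\boundary^c\hypercube)=\binom dc 2^c>0$. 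The uniform expansion then yields, on taking the maximum, $V_c(\Phi,M)\le\frac{k^d-1}{k^{d-c}-1}\bigl(\sup_{Q\in\Shapes(\Phi)}\cv(\boundary^c Q)\bigr)k^{M(d-c)}+o(k^{M(d-c)})=\mu_c(\Phi)k^{M(d-c)}+o(k^{M(d-c)})$; for the matching lower bound, fix for each $\varepsilon>0$ a \emph{single} shape $Q_\varepsilon\in\Shapes(\Phi)$ with $\cv(\boundary^c Q_\varepsilon)$ within $\varepsilon$ of the supremum, observe $Q_\varepsilon\in\Shapes(\Phi,M)$ for all $M\ge\depth(Q_\varepsilon^*)$, and apply the uniform expansion to $Q_\varepsilon$ to get $\liminf_{M}k^{-M(d-c)}V_c(\Phi,M)\ge\mu_c(\Phi)$. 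Combining gives $V_c(\Phi,M)=\mu_c(\Phi)k^{M(d-c)}+o(k^{M(d-c)})$.

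For $c=d$, $\mu_d^+(\Phi)<\infty$ is immediate from $\gamma(Q,M)=\sum_l A(Q^*,l,d)(M-l)\le M\,|\boundary^d Q^*|\le M\cdot 2^d\dv(Q^*)=O(M^2)$, uniformly. The substance is $\mu_d^-(\Phi)>0$, which I would prove by constructing, for every large $M$, a ``staircase'' shape $Q\in\Shapes(\Phi,M)$ with $\gamma(Q,M)=\Theta(M^2)$. Work on the regular grid $G$ of depth $m=M-1$ and let $P$ be the prefix of $\Phi(G)$ that fully visits the first child of $\hypercube$, then the first child of the second child of $\hypercube$, then the first child of the second child of the second child, and so on down to depth $m$; a prefix is a legal partition and cells inside a node are consecutive, so $P$ is well defined for any DSFC. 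Its content is a disjoint union $q^{(1)}\cup\dots\cup q^{(m)}$ with $q^{(l)}$ a box of depth $l$ and consecutive $q^{(l)}$ adjacent, so the shape is $Q=\{q^{(1)},\dots,q^{(m)}\}$, with exactly one box per depth and $\depth(Q^*)\le m+1=M$. At each junction the facet of $q^{(l)}$ meeting the smaller $q^{(l+1)}$ is split into an interior part (the footprint of $q^{(l+1)}$) and a boundary part, so $q^{(l)}$ is not pre-classified and the classification subdivides it near that facet down to depth $l+1$; the corner cell produced there carries exactly $d$ boundary facets meeting at a vertex of $\content P$ that belongs to no other cell, i.e.\ a class-$d$ cell contributing to $A(Q^*,l+1,d)$. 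Hence $A(Q^*,l,d)\ge1$ for $\Theta(M)$ distinct depths $l$, of which $\Theta(M)$ are at most $M/2$, so $\gamma(Q,M)\ge\sum_l(M-l)=\Theta(M^2)$ and $\mu_d^-(\Phi)>0$. Finally, with $f(M):=(k^d-1)\max_{Q\in\Shapes(\Phi,M)}\gamma(Q,M)$ the uniform expansion gives $V_d(\Phi,M)=f(M)+O(M)$; since $f(M)/M^2$ has liminf $\mu_d^-(\Phi)$ and limsup $\mu_d^+(\Phi)$ by definition, the same holds for $V_d(\Phi,M)/M^2$, yielding $\mu_d^-(\Phi)M^2+o(M^2)\le V_d(\Phi,M)\le\mu_d^+(\Phi)M^2+o(M^2)$.

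The routine part is the uniformity bookkeeping of the first paragraph. The genuine obstacle is the $c=d$ lower bound: one must verify that the classification of the staircase shape really produces class-$d$ cells at $\Theta(M)$ distinct depths that are not all concentrated near $M$, which requires a careful analysis of how pre-classification resolves each step of the staircase and where the surviving vertices of $Q^*$ land.
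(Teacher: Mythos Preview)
Your overall architecture matches the paper's: reduce to Theorem~\ref{th:VcSc(Q,M)} made uniform over $\Shapes(\Phi,M)$, bound $\mu_c(\Phi)$ via Lemma~\ref{lem:grid_of_shape} and a geometric series, and for $c=d$ build a staircase shape to witness $\mu_d^-(\Phi)>0$. Your uniformity bookkeeping is in fact more explicit than the paper's, which simply writes $V_c(\Phi,M)=\frac{k^d-1}{k^{d-c}-1}k^{M(d-c)}\max\{\cv(\boundary^cQ)\}+o(k^{M(d-c)})$ without justifying that the $o(\cdot)$ survives the maximum.

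The substantive difference is the staircase. You take boxes $q^{(1)},\dots,q^{(m)}$ of depths $1,2,\dots,m$, while the paper (Lemma~\ref{lem:mu2}) takes boxes $g_0,\dots,g_{N/2-1}$ of depths $2,4,\dots,N$, i.e.\ it skips a level between consecutive boxes. That gap is what drives the paper's verification: Claim~\ref{claim:no_parallel_faces} uses $\depth(g_j)\le\depth(g_m)-2=\depth(\parent{g_m})-1$ to conclude $\parent{g_m}\subseteq g_j$ and derive a contradiction, which with only a depth-$1$ jump would fail. With no parallel facets from $D_m$, Claim~\ref{claim:far_away_vertex} then locates a vertex of $\content P$ at controlled depth in $g_m$ or $g_{m+1}$. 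Your sketch asserts that pre-classification of $q^{(l)}$ stops at depth $l+1$ and that a corner cell there has class $d$, but with depth-$1$ jumps both assertions can fail: later boxes $q^{(l+2)},q^{(l+3)},\dots$ may abut the same facet of $q^{(l)}$ (forcing deeper pre-classification there), and when the $q^{(l-1)}$- and $B_l$-facets of $q^{(l)}$ are parallel, every vertex of $q^{(l)}$ can lie in some other $q^{(j)}$, so the free vertex you need may sit at much greater depth. You correctly flag this as the genuine obstacle; the paper's remedy is precisely the depth-$2$ spacing, which buys the clean ``no parallel facets from the past'' property and hence a guaranteed free vertex within two steps.
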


The rest of this subsection is devoted to the proof of Theorem \ref{th:Vc(phi)}. We first prove the bounds on $\mu_c(\Phi)$, $\mu_d^-(\Phi)$, and $\mu_d^+(\Phi)$. We break down the proof into three lemmas, one for each of $\mu_c(\Phi)$, $\mu_d^-(\Phi)$, and $\mu_d^+(\Phi)$.

\begin{lemma} \label{lem:mu1}
For $0 \le c \le d-1$,
\[
    \binom{d}{c} 2^c \le \sup\{ \cv(\boundary^c Q) : Q \in \Shapes(\Phi) \} \le 2k^d \frac{ k^{d-c} } {1 - k^{d-c}}.
\]
Therefore, $0 < \mu_c(\Phi) < \infty$.
\end{lemma}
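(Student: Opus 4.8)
The plan is to establish the two displayed inequalities separately; once they are available, $0<\mu_c(\Phi)<\infty$ follows at once from the definition $\mu_c(\Phi)=\frac{k^d-1}{k^{d-c}-1}\sup\{\cv(\boundary^c Q):Q\in\Shapes(\Phi)\}$, since $k^d-1>0$ and $k^{d-c}-1>0$ for $c\le d-1$.

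For the lower bound it suffices to exhibit one admissible shape. I would take the one-cell grid $G=\{\hypercube\}$, equip it with its (trivial) DSFC $\Phi(G)$, and consider the partition $P=\{\hypercube\}$; then $Q:=\shape(P)=\{\hypercube\}\in\Shapes(\Phi)$. As $P$ contains a single cell, no $(d-c)$-subcube of $\hypercube$ lies in any cell of $P$ other than $\hypercube$ itself, so $\boundary^c P=\subcube^c\hypercube$; and $\hypercube$ has exactly $\binom{d}{c}2^c$ faces of dimension $d-c$, each of unit continuous volume. Hence $\cv(\boundary^c Q)=\cv(\boundary^c P)=\binom{d}{c}2^c$, giving the left inequality.

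For the upper bound I would bound $\cv(\boundary^c Q)$ uniformly over all shapes $Q\in\Shapes(\Phi)$. Write $Q=\shape(P)$ for a partition $P$ of some $\Phi(G)$. Every element of the antichain $Q$ is minimal in $Q$, hence a leaf of $\tree(Q)$ and therefore a cell of $\grid(Q)$, so $Q\subseteq\grid(Q)$ and consequently $\boundary^c Q\subseteq\subcube^c\grid(Q)$. By subadditivity of $(d-c)$-dimensional volume, and since each cell $g$ of depth $l$ carries $\binom{d}{c}2^c$ faces of dimension $d-c$, each of continuous volume $k^{-l(d-c)}$,
\[
  \cv(\boundary^c Q)\;\le\;\sum_{g\in\grid(Q)}\;\sum_{s\in\subcube^c g}\cv(s)\;=\;\binom{d}{c}2^c\sum_{g\in\grid(Q)}k^{-\depth(g)(d-c)}.
\]
By Lemma~\ref{lem:grid_of_shape}, $\grid(Q)$ has at most $2k^d$ cells of each depth, so grouping by depth and summing the geometric series (convergent because $d-c\ge 1$) yields the uniform estimate $\cv(\boundary^c Q)\le\binom{d}{c}2^c\cdot\frac{2k^d}{1-k^{-(d-c)}}$. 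The supremum is therefore finite, and by the lower bound it is positive, so $0<\mu_c(\Phi)<\infty$.

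The only point requiring genuine care is that the estimate has to hold for the infinitely many shapes of partitions simultaneously, and the essential input for that is Lemma~\ref{lem:grid_of_shape}: it caps the number of cells per depth of $\grid(Q)$ by $2k^d$ regardless of how geometrically intricate $Q$ is, and that cap is exactly what makes the sum over depths converge. Everything else is routine bookkeeping on subcube volumes; the mismatch with the exact constant displayed in the statement is only a harmless extra combinatorial factor that does not affect the finiteness (and hence the conclusion).
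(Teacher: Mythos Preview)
Your proof is correct and follows essentially the same route as the paper: the lower bound via the shape $Q=\{\hypercube\}$ (using $\cv(\boundary^c\hypercube)=\binom{d}{c}2^c$), and the upper bound by summing per-cell $(d-c)$-subcube volumes and invoking Lemma~\ref{lem:grid_of_shape} to cap the number of cells at each depth, so that the resulting geometric series converges. The only cosmetic difference is that the paper sums directly over $q\in Q$ rather than over all of $\grid(Q)$, and it drops the combinatorial factor $\binom{d}{c}2^c$ in its upper-bound calculation; your version is in fact more carefully written, and as you note, the constant discrepancy is immaterial for the conclusion $0<\mu_c(\Phi)<\infty$.
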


\begin{proof}
If $Q$ is the shape of a partition and $Q$ has depth $N$, then it contains at most $2k^d$ boxes of depth $1 \le l \le N$ by Lemma \ref{lem:grid_of_shape}. For $c < d$, a cell $q$ of depth $l$ has $\cv(\boundary^c q) = k^{-l(d-c)}$, hence
\begin{align*}
	\cv(\boundary^c Q)
	&\le \sum_{q \in Q} k^{-l(d-c)}\\
	&\le  \sum_{l=1}^N 2k^d k^{-l(d-c)}\\
	&\le 2k^d \frac{ k^{d-c} } {1 - k^{d-c}}.
\end{align*}
Conversely, if $Q = \{\hypercube\}$, then $\cv(\boundary^c \hypercube) = \binom{d}{c} 2^c$ by Equation \eqref{equation:cvg}.
\end{proof}

\begin{lemma} \label{lem:mu2}
For any $M$, there exists $Q \in \Shapes(\Phi,M)$ such that 
\[
    \gamma(Q, M) \ge \frac{1}{4} M^2 + o(M^2). 
\]
Therefore, $\mu_d^-(\Phi) > 0$.
\end{lemma}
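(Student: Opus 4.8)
The plan is to produce, for each $M$, one explicit shape $Q \in \Shapes(\Phi,M)$ whose classification $\classified{Q}$ carries a class-$d$ cell at a linear number of distinct, small depths, and then read off the lower bound on $\gamma(Q,M)$. Recall that
$
\gamma(Q,M) = \sum_{l} A(\classified{Q},l,d)\,(M-l),
$
so it is enough to exhibit $Q$ such that, for $l$ ranging over a positive-density set of integers bounded by $M$, $\classified{Q}$ has at least one cell of depth $\le l+O(1)$ containing a vertex of $\boundary^d \classified{Q}$; summing the weights $M-l$ then beats $\tfrac14 M^2$. Once this is done, $\mu_d^-(\Phi) = (k^d-1)\liminf_M \max_Q \gamma(Q,M)/M^2 \ge (k^d-1)/4 > 0$ follows at once. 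Throughout I take $d\ge 2$ (the case $d=1$ being degenerate for the surface).

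\paragraph{The staircase prefix.}
Fix $M$, let $G$ be the regular grid of depth $M$, and $\Gamma=\Phi(G)$. Consider the prefix $P=\{g_1,\dots,g_j\}$ with $j=\sum_{i=1}^{M-1}k^{d(M-i)}$, obtained as follows: take the first depth-$1$ cell of $\Gamma$ in full; then inside the second depth-$1$ cell take its first depth-$2$ child in full; then inside \emph{that} cell's second depth-$2$ child take its first depth-$3$ child in full; and so on, ending with a single depth-$(M-1)$ cell. By the space-filling property the cells inside a node are consecutive, so this set really is a prefix of $\Gamma$, and by continuity each successive box entered is face-adjacent to the one just completed. Writing $\content{P}=s_1\cup\dots\cup s_{M-1}$ with $\depth(s_i)=i$, no two of the $s_i$ merge into a larger box and no $s_i$ has its parent inside $\content{P}$, so $Q:=\shape(P)=\{s_1,\dots,s_{M-1}\}$ has exactly one cell of each depth $1,\dots,M-1$. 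In particular $\depth(Q)=M-1$, hence $\depth(\classified{Q})\le M$ and $Q\in\Shapes(\Phi,M)$.

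\paragraph{Classes survive classification.}
Next I would check that each $s_i$ has class $d$ in $Q$ and that $\classified{Q}$ still has a shallow class-$d$ cell "near depth $i$''. The cell $s_i$ meets $s_{i-1}$ along a full facet of $s_i$, and meets $s_{i+1}$ along a depth-$(i+1)$ sub-face of another facet; thus of the $2^d$ vertices of $s_i$, the $2^{d-1}$ on the $s_{i-1}$-facet lie in $s_{i-1}$ and one lies in $s_{i+1}$, but since $d\ge 2$ at least one vertex of $s_i$ is exclusive, so $\class(s_i,Q)=d$. Only the facet shared (partially) with $s_{i+1}$, together with the finer shape cells that pile up on it, is subdivided along $\boundary Q$; because a space-filling curve reverses direction only a bounded number of times, only $O(1)$ of the cells $s_{i+1},s_{i+2},\dots$ touch any one facet of $s_i$, so the pre-classification subdivides $s_i$ only $O(1)$ times near that facet and leaves a depth-$(i+1)$ child at an exclusive corner \emph{away} from it; that child has exactly $d$ facets of $\boundary\classified{Q}$ and is therefore a classified class-$d$ cell of depth $i+O(1)$. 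When the exclusive corner happens to sit on the subdivided facet (which depends on the unknown orientation of $\Phi$ at that scale), one instead uses the next step $s_{i+1}$; a short case analysis shows this cannot fail twice in a row, so $\classified{Q}$ carries a classified class-$d$ cell of depth $\le i+O(1)$ for at least every other $i\le M-O(1)$.

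\paragraph{Conclusion and the main obstacle.}
Summing over these indices,
\[
\gamma(Q,M)\;\ge\;\sum_{\substack{i\le M\\ i\ \mathrm{even}}}\bigl(M-i-O(1)\bigr)\;=\;\tfrac14 M^2+o(M^2),
\]
and therefore $\mu_d^-(\Phi)\ge (k^d-1)/4>0$. The main obstacle is precisely the verification in the previous paragraph for an \emph{arbitrary} continuous SFC $\Phi$: since the leitmotif may be rotated or mirrored differently at every subdivision, one cannot pin down the geometry of the staircase a priori and must argue purely from continuity (consecutive cells are face-adjacent), the space-filling property (the content of a node is a consecutive block), and the uniform bound on how often a space-filling curve reverses direction — the last being exactly what keeps the extra refinements forced by classification, and hence the depth loss $i\mapsto i+O(1)$, under control.
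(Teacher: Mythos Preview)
Your staircase construction is essentially the paper's: the paper also builds a shape $Q=\{g_0,\dots,g_{N/2-1}\}$ as a prefix along $\Phi$, with $\depth(g_m)=2m+2$, and then shows that for every $m$ either $g_m$ or $g_{m+1}$ carries a class-$d$ cell of depth $\le 2m+5$ in $\classified{Q}$, which gives the $\tfrac14 M^2$ bound. So the global strategy matches; the difficulty is exactly the paragraph you flag as the obstacle, and there your argument has genuine gaps.

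First, in ``Classes survive classification'' you only account for $s_{i-1}$ and $s_{i+1}$. But $D_i=s_1\cup\dots\cup s_{i-1}$ can touch $s_i$ on several facets (not just the $s_{i-1}$-facet): for a general $\Phi$ the earlier, larger boxes may wrap around and meet $s_i$ on up to $d$ of its facets. Likewise $E_i=s_{i+1}\cup\dots$ all sit in the single sibling $p_{i+1}$ of $s_i$, so they can only touch one facet of $s_i$, but that facet may be touched by \emph{all} of them at depths $i+1,i+2,\dots,M-1$. Your sentence ``only $O(1)$ of the cells $s_{i+1},s_{i+2},\dots$ touch any one facet of $s_i$, so the pre-classification subdivides $s_i$ only $O(1)$ times'' is therefore doubly wrong: the number of touching cells is not bounded, and in any case the depth of the pre-classification near that facet is governed by the \emph{depths} of the adjacent boundary pieces, not by their number --- it can (and typically does) go all the way down to depth $M-1$ there.

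What actually controls the depth of the class-$d$ cell is a \emph{distance} statement: you need a vertex $v$ of $s_i$ with $d_\infty(v,\,q)\ge k^{-(i+O(1))}$ for every other shape cell $q$. The paper proves this in two steps. It first shows (its Claim~\ref{claim:no_parallel_faces}) that $D_m$ cannot contain a pair of parallel facets of $g_m$; the depth-$2$ jumps make this immediate, since any $g_j$ with $j<m$ has $\depth(g_j)\le\depth(g_m)-2<\depth(\parent{g_m})$, so if $g_j$ met $g_m$ along an interior facet of $\parent{g_m}$ it would have to contain $\parent{g_m}$, a contradiction. (Your depth-$1$ version can also be salvaged here, using that $s_{i-1}\ne p_i$, but you do not do so.) It then runs a short case analysis (its Claim~\ref{claim:far_away_vertex}): if $D_m$ meets fewer than $d$ facets of $g_m$ there are at least two $D_m$-free vertices and, since $E_m$ lies in a single box of the same scale as $g_m$, one of them is also $E_m$-far; if $D_m$ meets exactly $d$ facets there is a unique $D_m$-free vertex $v$, and if $v$ happens to lie in $E_m$ then $g_{m+1}$ is automatically far from $D_m$ and one uses $g_{m+1}$ instead. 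Your ``this cannot fail twice in a row'' is precisely this case split, but you give no argument for it; without the no-parallel-facets claim and the containment $E_i\subseteq p_{i+1}$ it is not clear why the failure cannot propagate.
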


\begin{proof}
We define the grid $G_N = \{g_{N,0}, \dots, g_{N,n_N - 1}\}$ (sorted according to $\Phi$) of depth $N$ for all $N \ge 0$ even, as follows. First, $G_0 = \{ g_{0,0} = \hypercube \}$. Then $G_N$ is obtained by replacing $g_{N-2,N/2 - 1}$ by a uniform grid of depth $2$: 
\[
    G_N = \{ g_{N,0} = g_{N-2,0}, \dots, g_{N, N/2 - 2} = g_{N-2, N/2-2}, g_{N, N/2 - 1}, \dots, g_{N, N/2 - 1 + k^{2d} - 1} \},
\]
where $g_{N, N/2 - 1} \cup  \dots \cup g_{N, N/2 - 1 + k^{2d} - 1} = g_{N - 2, N/2 - 1}$. 

We now place ourselves in $G_N$, so we omit the dependence on $N$ and write $G = \{g_0, \dots, g_{n - 1}\}$. Then the shape $Q = Q_N$ is given by $Q= \{g_0, \dots, g_{N/2 - 1}\}$. It is easily verified that:
\begin{enumerate}
    \item $Q$ is indeed a shape;
    \item $Q$ contains the first $N/2$ cells of $G$;
    \item $Q$ is a partition of $\Phi(G)$;
    \item $\depth(g_m) = 2m+2$ for all $0 \le m \le N/2-1$. 
\end{enumerate}
We introduce the sets $D_m := g_0 \cup \dots \cup g_{m-1}$ and $E_m := g_{m+1} \cup \dots \cup g_{N/2-1}$ for any $0 \le m \le N/2-1$.

\begin{claim} \label{claim:no_parallel_faces}
$D_m$ does not contain a pair of parallel facets of $g_m$. 
\end{claim}

\begin{proof}
Suppose that it is indeed the case, i.e. $x$ and $y$ are two parallel facets of $g_m$, where $x = g_m \cap g_i$ and $y = g_m \cap g_j$ for some $i,j < m$. The cell $g_m$ does not contain two parallel facets of its parent $\parent{g_m}$, thus either $g_i$ or $g_j$ (say $g_j$ without loss) contains a child of $\parent{g_m}$. But then $\depth(g_j) \le \depth(g_m) - 2 = \depth(\parent{g_m}) - 1$, and hence $g_m \subseteq \parent{g_m} \subseteq g_j$, which is the desired contradiction.
\end{proof}

\begin{claim} \label{claim:far_away_vertex}
For every $0 \le m \le N/2 - 1$, either $g_m$ or $g_{m+1}$ contains a cell of class $d$ of depth at most $2m+5$ in $\classified{Q_N}$.
\end{claim}

\begin{proof}
We only need to prove that for every $0 \le m \le N/2 - 1$, either $g_m$ or $g_{m+1}$ contains a vertex (i.e. a $0$-face) at distance at least $k^{-2m-5}$ from every other cell in $Q_N$. According to Claim \ref{claim:no_parallel_faces}, $g_m$ shares $r \le d$ facets with $D_m$. 

\textbf{Case 1: $r < d$.} In this case, $g_m$ has at least two vertices $u$ and $v$ that do not belong to $D_m$, and hence they are at distance at least $k^{-2m-2}$ from $D_m$. Also, by construction $E_m \subseteq \parent{g_m}$, a box of size $k^{-2m-3}$. Therefore, either $u$ or $v$ is a vertex of $g_m$ at distance at least $k^{-2m-2}$ from $D_m$ and at distance at least $k^{-2m-3}$ from $E_m$, and we are done. 

\textbf{Case 2: $r = d$.} Since $D_m$ does not contain any pair of parallel facets of $g_m$, it must contain exactly the $d$ facets incident to a given vertex $u$ of $g_m$. Thus the opposite vertex $v$ of $u$ in $g_m$ is at distance at least $k^{-2m-2}$ from $D_m$. If $v \notin E_m$, then $v$ is at distance at least  $k^{-2m-3}$ from $E_m$, and we are done. If $v \in E_m$, then $g_{m+1}$ is at distance at least $k^{-2m-3}$ from $D_m$, and hence $D_{m+1}$ contains only one face of $g_{m+1}$, namely $g_m \cap g_{m+1}$. By Case 1, $g_{m+1}$ thus contains a vertex at least $k^{-2m-5}$ from $D_{m+1}$ and $E_{m+1}$.
\end{proof}

Now let $M \ge 1$, and $N = 2 \lceil M/2 - 1 \rceil$ be the largest even number smaller than $M$. By Claim \ref{claim:far_away_vertex}, either $g_m$ or $g_{m+1}$ contains a cell that contributes to $M-l \ge M- (2m+5)$ to $\gamma(Q_N,M)$. Therefore,
\begin{align*}
	\gamma(Q_N, M) &= \sum_{l=1}^M A(\classified{Q}_N, l,d) (M-l)\\
	&\ge \sum_{m=0}^{N/2-1} (M-(2m+5))\\
	&= \frac{M^2}{4} + o(M^2).
\end{align*}
\end{proof}

\begin{lemma}
For all $Q \in \Shapes(\Phi, M)$, 
\[
    \gamma(Q, M) \le 2^d k^d M^2.
\]
Therefore, $\mu_d^+(\Phi) < \infty$.
\end{lemma}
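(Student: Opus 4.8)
The plan is to convert the depth-weighted sum defining $\gamma(Q,M)$ into a sum of threshold counts, and then to bound each threshold count by $O(k^d)$ times the threshold, using that a shape has only $O(k^d)$ cells at each depth. First, since $Q \in \Shapes(\Phi,M)$ we have $N^* = \depth(Q^*) \le M$, so $M-l = |\{j : l \le j \le M-1\}|$ for every $1 \le l \le N^*$; exchanging the order of summation gives
\[
    \gamma(Q,M) = \sum_{l=1}^{N^*} A(Q^*,l,d)(M-l) = \sum_{j=1}^{M-1} B_j, \qquad B_j := \sum_{l=1}^{\min(j,N^*)} A(Q^*,l,d),
\]
so that $B_j$ counts the vertices in $\boundary^d Q^*$ lying in a cell of $Q^*$ of depth at most $j$.

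Next, I would identify $B_j$ with a count of cells and bound it. Because $Q^*$ is classified, a cell $g$ of class $d$ contains exactly $d$ facets of $\boundary Q^*$, which by Lemma \ref{lem:class_facets} are pairwise non-parallel and hence meet in a single vertex $v(g)$; this $v(g)$ is the unique element of $\boundary^d_{Q^*} g$, since every other vertex of $g$ lies on an interior facet of $g$ (which is not subdivided away, as $g$ is pre-classified) and therefore also belongs to a neighbouring cell. Consequently each class-$d$ cell contributes exactly one vertex to $\boundary^d Q^*$, so $B_j$ equals the number of class-$d$ cells of $Q^*$ of depth $\le j$. For such a cell $g$, the vertex $v(g)$ is a corner of $\content Q$ at which $\content Q$ locally occupies a single orthant; writing $g \subseteq q$ for a cell $q$ of the shape $Q$ (possible since $Q^*$ refines $Q$), the box $q$ is locally inside that orthant, whence $v(g)$ is a corner of $q$ and $\depth(q) \le \depth(g) \le j$. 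The map $g \mapsto v(g)$ is injective on class-$d$ cells (an exposed vertex lies in one cell only), so $B_j$ is at most the number of corners of cells of $Q$ of depth $\le j$; since $Q$ has at most $2k^d$ cells of each depth (being contained in $\grid(Q)$; Lemma \ref{lem:grid_of_shape}) and each cell has $2^d$ corners, $B_j \le 2^{d+1}k^d j$.

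Finally, summing yields $\gamma(Q,M) = \sum_{j=1}^{M-1} B_j \le 2^{d+1}k^d \sum_{j=1}^{M-1} j = 2^d k^d M(M-1) \le 2^d k^d M^2$, which is the claimed bound, and then $\mu_d^+(\Phi) \le (k^d-1)\,2^d k^d < \infty$ follows at once from its definition. The step requiring the most care is the middle one: verifying that a class-$d$ cell of $Q^*$ carries exactly one boundary vertex (using that it is pre-classified, so its boundary facets are not refined from the outside), and that this vertex is a genuine corner of a shape cell of no greater depth — in other words, that the depth bookkeeping is governed by the coarse structure of $Q$ and is not inflated by the refinements introduced when passing from $Q$ to its classification $Q^*$.
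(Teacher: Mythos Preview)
Your proof is correct and rests on the same key observation as the paper's: every vertex counted by $A(Q^*,l,d)$ is in fact a corner of some cell of the \emph{shape} $Q$ of depth at most $l$, so the whole sum is controlled by Lemma~\ref{lem:grid_of_shape} applied to $Q$ together with the trivial bound of $2^d$ corners per cell. The paper compresses all of this into the single phrase ``each cell in $Q$ of depth $l$ contributes to at most $2^d(M-l)$ in the sum'' and bounds $\gamma(Q,M)\le\sum_{l=1}^{N}2k^d\cdot 2^d(M-l)$ directly; you instead rewrite $\gamma$ as the threshold sum $\sum_{j=1}^{M-1}B_j$ and bound each $B_j$ separately. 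The summation-by-parts detour is harmless but not needed, and your explicit justification that $v(g)$ is a corner of the enclosing shape cell $q$ (so that depths transfer from $Q^*$ back to $Q$) is exactly the point the paper's one-liner is silently using.
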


\begin{proof}
For $c = d$, again each cell in $Q$ of depth $l$ contributes to at most $2^d (M - l)$ in the sum, hence
\begin{align*}
	\gamma(Q, M) &\le \sum_{l=1}^N 2k^d 2^d ( M - l )\\
	&\le 2^d k^d 2 \left(NM - \frac{N(N+1)}{2} \right)\\
	&\le 2^d k^d M^2.
\end{align*}
\end{proof}

We can now prove the theorem.

\begin{proof}[Proof of Theorem \ref{th:Vc(phi)}]
We now need to prove the two displayed equations. First, let $0 \le c \le d-1$. For $M$ large, we have
\begin{align*}
    V_c(\Phi,M) &= \max \{ V_c(Q,M) : Q \in \Shapes(\Phi,M) \}\\
    &= \frac{k^d - 1}{k^{d-c} - 1} k^{M (d-c)} \max \{ \cv( \boundary^c Q ) : Q \in \Shapes(\Phi,M) \} + o( k^{M(d-c)} ),\\
    &= \mu_c(\Phi) k^{M(d-c)} + o( k^{M(d-c)} ),
\end{align*}
where the equations follow Theorem \ref{th:VcSc(Q,M)} and Lemma \ref{lem:mu1} respectively.
Second, let $c = d$. For $M$ large, we similarly have
\begin{align*}
    V_d(\Phi,M) &= \max \{ V_d(Q,M) : Q \in \Shapes(\Phi,M) \}\\
    &= (k^d - 1) \max \{ \gamma(Q,M) : Q \in \Shapes(\Phi,M) \} + o( M ),\\
    &\le \mu^+_d(\Phi) M^2 + o( M^2 ),
\end{align*}
where the equations follow Theorem \ref{th:VcSc(Q,M)} and Lemma \ref{lem:mu2} respectively. The lower bound on $V_d(\Phi, M)$ is proved almost identically.
\end{proof}
\section{Maximum surface-to-volume ratio}
\label{sec:maximum-surface-to-volume-ratio}

We are interested in the maximum surface-to-volume ratio of a partition $P$ for a DSFC $\Gamma$ on a grid $G$. We will fix $\Gamma = \Phi(G)$ for a fixed space-filling curve $\Phi$ and we will maximise for all grids and all partitions. Our main objective is to determine the maximum surface-to-volume ratio $R(V)$ as a function of the discrete volume of the partition. We consider the asymptotic ratio where $V(M)$ is a function of the depth $M$ and $M$ tends to infinity. This allows us to neglect some meaningless residual effects: for the corner adaptive grid, the surface-to-volume ratio tends to $2/3$, which reflects the fact that $\ds \sim 2M$ and $\dv \sim 3M$. Moreover, the asymptotic results should be near the actual values even for small values of $M$, since we usually consider exponential values of $V$: $V(M) \sim C k^{Me}$ for some positive constants $C$ and $e$.

For a given space-filling curve $\Phi$, we denote the set of all partitions $P$ of depth $M$ of all DSFCs $\Phi(G)$ as $\Partitions(\Phi,M)$. For any function $V(M)$, we are interested in the maximum surface-to-volume ratio given that the partition has asymptotically $V(M)$ cells:
\[
	R_\Phi(V) := \sup \left\{ \lim_{M \to \infty} \dr(P) :  P \in \Partitions(\Phi,M), \dv(P) \sim V(M) \right\}.
\]

We write $f(M) \gg g(M)$ if $f$ is asymptotically greater than $g(M)$, i.e. if $g(M) = o(f(M))$. 
In order to simplify the statement of the main theorem, we use the convention $V_{d+1}(\Phi,M) = M$.

\begin{theorem}[Maximum surface-to-volume ratio] \label{th:surface_to_volume_asymptotic}
For any integer $0 \le c \le d$, we have 
\[
	R_\Phi(V) = \begin{cases}
    c \rho  & \text{if } V_{c+1}(\Phi,M) \ll V(M) \le V_c(\Phi,M),\\
	 \left( c - 1 + \frac{1}{\alpha} \right) \rho &\text{if } V(M) = \alpha V_c(\Phi,M), \text{ where } 1 < \alpha < \infty.
	\end{cases}
\]
\end{theorem}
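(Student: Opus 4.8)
The plan is to reduce everything to the machinery already built, namely Theorem~\ref{th:classification}, Theorem~\ref{th:VcSc(Q,M)}, and Theorem~\ref{th:Vc(phi)}. First I would argue that, because we are computing an \emph{asymptotic} ratio $\lim_{M\to\infty}\dr(P)$ along families with $\dv(P)\sim V(M)$ and $V(M)\gg M$ (which holds in every case of the statement, since $V_{c+1}(\Phi,M)\ge V_{d+1}(\Phi,M)=M$ and $V_c(\Phi,M)=\Theta(k^{M(d-c)})$ or $\Theta(M^2)$), Theorem~\ref{th:classification} lets us replace each $P$ by its classification $\classified{P}$ at no asymptotic cost: $\dv(\classified P)=\dv(P)+O(M)=V(M)+o(V(M))$ and likewise for $\ds$, so $\dr(\classified P)$ and $\dr(P)$ have the same limit. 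Hence $R_\Phi(V)$ is the supremum of $\lim_M \dr(P)$ over \emph{classified} partitions of volume $\sim V(M)$, and for a classified partition $\ds(P)=\sum_{g\in P}\class(g,P)$.

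Next I would establish the \textbf{upper bound}. Fix a classified partition $P$ with shape $Q$; by Lemma~\ref{lem:P*}\ref{it:P*_Q*} and the surface formula, the cells of $P$ of class $\ge c$ all lie over $\classified Q$, and the greedy description in Section~\ref{sec:shape-class-regular-partitions} shows that among all classified partitions with shape $Q$ and a prescribed number of class-$\ge c$ cells, the shape-class-regular partition $\shapeclassregular_c(Q,M)$ maximises the surface: refining a cell of class $c$ contributes $c(k^{d-1}-1)$ to the surface per $k^d-1$ added cells (Lemma~\ref{lem:sv_subdivision}), and $c(k^{d-1}-1)/(k^d-1)=c\rho$ is increasing in $c$, so the worst case fills the highest classes first. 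Concretely: a classified $P$ of volume $V$ is dominated (in the $\dr$ sense, asymptotically) by the following ``filling'': take $\shapeclassregular_{c+1}(Q,M)$ where $c$ is chosen so that $V_{c+1}(Q,M)\le V< V_c(Q,M)$ roughly, then subdivide class-$c$ cells until volume $V$ is reached. Writing $V=\alpha V_c(Q,M)$ with $V_{c+1}(Q,M)/V_c(Q,M)\to$ a negligible ratio when $V\gg V_{c+1}$, the surface is $S_c(Q,M)\cdot(1/\alpha) + (\text{lower-order from the class-}c\text{ fill})$; more precisely the volume beyond $V_c$ is filled by class-$c$ subdivisions contributing $c\rho$ per cell, and the $V_c$ worth of cells themselves carry surface $S_c(Q,M)=c\rho V_c(Q,M)+o(\cdot)$ divided by $V_c$, so the ratio is $\big(c-1+\tfrac1\alpha\big)\rho$ when $1<\alpha<\infty$ and $c\rho$ when $\alpha\to\infty$ (i.e.\ $V\ll V_c$ but $V\gg V_{c+1}$). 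Taking the sup over $Q\in\Shapes(\Phi,M)$ replaces $V_c(Q,M)$ by $V_c(\Phi,M)$ via Theorem~\ref{th:Vc(phi)}, giving the stated upper bound. The $c=d$ endpoint uses the convention $V_{d+1}(\Phi,M)=M$ and the fact that below $V=\Theta(M)$ the classification-cost term $O(M)$ is no longer negligible, which is exactly why the theorem is stated only for $V\gg M$.

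For the \textbf{lower bound} (matching construction), for each $c$ and each target $\alpha$ I would pick, using Theorem~\ref{th:Vc(phi)} / Lemma~\ref{lem:mu1}, a shape $Q=Q_M\in\Shapes(\Phi,M)$ with $\cv(\boundary^c Q)$ arbitrarily close to its supremum (so $V_c(Q,M)\sim\mu_c(\Phi)k^{M(d-c)}=V_c(\Phi,M)$, and similarly for $c=d$ using Lemma~\ref{lem:mu2}); then form the partition that equals $\shapeclassregular_{c}(Q,M)$ and, if $\alpha<1$ is wanted relative to $V_c$, undo some class-$c$ refinements, or if $1<\alpha<\infty$ is wanted, further subdivide class-$c$ cells up to volume $\alpha V_c(Q,M)$. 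By Theorem~\ref{th:VcSc(Q,M)} and Lemma~\ref{lem:sv_subdivision} the volume is $\sim V(M)$ and the surface is $\sim\big(c-1+\tfrac1\alpha\big)\rho\,V(M)$ (resp.\ $\sim c\rho\,V(M)$), so $\dr\to$ the claimed value; one must also check there are \emph{enough} class-$c$ cells at depth up to $M$ to reach volume $\alpha V_c$ for any finite $\alpha$, which follows because subdividing a single class-$c$ cell of depth $\ll M$ produces $\Theta(k^{M'(d-c)})$ further cells for $M'$ up to $M-\depth$, i.e.\ an unbounded multiple of $V_c$.

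The step I expect to be the main obstacle is making the ``greedy fill is optimal'' argument fully rigorous \emph{uniformly over all shapes and all partitions}: one must show that an arbitrary classified partition of volume $V$ cannot beat the best shape-class-regular fill, including controlling the interaction between cells of different classes sitting over the same shape and the $O(M)$ slack coming from classification and from Lemma~\ref{lem:grid_of_shape}. The clean way is an exchange/monotonicity argument: given any classified $P$, repeatedly replace a class-$c$ subtree by the corresponding $\classregular(c,r,\cdot)$ building block without decreasing $\ds$ and without changing $\dv$ by more than $o(V)$, reducing to $\shapeclassregular_c(Q,M)$-type partitions, and only then optimise over $Q$ and over the final class-$c$ fill; the bookkeeping of the $o(\cdot)$ terms across the two regimes ($V\ll V_c$ versus $V=\alpha V_c$) is where the care is needed.
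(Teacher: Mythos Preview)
Your approach is essentially the paper's: reduce to classified partitions via Theorem~\ref{th:classification}, show that the greedy fill $H(Q,V,M)$ dominates all classified partitions of given shape and volume (paper's Lemma~\ref{lem:H(Q,V,M)}), compute $\dr(H(Q,V,M))$ asymptotically (Lemma~\ref{lemma:RQV}), then optimise over shapes using Theorem~\ref{th:Vc(phi)}.

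Two remarks. First, you overestimate the difficulty of the ``greedy is optimal'' step. The paper's Lemma~\ref{lem:H(Q,V,M)} dispatches it in a few lines with a pure counting argument: any classified partition $R$ of shape $Q$ and volume $V$ is obtained from $\classified{Q}$ by the \emph{same number} $s$ of subdivisions as $H(Q,V,M)$, and by Lemma~\ref{lem:sv_subdivision} the surface gain is $(k^{d-1}-1)$ times the sum of the classes of the subdivided boxes; since $H$ by construction always subdivides the highest-class available box first, its class sum is maximal. No exchange argument or $o(\cdot)$ bookkeeping across shapes is needed at this stage---the inequality $\ds(R)\le\ds(H(Q,V,M))$ is exact, not asymptotic. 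Second, a small slip: in the $V=\alpha V_c$ regime with $\alpha>1$, one reaches volume $V$ from $\shapeclassregular_c(Q,M)$ by subdividing cells of class $c-1$, not $c$ (the class-$c$ slots are already saturated); each such subdivision contributes $(c-1)\rho$ per unit volume, which is exactly what produces the $(c-1+1/\alpha)\rho$ formula.
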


The first main step of the proof is to characterise, for any given shape $Q$, the classified partitions with shape $Q$ and highest surface, for any given possible volume. 






We consider the set $\mathcal{P}(Q,M)$ of classified partitions of shape $Q$ and depth $M$. The largest discrete volume of a partition in $\mathcal{P}(Q,M)$ is obviously given by $V_0(Q,M) = \dv( \shapeclassregular_0(Q,M) )$. On the other hand, the smallest discrete volume in $\mathcal{P}(Q,M)$ is denoted as $V_{d+1}(Q,M) = \dv(\classified{Q}) + (k^d - 1)(M - N^*)$, where $N^* = \depth(\classified{Q})$.  Then the volume of a partition in $\mathcal{P}(Q,M)$ can take any value in the set
\[
    \mathcal{V}(Q,M) := \{ V : V_{d+1}(Q,M) \le V \le V_0(Q,M), V  \equiv V_{d+1}(Q,M) \mod k^d - 1 \}.
\]
For any $V \in \mathcal{V}(Q,M)$ with $V_{c+1}(Q,M) \le V \le V_c(Q,M)$, we let $H(Q,V,M) \in \mathcal{P}(Q,M)$ be a partition such that $\dv( H(Q,V,M) ) = V$ and
\[
    \shapeclassregular_c(Q,M) \preceq H(Q,V,M) \preceq \shapeclassregular_{c+1}(Q,M).
\]
We note that $H(Q,V,M)$ is obtained from $\shapeclassregular_{c+1}(Q,M)$ by repeatedly subdividing cells of class $c$. 

\begin{lemma} \label{lem:H(Q,V,M)}
For any $V \in \mathcal{V}(Q,M)$ and any classified partition $R$ of shape $Q$ and discrete volume $\dv(R) = V$, we have $\ds(R) \le \ds(H(Q,V,M))$ and hence $\dr(R) \le \dr( H(Q,V,M) )$.
\end{lemma}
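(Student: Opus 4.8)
The plan is to express $\ds(R)$ for an arbitrary classified partition $R$ of shape $Q$ and depth $\le M$ (the relevant partitions, as in $\mathcal{P}(Q,M)$) as a fixed linear functional of the multiset of classes of the cells subdivided when building $R$ from $\classified{Q}$, and then to maximise that functional. By Lemma~\ref{lem:P*}, such an $R$ is exactly a refinement of $\classified{Q}$ of depth $\le M$; I would fix any sequence of single-cell subdivisions $\classified{Q} = R_0, R_1, \dots, R_t = R$, with $R_i$ obtained from $R_{i-1}$ by subdividing a cell of depth $< M$ and class $c_i$. By Lemma~\ref{lem:sv_subdivision} each step adds $k^d-1$ to the volume and $c_i(k^{d-1}-1)$ to the surface and preserves classifiedness, so $\dv(R) = \dv(\classified{Q}) + (k^d-1)t$ and $\ds(R) = \ds(\classified{Q}) + (k^{d-1}-1)\Sigma$ where $\Sigma := \sum_{i=1}^t c_i$. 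Since $\dv(R)=V$ fixes $t$, it suffices to prove $\Sigma \le \Sigma(H(Q,V,M))$; the claim about $\dr$ then follows on dividing by $V$. Writing $c_i = \sum_{c'=1}^d \mathbf{1}[c_i\ge c']$ gives $\Sigma = \sum_{c'=1}^d N_{\ge c'}$, with $N_{\ge c'}$ the number of steps subdividing a cell of class at least $c'$.

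The core of the argument is the bound $N_{\ge c'} \le t_{\ge c'} := (V_{c'}(Q,M)-\dv(\classified{Q}))/(k^d-1)$ for each $1\le c'\le d$. I would obtain it by deleting from the sequence every step whose cell has class $< c'$: since a child cell never has larger class than its parent (Lemma~\ref{lem:sv_subdivision}), every cell subdivided in a surviving step was either already in $\classified{Q}$ or produced by a surviving step, so the surviving steps form a valid subdivision sequence from $\classified{Q}$ to a classified partition $R'$ of shape $Q$ and depth $\le M$, obtained by subdividing cells of class $\ge c'$ only, with $\dv(R') = \dv(\classified{Q})+(k^d-1)N_{\ge c'}$. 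Then I would show $\shapeclassregular_{c'}(Q,M) \refines R'$ by induction along the building sequence of $R'$ (base case $\shapeclassregular_{c'}(Q,M)\refines\classified{Q}$): in the inductive step the subdivided cell cannot already be a cell of $\shapeclassregular_{c'}(Q,M)$, because its class there is $\ge c'$ (class is monotone under refinement), which would force its depth to be $M$, contradicting that it is subdivided within depth $\le M$. Hence $\dv(R')\le \dv(\shapeclassregular_{c'}(Q,M)) = V_{c'}(Q,M)$, i.e.\ $N_{\ge c'}\le t_{\ge c'}$.

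With this in hand the optimisation is immediate. For $V$ in the range $V_{c+1}(Q,M)\le V\le V_c(Q,M)$ one has $t\le t_{\ge c'}$ for $c'\le c$ and $t\ge t_{\ge c'}$ for $c'\ge c+1$, so combining $N_{\ge c'}\le t_{\ge c'}$ with $N_{\ge c'}\le N_{\ge 1}\le t$ yields $\Sigma=\sum_{c'=1}^d N_{\ge c'}\le ct+\sum_{c'=c+1}^d t_{\ge c'}$. Finally $H(Q,V,M)$ attains this value: it is built from $\classified{Q}$ by refining all class-$d$ cells to depth $M$, then all class-$(d-1)$ cells, and so on down to class $c+1$ (thereby reaching $\shapeclassregular_{c+1}(Q,M)$), and then subdividing $t-t_{\ge c+1}$ further cells of class $c$; a short computation (Abel summation) of the resulting $\Sigma$ gives exactly $ct+\sum_{c'=c+1}^d t_{\ge c'}$. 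Therefore $\ds(R)\le\ds(H(Q,V,M))$ and $\dr(R)\le\dr(H(Q,V,M))$.

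The step I expect to be the main obstacle is the claim that $\shapeclassregular_{c'}(Q,M)$ is the finest partition obtainable from $\classified{Q}$ by class-$\ge c'$ subdivisions only. This rests on the fact that in a pre-classified partition the class of a cell depends only on how the cell meets $\partial\content{Q}$ — hence is stable when other cells are refined, up to the one-sided monotonicity used above — together with a direct verification that the depth-$<M$ (``fill-in'') cells of $\shapeclassregular_{c'}(Q,M)$ all have class $<c'$: a coarser cell of class $\ge c'$ would contain the depth-$M$ corner cell sitting at its $\partial\content{Q}$-face, which belongs to $\shapeclasscells_{c'}(Q,M)$ and therefore forces that cell to be subdivided. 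The boundary cases $c=d$ (where $\shapeclassregular_{d+1}(Q,M)$ is read as the minimal depth-$M$ refinement of $\classified{Q}$, of volume $V_{d+1}(Q,M)$) and $V=V_c(Q,M)$ (common to two regimes) are handled in the same way.
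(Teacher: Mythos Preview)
Your proposal is correct and in fact more carefully argued than the paper's own proof. The paper proceeds by contradiction: it notes that $R$ and $H=H(Q,V,M)$ are both obtained from $\classified{Q}$ by the same number $s$ of subdivisions (since $\dv(R)=\dv(H)$), lets $h_s$ be the minimum class subdivided in building $H$, and argues that if $\ds(R)>\ds(H)$ then some box $b$ of class $r_j>h_s$ subdivided in $R$ was \emph{not} subdivided in $H$; but such a $b$ contains an element of $\shapeclasscells_{r_j}(Q,M)$, so by construction of $H$ it must have been subdivided, a contradiction. Your route is different: you decompose $\Sigma=\sum_i c_i$ as $\sum_{c'=1}^d N_{\ge c'}$ and bound each layer separately via $N_{\ge c'}\le t_{\ge c'}=(V_{c'}(Q,M)-\dv(\classified{Q}))/(k^d-1)$, by pruning class-$<c'$ steps and showing the resulting partition is refined by $\shapeclassregular_{c'}(Q,M)$. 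Both arguments hinge on the same fact---that $\shapeclassregular_{c'}(Q,M)$ is the finest partition reachable from $\classified{Q}$ by class-$\ge c'$ subdivisions within depth $M$---but you make this explicit and prove it, whereas the paper leaves it implicit in the phrase ``by construction $b$ must be subdivided in $H$''. Your approach is longer but cleaner: the layered bound $N_{\ge c'}\le\min(t,t_{\ge c'})$ makes the optimality of the greedy schedule transparent, and your pruning argument (using that a child's class never exceeds its parent's) is a nice way to isolate the key monotonicity. The paper's contradiction is shorter but, as written, the step ``there must be a box $b$ of class $r_j>h_s$ which is not subdivided in $H$'' is not fully justified from $\sum r_i>\sum h_i$ alone.
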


\begin{proof}
Let $h_1, \dots, h_s$ be the respective classes of the boxes that have to be subdivided to obtain $H = H(Q,V,M)$ from $\classified{Q}$. Since $\dv(R) = \dv(H)$, $R$ is also obtained from $\classified{Q}$ after $s$ subdivisions, say of boxes of classes $r_1, \dots, r_s$. By Lemma \ref{lem:sv_subdivision}, we have 
\begin{align*}
	\dv(H) &= \dv(\classified{Q}) + (k^d - 1) \sum_{i=1}^s h_i,\\
	\dv(R) &= \dv(\classified{Q}) + (k^d - 1) \sum_{i=1}^s r_i.
\end{align*} 
By construction, we have $h_s = \min\{ h_i : 1 \le i \le s \}$. Thus, if $\ds(R) > \ds(H)$, there must be a box $b$ of class $r_j > h_s$ which is not subdivided in $H$. This implies in particular that $h_s < d$, and that $V > V_{h_s + 1}(Q,M) \ge V_{r_j}(Q,M)$. But then $b$ contains an element of $\shapeclasscells_{r_j}(Q,M)$, and by construction $b$ must be subdivided in $H$, which is the desired contradiction.
\end{proof}

The second step of the proof is to determine the asymptotic surface-to-volume ratio of $H(Q,V,M)$ in general.

\begin{lemma} \label{lemma:RQV}
Let $R_Q(V) = \lim_{M \to \infty} \dr( H(Q,V,M)  )$. Then for $0 \le c \le d$,
\[
	R_Q(V) = \begin{cases}
    c \rho  & \text{if } V_{c+1}(Q,M) \ll V(M) \le V_c(Q,M),\\
	 \left( c - 1 + \frac{1}{\alpha} \right) \rho &\text{if } V(M) = \alpha V_c(Q,M), \text{ where } 1 < \alpha < \infty.
	\end{cases}
\]
\end{lemma}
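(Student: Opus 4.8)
The plan is to read off $\dr(H(Q,V,M))$ from the two ingredients already in hand: the exact bookkeeping of a single subdivision in Lemma \ref{lem:sv_subdivision}, and the asymptotics of $V_c(Q,M)$ and $S_c(Q,M)$ in Theorem \ref{th:VcSc(Q,M)}. Fix $V = V(M)$, and for each $M$ let $c$ be the index with $V_{c+1}(Q,M) \le V \le V_c(Q,M)$; by construction $H(Q,V,M)$ is obtained from $\shapeclassregular_{c+1}(Q,M)$ by exactly $t := (V - V_{c+1}(Q,M))/(k^d-1)$ subdivisions of cells of class $c$. Since each such subdivision adds $k^d-1$ to the volume and $c(k^{d-1}-1)$ to the surface, Lemma \ref{lem:sv_subdivision} gives, with $\rho = (k^{d-1}-1)/(k^d-1)$,
\[
    \ds(H(Q,V,M)) = S_{c+1}(Q,M) + c\rho\left(V - V_{c+1}(Q,M)\right),
\]
so that
\[
    \dr(H(Q,V,M)) = c\rho + \frac{S_{c+1}(Q,M) - c\rho\,V_{c+1}(Q,M)}{V}.
\]
This is well defined independently of which class-$c$ cells are chosen, since the increment per subdivision does not depend on the cell.

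Next I would substitute the asymptotics from Theorem \ref{th:VcSc(Q,M)}. For $c+1 \le d$ the theorem yields $S_{c+1}(Q,M)/V_{c+1}(Q,M) \to (c+1)\rho$, hence $S_{c+1}(Q,M) - c\rho\,V_{c+1}(Q,M) = \rho\,V_{c+1}(Q,M) + o(V_{c+1}(Q,M))$; for $c+1 = d+1$ one simply uses $S_{d+1}(Q,M) \le 2d\,V_{d+1}(Q,M) = O(M)$. The limit of $\dr(H(Q,V,M))$ is thus controlled entirely by the size of $V$ relative to $V_{c+1}(Q,M)$. If $V(M) \gg V_{c+1}(Q,M)$ — which is exactly the regime $V_{c+1}(Q,M) \ll V(M) \le V_c(Q,M)$ — the fraction tends to $0$ and $R_Q(V) = c\rho$. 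If instead $V(M) \sim \alpha V_c(Q,M)$ with $1 < \alpha < \infty$, then because $V_{c-1}(Q,M)/V_c(Q,M) \to \infty$ (as follows from the explicit leading terms in Theorem \ref{th:VcSc(Q,M)}), for large $M$ we are in the bracket $V_c(Q,M) \le V \le V_{c-1}(Q,M)$; applying the display above with index $c-1$ in place of $c$ and using $S_c(Q,M)/V_c(Q,M) \to c\rho$ gives
\[
    \dr(H(Q,V,M)) \to (c-1)\rho + \frac{\rho\,V_c(Q,M)}{\alpha\,V_c(Q,M)} = \left(c - 1 + \tfrac{1}{\alpha}\right)\rho,
\]
which is the second case.

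The computation is mechanical; the only subtlety is the index bookkeeping, namely recognising that the regime $V \sim \alpha V_c$ is governed by the shape-class-regular partitions at level $c-1$, not $c$, and checking — via the exponential separation of the $V_j(Q,M)$ supplied by Theorem \ref{th:VcSc(Q,M)} — that this assignment is eventually consistent ($\alpha V_c(Q,M) \le V_{c-1}(Q,M)$ for all large $M$). The boundary value $\alpha = 1$ is covered by both formulas, and the degenerate corners need no separate treatment: $c=0$ with $\alpha>1$ is vacuous since $V_0(Q,M)$ is the largest attainable volume, and $c=d$ in the first case is handled by the $S_{d+1}(Q,M) = O(M)$ remark above.
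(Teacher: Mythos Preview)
Your proof is correct and follows essentially the same approach as the paper: both start from $\shapeclassregular_{c+1}(Q,M)$ (or $\shapeclassregular_c(Q,M)$ in the second regime), count the class-$c$ (resp.\ class-$(c-1)$) subdivisions via Lemma \ref{lem:sv_subdivision}, and then plug in the asymptotics of Theorem \ref{th:VcSc(Q,M)}. Your version is in fact slightly tidier, since you derive the single display $\dr(H)=c\rho+\bigl(S_{c+1}-c\rho V_{c+1}\bigr)/V$ once and then specialise, and you are explicit about the index shift and the edge cases $c=d$ and $c=0$; the paper handles the two regimes by separate calculations (and has a harmless typo in the second one, writing $c$ where $c-1$ is meant).
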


\begin{proof}
We have already proved the result for $V(M) = V_c(Q,M)$. We now prove for $V_{c+1}(Q,M) \ll V(M) \le V_c(Q,M)$. Let $H = H(Q,V,M)$, then $H$ is obtained from $\shapeclassregular_{c+1}(Q,M)$ by subdividing cells of class $c$. Denoting $V'(M) = V(M) - V_{c+1}(Q,M)$, there are 
\[
    s = \frac{1 }{ k^d - 1 } V'(M) 
\]
such subdivisions. We obtain 
\begin{align*}
    \dv(H) &= V'(M) + o(V'(M))\\
    \ds(H) &= S_{c+1}(Q,M) + s c (k^{d-1} - 1) =  c \rho V'(M) + o(V(M)),\\
    \dr(H) &= c \rho + o(1).
\end{align*}
We now prove the result for $V(M) = \alpha V_c(Q,M)$ for $\alpha > 1$. Let $H = H(Q,V,M)$, then $H$ is obtained from $\shapeclassregular_c(Q,M)$ by subdividing cells of class $c-1$; there are 
\[
    s = \frac{ V(M) - V_c(Q,M) }{ k^d - 1 } = \frac{ \alpha - 1 }{ k^d - 1 } V_c(Q,M)
\]
such subdivisions. We obtain 
\begin{align*}
    \dv(H) &= \alpha V_c(Q,M)\\
    \ds(H) &= S_c(Q,M) + s c (k^{d-1} - 1) = c \rho V_c(Q,M) + (\alpha - 1) \rho c V_c(Q,M) + o(V_c(Q,M)),\\
    \dr(H) &= \left( c - 1 + \frac{1}{\alpha} \right) \rho + o(1).
\end{align*}

\end{proof}

\begin{proof}[Proof of Theorem \ref{th:surface_to_volume_asymptotic}]
We first prove that the quantity in the right hand side is an upper bound on the surface-to-volume ratio. Consider a sequence $P_M$ of partitions of discrete volume $V(M)$ shape $Q_M$ and depth $M$. Then $\dr(P_M) \le \dr(H_M)$, where $H_M = H( Q_M, V(M), M )$. If $V_{c+1}(\Phi, M) \ll V(M)$, then $V_{c+1}(Q,M) \ll V(M)$, hence $\dr(P_M) \le \dr(H_M) \le c \rho + o(1)$. If $V(M) = \alpha V_c(\Phi,M)$ with $\alpha \ge 1$, then $V(M) \ge \alpha V_c( Q,M )$ and hence $\dr(P_M) \le \dr(H_M) \le \left( c - 1 + \frac{1}{\alpha} \right) \rho + o(1)$.

We now prove that the upper bound can be reached. First, the case $c = 0$ is trivial, as the upper bound is zero. Second, let $1 \le c \le d-1$. For any $\epsilon > 0$, let $Q$ be a shape such that $V_c(Q,M) > (1 - \epsilon) V_c(\Phi,M) + o(1)$. For $V_{c+1}(\Phi,M) \ll  V(M) \le V_c(\Phi,M)$, the sequence $H_M = H(Q,V(M),M)$ satisfies (with $V'(M) = V(M) - V_{c+1}(\Phi,M)$)
\begin{align*}
    \dv( H_M ) &= V'(M) + V_{c+1}(\Phi,M) = (1 + o(1))V'(M)\\
    \ds( H_M ) &= S_{c+1}(Q,M) + c \rho V'(M) = (c \rho + o(1)) V'(M),\\
    \dr( H_M ) &= c \rho + o(1).
\end{align*}
For $V(M) = \alpha V_c(\Phi,M)$ and $\alpha > 1$, the sequence $H_M = H(Q,V(M),M)$ then satisfies 
\begin{align*}
    \dv( H_M ) &= \alpha V_c(\Phi,M) < \frac{ \alpha }{1 - \epsilon} V_c(Q,M),\\
    \ds( H_M ) &\ge \ds( H(Q, \alpha V_c(Q,M), M  ) ) = \left( c - 1 + \frac{1}{\alpha} \right) \rho \alpha V_c(Q,M),\\
    \dr( H_M ) &> (1-\epsilon) \left( c - 1 + \frac{1}{\alpha} \right) \rho + o(1).
\end{align*}

Third, for $c = d$, the proof is similar, but uses a shape $Q$ that maximises $\gamma(Q,M)$ instead. As such, we omit it.
\end{proof}


We would like to make further remarks on the maximum surface-to-volume ratio that complement the formula in Theorem \ref{th:surface_to_volume_asymptotic}.

Firstly, for $V = O(M)$, the maximum surface-to-volume ratio can be higher than $d \cdot \rho$. 
Indeed, the classified partition of depth $M$ ($M = 5$ in Figure \ref{fig:minimal_grid}, right) 
has discrete volume $4M-4$ and discrete surface $5M-2$. Therefore, the surface-to-volume ratio tends to $5/4$, while $d \cdot \rho = 2/3$.

%
%
%
%
%
%

Secondly, the case where $V(M) = \Theta(k^{Md})$ is linked to the problem of maximising the continuous surface-to-volume ratio of shapes. Let us illustrate this link by first considering the shape-regular partition $\shapeclassregular_0(Q,M)$, which contains all cells of depth $M$ in $\content{Q}$. For the sake of simplicity, we shall omit $Q$ and $M$ from our notation in this paragraph. The discrete volume of $\shapeclassregular_0$ is given by $V_0 = \cv( Q ) k^{Md}$. However, the partition $\shapeclassregular_1$ already contains all cells of depth $M$ that contribute to the discrete surface; thus any partition $H$ refining $\shapeclassregular_1$ has the same discrete surface as $\shapeclassregular_1$. In particular, we obtain 
\[
    S_0 = S_1 = \cs(Q) k^{M(d-1)} + o( k^{M(d-1)} ).
\]
Let us denote the continuous surface-to-volume ratio of $Q$ as $r = \frac{ \cs(Q) }{ \cv(Q) }$. We obtain
\[
    \dr(\shapeclassregular_0) = \frac{ \cs(Q) +o(1) }{ \cv(Q) } k^{-M} \sim r k^{-M}.
\]
More generally, for $V = \frac{1}{\beta} V_0$ with $\beta \ge 1$, any $H = H(Q,V,M)$ refines $\shapeclassregular_1$. We thus obtain
\[
    \dr(H) = \beta \dr(\shapeclassregular_0) = \beta r k^{-M}.
\]
This illustrates that the maximum surface-to-volume ratio actually decreases exponentially with the depth $M$, and that the influence of the space-filling curve lies in the continuous surface-to-volume ratio of its partitions.
\section{Conclusion, discussion and outlook}
\label{sec:conclusion}

Our manuscript offers a qualitative and quantitative description of the
worst-case surface-to-volume ratio of partitions of adaptive Cartesian meshes as
they are induced by space-filling curves.

 If we plot this ratio, the plot
starts from a plateau, descends in
smoothed-out steps, before it transitions into a fade-out regime that we know from proofs on regular Cartesian
grids.
The thresholds, i.e.~locations of the staircase steps, are formalised via curve-
and dimension-dependent formulae.
It is remarkable that, once
the dimension of the domain and the way the cells are refined are fixed, the
actual space-filling curve used has a relatively little impact on the maximum surface-to-volume ratio, i.e.~the shape of the curve.

%
%
The fade-out of the curves for (very) large $V(M)$ or mesh depths
$M$, respectively summarises well-known insight on the impact of the H\"older
continuity of SFCs onto the discrete SFC partitions for regular grids.
Domain decomposition for regular grids is not a particularly hard computational
challenge.
The ``interesting'' insight thus is formalised by the curve's staircase pattern.

%
%
Though SFCs are popular tools in scientific computing, they have drawbacks.
First, space-filling curves are no 1:1 fit to load
re-balancing of processes that are dominated by diffusion or waves:
The elegance of SFCs is that they reduce the domain decomposition problem into a
one-dimensional challenge.
This advantage in turn means that we have no control over the shape in the
$d$-dimensional space:
If a wave makes an adaptivity pattern travel through the domain,
e.g., and rebalancing becomes necessary due to this change in the adaptivity, an
SFC-based rebalancing triggers global rebalancing among all SFC partitions.
SFC partitions can only grow or shrink along the curve, but waves do not travel along a space-filling curve.
Consequently, any rebalancing is not localised but all curve partitions in the domain have to change 
even though adding or removing few cells along a wave propagation direction to
a few partitions would already yield a proper repartitioning.
Second, SFC-based splitting runs risk with a high probability that
we cut through the mesh in very adaptive mesh regions.
If particles cluster in a certain subarea of the domain, e.g., and are therefore
resolved with a fine mesh, the SFC cuts very likely run through mesh faces of
the finest resolution.
The aggressive worst-case refinement along domain boundaries from our proof are
thus far from academic.
Finally, many multiscale and tree codes have to work with complete local trees
per compute unit, i.e.~they take a local partition and then enrich this
partition---subject to cell labels for local and augmented cells---until they
get a full (local) spacetree
\cite{Bader:2013:SFCs,Bungartz:2006:ParallelAdaptivePDESolver,Clevenger:2020:MultigridFEM,Suh:2020:Balancing,Sundar:08:BalancedOctrees,Weinzierl:2019:Peano}.
Tail-like, local cuts similar to the tower of cells (Figure \ref{fig:minimal_grid}) thus
not only yield a disadvantageous surface-to-volume ratio, they also induce large
overhead of additional cells.

%
%

In an era of numerical simulations where meshes change quickly
and the actual compute speed is co-determined by power considerations, bandwidth congestion, cache thrashing, and so forth,
i.e.~undeterministic to some degree, a natural choice is to apply SFC cuts on a
rather coarse resolution level of the mesh.

 In our own language, we partition the mesh in rather coarse shapes and further
 refinement levels then ``inherit'' this partitioning.
 The partitioning is coarse grain, and we
accept some ill-balancing.
This mitigates the three shortcomings from above. 
%
%
 This frequently used ``fix'' or realisation pattern however 
requires the developer to
make two choices: on which resolution level to apply the SFC cuts, i.e.~what is the
finest shape, and how often do we want to rebalance?
 If meshes change, existing mesh partitions always can ill-balanced.
 With a coarse grained split-up, these ill-balances quickly become more severe
 or are inherent; we do not give a partitioning the mesh granularity to
 fine-balance work.
 More frequent re-balancing thus becomes mandatory.
While both decisions---mesh grainess and rebalancing frequency---are primarily guided by work balancing considerations, it makes sense to add a proper communication penalty to the cost metrics.
Our analysis yields both qualitative and quantitative guidelines for them:
(i) The classification of a cell should feed into the cost metric.
(ii) If an adaptivity pattern introduces new cells with high class---it ``hits''
a vertex, e.g.---it is reasonable to rebalance. If it introduces additional
cells with low class, it is less urgent to rebalance.
(iii) If the volume of a partition underruns the step thresholds, it it
reasonable to merge this partition with further partitions.

%
%

While this list likely is not comprehensive, it shows the mindset how an
analytic model can feed into load balancing decisions.
The exact study of the interplay of rebalancing algorithms is subject of future research.

\section*{Acknowledgements}

Tobias' work is sponsed by EPSRC under the Excalibur Phase I call. The grant number is EP/V00154X/1 (ExaClaw).
A fundamental goal in this project is the effective
parallelisation of a dynamically adaptive mesh refinement
solver for wave questions which uses the Peano space-filling curve.
Tobias also receives funding from the same programme under grant EP/V001523/1
(Massively Parallel Particle Hydrodynamics for Engineering and Astrophysics)
 where we study Lagrangian formalisms which are embedded into (space-)trees to
 efficiently truncate and approximate short- and long-range interactions between
 particles.

\appendix
\section{Continuous surface-to-volume ratio}
\label{sec:continuous-ratio}

Optimality statements for space-filling curves are well-known for regular grids
with a very high level of detail
\cite{Bungartz:2006:ParallelAdaptivePDESolver,Zumbusch:2001:QualityOfSFCs}.
We use the appendix to show that the known optimality bounds are a special case 
directly resulting from our Lemma \ref{lem:grid_of_shape}.
For this we use the $l^\infty$-norm to define the distance between two
points in the hypercube: for any $x, y \in \hypercube$,
\[
    d_\infty(x,y) :=  \max\{ |x_i - y_i| : 1 \le i \le d \} .
\]
This choice of norm is immaterial, as all norms on $\hypercube$ are equivalent. For any finite set of boxes $X$, the diameter of $X$ is the maximum distance between any two points in $X$:
\[
    \diameter(X) := \max \{ d_\infty(x,y) : x,y \in \content{X} \}.
\]

For a single cell $g$ of depth $M$, we have $\diameter(g) = k^{-M}$ and hence
$\cv( \boundary^c g ) = \binom{d}{c} 2^c k^{-M(d-c)} = \Theta( \diameter(g)^{d-c} )$ for all $0 \le c \le d-1$. Our main theorem is that 
\[
    \cv( \boundary^c P ) = \Theta( \diameter(P)^{d-c} )
\]
holds for any partition $P$.

\begin{theorem} \label{theorem:continuous}
Let $\Phi$ be a SFC and let $P \in \Partitions(\Phi)$. Then for every $0 \le c \le d-1$,
\[
    \frac{1}{U} \diameter(P)^{d-c} \le   \cv( \boundary^c P ) \le U \diameter(P)^{d-c},
\]
where $U = 2 k^d (1 - k^{c-d})^{-1}$.
\end{theorem}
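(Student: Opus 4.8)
The plan is to pass from $P$ to its shape $Q = \shape(P)$ via Lemma~\ref{lemma:cvQ}, which gives $\cv(\boundary^c P) = \cv(\boundary^c Q)$, and then to exploit the rigidity of shapes of SFC partitions provided by Lemma~\ref{lem:grid_of_shape}: since $Q \subseteq \grid(Q)$, the shape $Q$ has at most $2k^d$ boxes of each depth.

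\textbf{Upper bound.} Every $q \in Q$ lies inside $\content{Q}$, so $k^{-\depth(q)} = \diameter(q) \le \diameter(P)$, i.e. $\depth(q) \ge \log_k(1/\diameter(P))$. As $\boundary^c_Q q \subseteq \subcube^c q$ contributes at most $\binom{d}{c}2^c k^{-\depth(q)(d-c)}$ to $\cv$, subadditivity of $\cv$ together with the per-depth bound of Lemma~\ref{lem:grid_of_shape} give $\cv(\boundary^c Q) \le \sum_{l \ge \log_k(1/\diameter(P))} 2k^d \binom{d}{c}2^c k^{-l(d-c)}$, a geometric series whose sum is of the form $U \diameter(P)^{d-c}$ for a constant depending only on $k,d,c$ (one uses $k^{-l(d-c)} = (k^{-l})^{d-c}$ at the bottom index, where $k^{-l} \le \diameter(P)$, and $\sum_{l\ge 0} k^{-l(d-c)} = (1-k^{c-d})^{-1}$).

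\textbf{Lower bound.} Let $\lambda = k^{-m'}$, $m' = \min_{q \in Q} \depth(q)$, be the size of a largest box of $Q$. Two facts suffice. (i) $\diameter(P) \le C \lambda$ for a constant $C = C(k,d)$: $\content{P} = \content{Q}$ is path-connected because consecutive cells of a DSFC are face-adjacent, so its diameter is at most the total diameter of a chain of boxes of $Q$ joining two extremal points; since $Q$ has at most $2k^d$ boxes of each depth $\ge m'$ and none of smaller depth, this is at most $\sum_{l \ge m'} 2k^d k^{-l} = \tfrac{2k^{d+1}}{k-1}\lambda$. (ii) $\cv(\boundary^c Q) \ge \lambda^{d-c}$: let $W$ be the union of the largest boxes of $Q$. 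A box containing a $(d-c)$-cube of side $\lambda$ must have side $\ge \lambda$, hence exactly $\lambda$, hence be one of the largest boxes; therefore any $(d-c)$-face of a largest box not contained in a second largest box is not contained in any other cell of $Q$ either, so $\boundary^c \shape(W) \subseteq \boundary^c Q$ and $\cv(\boundary^c Q) \ge \cv(\boundary^c \shape(W))$. Moreover $\shape(W)$ is exactly the set of largest boxes (a union of them can form a strictly larger box only by completing a full parent, which would contradict maximality of its children in $\content{Q}$), and this nonempty finite union of $\lambda$-cubes has a nonempty $c$-boundary: if $v^\ast$ is the lexicographically smallest vertex of $W$ and $q$ the largest box having $v^\ast$ as a vertex, then the $(d-c)$-face of $q$ obtained by fixing coordinates $1,\dots,c$ at $v^\ast_1,\dots,v^\ast_c$ is, by minimality of $v^\ast$, not contained in any other largest box, and contributes $\cv = \lambda^{d-c}$. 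Combining, $\cv(\boundary^c P) = \cv(\boundary^c Q) \ge \lambda^{d-c} \ge (\diameter(P)/C)^{d-c}$.

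\textbf{Main obstacle.} The delicate point is fact (ii), specifically the transfer of ``exposure'': a $(d-c)$-face uncovered within the family of largest boxes is genuinely uncovered within $Q$. This rests on the elementary but crucial size comparison (a face of side $\lambda$ can only sit inside a box of side $\ge \lambda$) together with the identification of $\shape(W)$ with the set of largest boxes; without the uniform per-depth bound of Lemma~\ref{lem:grid_of_shape} the largest box need not have size comparable to $\diameter(P)$, and the argument would fail (e.g. thin slab-like regions, which are excluded precisely because they would need many boxes at a fine level). Once (i) and (ii) are in place the two inequalities follow immediately, and pinning down the exact value of $U$ is routine bookkeeping of the constants produced by the geometric series and by step (i).
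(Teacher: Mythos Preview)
Your argument is correct, and your upper bound follows the paper's approach essentially verbatim: both pass to the shape $Q$, invoke Lemma~\ref{lem:grid_of_shape} to get at most $2k^d$ boxes per depth, and sum a geometric series in $k^{-l(d-c)}$ starting at the smallest depth $l_{\min}$, which is tied to $\diameter(P)$ via the largest box.

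The lower bound is where you diverge. The paper does not build an explicit exposed face. Instead it bootstraps from the upper bound already established for $c=d-1$: it observes that $\diameter(Q) \le \cv(\boundary^{d-1} Q) \le U\, k^{-l_{\min}}$, and then simply notes that $\cv(\boundary^c Q) \ge k^{-l_{\min}(d-c)}$ from the presence of a single box at depth $l_{\min}$. Your route---bounding $\diameter(P)$ by chaining face-adjacent shape boxes along the SFC, and exhibiting a lexicographically extremal $(d-c)$-face of a largest box that no other cell of $Q$ can cover---is longer but fully self-contained and makes the geometry transparent; the paper's is slicker but leans on the somewhat opaque inequality $\diameter(Q) \le \cv(\boundary^{d-1} Q)$. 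Both approaches produce a constant depending only on $k,d,c$, and neither pins down the stated $U$ without some additional bookkeeping (your upper bound carries an extra $\binom{d}{c}2^c$, your lower bound a $C^{-(d-c)}$), so your closing remark is in the same spirit as the paper's own handling of the constant.
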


\begin{proof}
Let $Q = \shape(P)$. Since $\diameter(P) = \diameter(Q)$ and $\cv(\boundary^c P) = \cv(\boundary^c Q)$, we only need to prove the result for $Q$. For all $l$, let $A_l$ denote the number of nodes of depth $l$ in $Q$. By Lemma \ref{lem:grid_of_shape}, $A_l \le 2 k^d$ for all $l$. Let $L = \{l : A_l \ge 1 \}$ be the set of possible depths of nodes in $Q$ and let $l_{\min}$ be the minimum element of $L$.

Firstly, let $q \in Q$ have depth $l_{\min}$, then for any two vertices $x$, $y$ of $q$ we have $\diameter(Q) \ge d_\infty(x,y) = k^{-l_{\min}}$. We obtain
\begin{align*}
    \cv( \boundary^c Q) &= \sum_{l \in L} A_l k^{-l(d-c)}\\
    &<  2 k^d k^{-l_{\min} (d-c)} \sum_{i = 0}^\infty k^{-i(d-c)}\\
    &= U k^{-l_{\min} (d-c)}\\
    &\le U \diameter(Q)^{d-c}.
\end{align*}

Secondly, let $x, y \in \content{Q}$ maximise $d_\infty(x,y)$, then $\diameter(Q) = d_\infty(x,y) \le \cv( \boundary^{d-1} Q ) \le U k^{-l_{\min}}$. We obtain
\begin{align*}
    \cv( \boundary^c Q) &= \sum_{l \in L} A_l k^{-l(d-c)}\\
    &\ge k^{-l_{\min} (d-c)}\\
    &\ge \frac{1}{U} \diameter(Q)^{d-c}.
\end{align*}

\end{proof}

In particular, we obtain the upper bound on the continuous surface against the continuous volume in \cite{Zumbusch:2001:QualityOfSFCs}.

\begin{corollary}
Let $P$ be a partition, then 
\[
    \cs(P) \le C \cv(P)^{1 -\frac{1}{d}}
\]
for some constant $C > 0$.
\end{corollary}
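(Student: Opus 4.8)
The plan is to apply Theorem~\ref{theorem:continuous} twice, once in each direction, so as to eliminate the diameter $\diameter(P)$ from the estimate. Write $Q = \shape(P)$, so that $\cv(\boundary^c P) = \cv(\boundary^c Q)$ and $\diameter(P) = \diameter(Q)$ for all $c$, and recall that $\cs(P) = \cv(\boundary P) = \cv(\boundary^1 P)$. I will assume $d \ge 2$; the case $d = 1$ reduces to the trivial assertion that the continuous surface is bounded by a constant, which is not covered by Theorem~\ref{theorem:continuous} (whose range of $c$ collapses to $c=0$) but follows directly from connectivity of a one-dimensional partition.

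First, the upper bound in Theorem~\ref{theorem:continuous} applied with $c = 1$ gives $\cs(P) \le U_1\, \diameter(P)^{d-1}$, with $U_1 = 2k^d(1 - k^{1-d})^{-1}$. Second, the lower bound in Theorem~\ref{theorem:continuous} applied with $c = 0$ gives $\cv(P) = \cv(\boundary^0 P) \ge U_0^{-1}\,\diameter(P)^{d}$, with $U_0 = 2k^d(1 - k^{-d})^{-1}$, hence $\diameter(P) \le \bigl(U_0\,\cv(P)\bigr)^{1/d}$. Raising this last inequality to the power $d-1$ and substituting it into the first yields
\[
  \cs(P) \;\le\; U_1\bigl(U_0\,\cv(P)\bigr)^{(d-1)/d} \;=\; U_1\,U_0^{1-1/d}\,\cv(P)^{1-1/d},
\]
so the statement holds with $C := U_1\,U_0^{1-1/d}$, a constant depending only on $k$ and $d$. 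This is exactly the quasi-optimal form~\eqref{equation:introduction:quasi-optimal} announced in the introduction.

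I do not expect a real obstacle here once Theorem~\ref{theorem:continuous} is in hand; the single point worth emphasising is that the argument is \emph{genuinely two-sided}. It is the lower bound $\cv(P) \gtrsim \diameter(P)^{d}$ --- which itself rests on the node-count estimate of Lemma~\ref{lem:grid_of_shape} via the $\ell^\infty$-diameter inequality $\diameter(Q) \le \cv(\boundary^{d-1}Q) \le U\,k^{-l_{\min}}$ used in the proof of Theorem~\ref{theorem:continuous} --- that permits trading the purely geometric quantity $\diameter(P)$ for the combinatorial quantity $\cv(P)$, and thereby recovering the classical surface-to-volume bound as a consequence of the shape-size bound.
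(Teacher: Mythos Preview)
Your proof is correct and is exactly the intended derivation: the paper states the corollary without proof as an immediate consequence of Theorem~\ref{theorem:continuous}, and the natural way to extract it is precisely your two-sided application with $c=1$ (upper bound) and $c=0$ (lower bound) to eliminate $\diameter(P)$. Your handling of the degenerate case $d=1$ and the identification $\cv(\boundary^0 P)=\cv(P)$ are also correct.
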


In fact, we can even prove H\"older continuity thanks to Lemma \ref{lem:grid_of_shape}. Recall that a function $\Psi : [0,1] \to \hypercube$ is \Define{H\"older continuous} if there exists $U > 0$ such that for all $x,y \in [0,1]$,
\[
   d_\infty(\Psi(x) - \Psi(y)) \le U |x-y|^{1/d}.
\]


\begin{theorem}
Any continuous SFC is H\"older continuous. 
\end{theorem}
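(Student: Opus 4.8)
The plan is to realise the continuous curve $\Psi$ as the uniform limit of the piecewise-constant ``staircase'' curves attached to the regular grids, and then to bound the displacement $d_\infty(\Psi(x),\Psi(y))$ by the diameter of the partition made of all cells whose parameter lies between $x$ and $y$ --- a partition to which Theorem~\ref{theorem:continuous} applies verbatim. First I would, for each depth $M$, write $\Phi(G_M) = (g_1,\dots,g_n)$ with $n = k^{Md}$ for the regular grid $G_M$ of depth $M$, assign to $g_i$ the parameter interval $[(i-1)n^{-1}, i n^{-1}]$, and let $\Psi_M$ map that interval into $g_i$. Because $\Phi$ preserves the node ordering under refinement and the cells inside any node of $G_M$ are consecutive along $\Phi(G_{M'})$ for $M' > M$, the parameter interval of a node is the union of those of its subcells, so these intervals nest as $M$ grows; hence $\Psi_{M'}(t)$ and $\Psi_M(t)$ always lie in a common cell of $G_M$, $(\Psi_M)$ is uniformly Cauchy, and it converges uniformly to a function $\Psi$ for which $\Psi(x)$ is contained in the cell $g_{i_M(x)}$ of $G_M$ whose parameter interval contains $x$, for every $M$.

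Then, fixing $x \le y$ in $[0,1]$ and an arbitrary $M$, the cells $g_{i_M(x)}, g_{i_M(x)+1}, \dots, g_{i_M(y)}$ are consecutive along $\Phi(G_M)$, so they form an SFC partition $P_M \in \Partitions(\Phi)$ with $\dv(P_M) = i_M(y) - i_M(x) + 1 \le (y-x)k^{Md} + 2$, whence $\cv(P_M) = \dv(P_M)\,k^{-Md} \le (y - x) + 2k^{-Md}$. Since $\Psi(x) \in g_{i_M(x)} \subseteq \content{P_M}$ and $\Psi(y) \in g_{i_M(y)} \subseteq \content{P_M}$, we get $d_\infty(\Psi(x),\Psi(y)) \le \diameter(P_M)$. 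Now I would invoke Theorem~\ref{theorem:continuous} with $c = 0$ --- noting that $\boundary^0 P_M = P_M$, so $\cv(\boundary^0 P_M) = \cv(P_M)$ --- to obtain $\diameter(P_M) \le U^{1/d}\,\cv(P_M)^{1/d}$ with $U = 2k^d(1 - k^{-d})^{-1}$. Combining the two bounds and letting $M \to \infty$ (so $2k^{-Md} \to 0$) yields $d_\infty(\Psi(x),\Psi(y)) \le U^{1/d}|x-y|^{1/d}$, i.e.\ H\"older continuity with constant $U^{1/d}$.

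The substance of the argument is already in hand: Lemma~\ref{lem:grid_of_shape}'s bound of $2k^d$ shape cells per level is exactly what forces $\diameter(P) = \Theta(\cv(P)^{1/d})$ in Theorem~\ref{theorem:continuous}, and that $d$-th-root relation is precisely the one needed for the $1/d$ H\"older exponent. The only genuinely delicate point is the limit bookkeeping --- checking that the staircase curves $\Psi_M$ converge uniformly and that the parameter interval of a node of $G_M$ equals the union of the parameter intervals of its $G_{M'}$-subcells --- but this is a routine consequence of the space-filling property (the cells of a node are consecutive) together with the refinement-compatibility built into the definition of a space-filling curve $\Phi$. I would therefore keep that part brief and spend the bulk of the proof on the clean estimate $\cv(P_M) \le (y-x) + o(1)$ and the application of Theorem~\ref{theorem:continuous}.
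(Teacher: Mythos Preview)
Your proposal is correct and follows essentially the same route as the paper: form the partition of consecutive regular-grid cells between the images of $x$ and $y$, bound its continuous volume by $|x-y| + 2k^{-Md}$, and then use Lemma~\ref{lem:grid_of_shape} (via Theorem~\ref{theorem:continuous}) to control the diameter. The only cosmetic difference is that the paper re-enters the proof of Theorem~\ref{theorem:continuous} and works directly with the minimal shape depth $l_{\min}$ (using the $c=d-1$ bound $\diameter(Q) \le U k^{-l_{\min}}$), whereas you invoke the theorem as a black box with $c=0$; your packaging is arguably cleaner, and your explicit construction of the parametrisation fills in bookkeeping that the paper leaves implicit.
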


\begin{proof}
Let $\Psi$ be a continuous SFC. Let $M$ such that $\Psi(x)$ and $\Psi(y)$ belong to different cells of the regular grid of depth $M$, say $x \in g_i$ and $y \in g_j$ for $i < j$. We then have $|x - y| \ge k^{-Md} (j - i - 1)$.

Let $P = \{g_i, \dots, g_j\}$ be the smallest partition containing $\Psi(x)$ and $\Psi(y)$, and let $Q = \shape(P)$ with $A_l$ nodes of depth $l$, and again $l_{\min}$ being the smallest $l$ such that $A_l > 0$. Since each node in $Q$ of depth $l$ contains $k^{d(M-l)}$ cells in $P$, we obtain
\[
    j - i + 1 = \sum_l A_l k^{d(M-l)},
\]
and hence $|x - y| \ge k^{-dl_{\min}} - 2k^{-Md}$. Thanks to the proof of Theorem \ref{theorem:continuous}, we obtain
\begin{align*}
    d_\infty( \Psi(x), \Psi(y) )^d &\le \cv( \boundary^{d-1} Q )^d\\
    &\le U^d k^{-d l_{\min}}\\
    &\le U^d (|x - y| + 2k^{-Md}).
\end{align*}
Since this is true for all $M$ large enough, we obtain the result.
\end{proof}

\section{Examples of surface-to-volume ratios}

 A lot of grids that we can generate through a recursive construction rule allow
 us to write down the surface-to-volume explicitly and, hence, to validate our
 statements.
 Examples for this are the $K(0,0,M)$ grid which denotes a regular mesh,
 the $K(1,1,M)$ which is a mesh where we refine aggressively towards one face
 (Figure \ref{fig:2sides}) or its natural extension $K(1,2,M)$ where we refine
 towards two faces (Figure \ref{fig:2sides} as well).
 $K(2,2,M)$ finally is the corner-refined grid from Figure
 \ref{fig:introduction:corner}.

The Cantor grid $\Cantor(M)$ finally is another interesting case,
as it does not uniformly refine towards a subcube.
It is obtained by repeatedly subdividing towards the Cantor set of the $\{x_1 =
0\}$ face of the square. More precisely, let $k=3$, $d=2$ and $C$ be the Cantor set, then denote $C' = \{ x \in \hypercube : x_1 = 0, x_2 \in C \}$. The Cantor grid $\Cantor(M)$ is then defined as $\Cantor(0) = \{ \hypercube \}$, and then $\Cantor(M)$ is obtained from $\Cantor(M-1)$ by refining cells that contain an element of $C'$.

The results are displayed in Table \ref{table:ds_dv}. We remark that for the Cantor grid, $\ds, \dv = \Theta( 3^{(\log_3 2) M} )$, which is consistent with the Hausdorff dimension of the Cantor set.

%
%

\begin{figure}[hb]
\centering
\begin{tikzpicture}[scale=0.05]
\begin{scope}[scale=27]
\draw (0,0) grid (3,3);
\end{scope}

\begin{scope}[scale=9]
\draw (0,0) grid (3,3);

\draw (0, 6) grid (3,9);
\end{scope}


\begin{scope}[scale=3]
\draw (0,0) grid (3,3);
\draw (0, 6) grid (3,9);

\draw (0,18) grid (3,21);
\draw (0, 24) grid (3,27);
\end{scope}

\end{tikzpicture}

\caption{
 The Cantor grid for $d=2, k=3$ 
 \label{fig:cantor}
}
\end{figure}
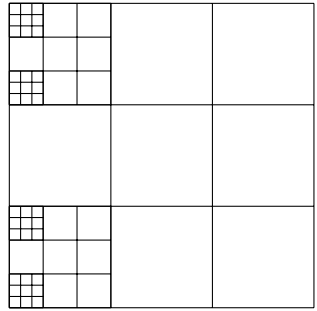

\begin{table}[htb]
\centering
\begin{tabular}{|c|c|c|c|c|c|c|}
    \hline
    Grid & $k$ & $d$ & $\rho$ & $\dv$ & $\ds$ & $\lim_{M \to \infty} \dr$ \\
    \hline
    $\classregular(0,0,M)$  & $2$ & $2$ & $1/3$ & $4^M$ & $2 \cdot 2^M$ & $0 = 0 \cdot \rho$\\
    \hline
    $\classregular(1,1,M)$  & $2$ & $2$ & $1/3$ & $3 \cdot 2^M - 2$ & $2^M + 2M + 4$ & $1/3 = 1 \cdot \rho$\\
    \hline
    $\classregular(1,2,M)$  & $2$ & $2$ & $1/3$ & $6 \cdot 2^M - 3M - 5$ & $2 \cdot 2^M + 2M + 2$ & $1/3 = 1 \cdot \rho$\\
    \hline
    $\classregular(2,2,M)$  & $2$ & $2$ & $1/3$ & $3M + 1$ & $2M + 6$ & $2/3 = 2 \cdot \rho$\\
    \hline
    $\Cantor(M)$                & $3$ & $2$ & $1/4$ & $8 \cdot 2^M - 7$ & $2 \cdot 2^M + 4M + 4$ & $1/4 = 1 \cdot \rho$\\
    \hline
\end{tabular}
\caption{Discrete surface and volume of some grids} \label{table:ds_dv}
\end{table}

\ifthenelse{\boolean{arxiv}}{
 \bibliographystyle{plain}
}{
 \bibliographystyle{siamplain}
}
\bibliography{./paper}

\end{document}